\renewcommand*\@fnsymbol[1]{\the#1}
\theoremstyle{plain}
\newtheorem{theorem}{Theorem}[section]
\newtheorem{lemma}[theorem]{Lemma}
\newtheorem{proposition}[theorem]{Proposition}
\newtheorem{corollary}[theorem]{Corollary}
\theoremstyle{definition}
\newtheorem{definition}[theorem]{Definition}
\theoremstyle{remark}
\newtheorem{remark}[theorem]{Remark}
\newtheorem{example}[theorem]{Example}
\newcommand{\Dom}{\mathop{{\rm dom}}\nolimits}
\newcommand{\Epi}{\mathop{\rm epi}\nolimits}
\newcommand{\Interior}{\mathop{\rm int}\nolimits}
\newcommand{\Core}{\mathop{\rm core}\nolimits}
\newcommand{\e}{\varepsilon}
\newcommand{\VaR}{\mathop {\rm VaR}\nolimits}
\newcommand{\TVaR}{\mathop {\rm TVaR}\nolimits}
\newcommand{\ccI}{\mathcal I}
\newcommand{\cF}{\mathcal F}
\newcommand{\probp}{\mathbb P}
\newcommand{\cA}{\mathscr A}
\newcommand{\cS}{\mathscr S}
\newcommand{\cX}{\mathscr X}
\newcommand{\cY}{\mathscr Y}
\newcommand{\cB}{\mathscr B}
\newcommand{\cU}{\mathscr U}
\newcommand{\R}{\mathbb R}
\newcommand{\N}{\mathbb N}
\newcommand{\E}{\mathbb E}
\begin{document}

\title{
Beyond cash-additive risk measures: when changing the num\'{e}raire fails
%\footnote{Partial support through the SNF project 51NF40-144611 "Capital adequacy,
%valuation, and portfolio selection for insurance companies" is gratefully
%acknowledged.}
}

\author{\sc{Walter Farkas}\thanks{
%Partial support by the National Centre of Competence in Research ``Financial Valuation and Risk Management'' (NCCR FinRisk), project ``Mathematical Methods in Financial Risk Management'', is gratefully acknowledged.
Email: \texttt{walter.farkas@bf.uzh.ch}}
\,, 
\sc{Pablo Koch-Medina}\thanks{
%Part of this research was undertaken when the author was employed by SwissRe. 
Email: \texttt{pablo.koch@bf.uzh.ch}}}
\affil{Department of Banking and Finance, University of Zurich, Switzerland}

\author{\sc{Cosimo Munari}\,\thanks{
%Financial support by the National Centre of Competence in Research ``Financial Valuation and Risk Management'' (NCCR FinRisk), project ``Mathematical Methods in Financial Risk Management'', is gratefully acknowledged. 
Email: \texttt{cosimo.munari@math.ethz.ch}}
}
\affil{Department of Mathematics, ETH Zurich, Switzerland}

\date{September 12, 2013\\
 \ \\
To appear in \sc{Finance and Stochastics}}

\maketitle

\abstract{
We discuss risk measures representing the minimum amount of capital a financial
institution needs to raise and invest in a pre-specified \textit{eligible asset} to
ensure it is adequately capitalized. Most of the literature has focused on
cash-additive risk measures, for which the eligible asset is a risk-free bond, on the
grounds that the general case can be reduced to the cash-additive case by a change of
num\'{e}raire. However, discounting does not work in all financially relevant
situations, typically when the eligible asset is a defaultable bond. In this paper we
fill this gap allowing for general eligible assets. We provide a variety of
finiteness and continuity results for the corresponding risk measures and apply them
to risk measures based on Value-at-Risk and Tail Value-at-Risk on $L^p$ spaces, as
well as to shortfall risk measures on Orlicz spaces. We pay special attention to the
property of cash subadditivity, which has been recently proposed as an alternative to
cash additivity to deal with defaultable bonds. For important examples, we provide
characterizations of cash subadditivity and show that, when the eligible asset is a
defaultable bond, cash subadditivity is the exception rather than the rule. Finally,
we consider the situation where the eligible asset is not liquidly traded and the
pricing rule is no longer linear. We establish when the resulting risk measures are
quasiconvex and show that cash subadditivity is only compatible with continuous
pricing rules.
}

\bigskip

\noindent \textbf{Keywords}: risk measures, acceptance sets, general eligible assets,
defaultable bonds, cash subadditivity, quasiconvexity, Value-at-Risk, Tail
Value-at-Risk, shortfall risk

\medskip

\noindent {\bf JEL classification}: C60, G11, G22

\medskip

\noindent {\bf Mathematics Subject Classification}: 91B30, 46B42, 46B40, 46A55,
06F30

%\newpage

%\tableofcontents

\parindent 0em \noindent

\section{Introduction}
\label{intro}

\subsubsection*{Motivation}

Risk measures in their current axiomatic form were essentially introduced by Artzner,
Delbaen, Eber and Heath in their landmark paper~\cite{ArtznerDelbaenEberHeath1999}.
In that paper, the authors consider a one-period economy with dates~$t=0$ and~$t=T$
where future financial positions, or net worths, are represented by elements of the
space~$\cX$ of random variables on a finite measurable space. A financial institution
with future net worth $X\in\cX$ is considered to be adequately capitalized if~$X$
belongs to a pre-specified set $\cA\subset\cX$ satisfying the axioms of a
\textit{(coherent) acceptance set}. Once a \textit{reference asset} $S=(S_0,S_T)$
with initial price $S_0>0$ and positive terminal payoff $S_T\in\cX$ has been
specified, the corresponding \textit{risk measure} is defined by setting
\begin{equation*}
\label{eq intro}
\rho_{\cA,S}(X):=\inf\left\{m\in\R \,; \ X+\frac{m}{S_0} S_T\in\cA\right\}\,.
\end{equation*}

As articulated in~\cite{ArtznerDelbaenEberHeath1999}, the idea behind risk measures
is that ``sets of acceptable future net worths are the primitive objects to be
considered in order to describe acceptance or rejection of a risk. [...]
\textit{given} some `reference instrument', there is a natural way to define a
measure of risk by describing how close or how far from acceptance a position is''.

\smallskip

In terms of capital adequacy the interpretation is that, whenever finite and
positive, the number $\rho_{\cA,S}(X)$ represents the minimum amount of capital the
institution needs to raise and invest in the reference asset to become adequately
capitalized. If finite and negative, then $-\rho_{\cA,S}(X)$ represents the maximum
amount of capital the institution can return without compromising its capital
adequacy.

\smallskip

The theory of coherent risk measures was extended to general probability spaces in
Delbaen~\cite{Delbaen2000}. In that paper, the focus is on \textit{cash-additive}
risk measures, i.e. risk measures where the reference asset is the risk-free asset
$S=(1,1_\Omega)$ with risk-free rate set to zero. Hence, the risk measure is given
by
\begin{equation*}
\label{eq intro 0}
\rho_{\cA}(X):=\rho_{\cA,S}(X)=\inf\left\{m\in\R \,; \ X+m1_\Omega\in\cA\right\}\,.
\end{equation*}

In the remark after Definition~2.1, Delbaen refers
to~\cite{ArtznerDelbaenEberHeath1999} for an interpretation and notes that ``here we
are working in a model without interest rate, the general case can `easily' be
reduced to this case by `discounting'\,''.

\smallskip

The theory of risk measurement has since then been extended in many directions and,
not surprisingly, based on the above discounting argument, most of the literature has
focused on cash-additive risk measures. Yet, this exclusive focus on cash additivity
is only justified if \textit{every} economically meaningful situation can be reduced
to the cash-additive setting. This, however, is by no means the case.

\smallskip

To see this it is useful to make the discounting argument explicit. Consider an
infinite probability space $(\Omega,\cF,\probp)$ and assume the space of future
financial positions is $\cX=L^p$ for some $0\leq p\leq\infty$. Take an acceptance set
$\cA\subset L^p$ and a reference asset $S=(S_0,S_T)$, where $S_T\in L^p$ is a
nonzero, positive terminal payoff. If~$S_T$ is (essentially) bounded away from zero,
i.e. $S_T\geq\e$ almost surely for some $\e>0$, we can use~$S$ as the new
num\'{e}raire and consider ``discounted'' positions $\widetilde{X}:=X/S_T$. Note
that, in this case, discounted positions still belong to~$\cX$. Setting
$\widetilde{\cA}:=\{X/S_T \,; \ X\in\cA\}$, it is easy to see that
\begin{equation*}
\rho_{\cA,S}(X)=S_0\,\rho_{\widetilde{\cA}}\,(\widetilde{X})\,.
\end{equation*}
Hence, in this case the risk measure~$\rho_{\cA,S}$ can be reduced to a cash-additive
risk measure acting on ``discounted'' positions. However, this reduction fails
whenever the payoff of the reference asset is not bounded away from zero.

\begin{enumerate}
	\item If $\probp(S_T=0)>0$, then~$S$ does not qualify as a num\'{e}raire and
the ``discounting'' procedure is not applicable.

	\item If $\probp(S_T=0)=0$ but~$S_T$ is not bounded away from zero, then we can
use~$S$ as a num\'{e}raire but ``discounted'' positions will typically no longer
belong to~$L^p$, unless~$p=0$. Moreover, any choice of the space of discounted
positions will depend on the particular choice of the num\'{e}raire asset.
\end{enumerate}

Reference assets whose payoffs are not bounded away from zero arise in situations
which are not uncommon in financial applications. For instance, the payoff of
\textit{shares} is typically modeled by random variables which are not bounded away
from zero, such as random variables with lognormal or L\'{e}vy distribution. Perhaps
more importantly, the same is true of \textit{defaultable bonds}. Indeed, assume that $S=(S_0,S_T)$ is a defaultable bond with face value~$1$ and price $S_0<1$. The
payoff~$S_T$ corresponds to a random variable taking values in the interval $[0,1]$
and can be interpreted as the \textit{recovery rate}. Depending on what the recovery
rate is in the various states of the economy,~$S_T$ can be bounded away from zero or
not and even assume the value zero in some future scenario. In particular, the case
of zero recovery might describe the situation when actual recovery is positive but
occurs only \textit{after} time~$t=T$.

\smallskip

Bearing in mind the above mentioned interpretation given
in~\cite{ArtznerDelbaenEberHeath1999}, it is clear that acceptability is the key
concept and that when measuring the distance to acceptability we should not restrict
\textit{a priori} the range of possible reference assets. Therefore, it is important
to go beyond cash-additive risk measures and to investigate risk measures with
respect to a general reference asset whose payoff is not necessarily bounded away
from zero. Moreover, we consider acceptance sets that are not necessarily coherent or
convex. This allows us to cover, for example, risk measures based on Value-at-Risk
acceptability, which is the most widely used acceptability criterion in practice.

\subsubsection*{Setting and main results}

In this paper the space~$\cX$ of financial positions at time~$t=T$ is assumed to be a
general ordered topological vector space with positive cone~$\cX_+$. The set of
acceptable future positions~$\cA$ is taken to be any nontrivial subset of~$\cX$
satisfying $\cA+\cX_+\subset\cA$, and the reference asset $S=(S_0,S_T)$ is described
by its unit price $S_0>0$ and its nonzero terminal payoff $S_T\in\cX_+$. This setup
is general enough to cover the whole range of spaces commonly encountered in the
literature, and to incorporate all financially relevant situations that cannot be
captured within the standard cash-additive framework.

\smallskip

A comment on our choice to work on general topological vector spaces is in order.
Since the typical spaces used in applications -- $L^p$ and Orlicz spaces -- are
Fr\'{e}chet lattices, one might argue that it is sufficient to restrict the attention
to this type of spaces. The motivation for a more abstract setting is twofold. First,
a genuine mathematical interest in understanding the minimal structure required to
support a theory of risk measures. Second, even when working within a  Fr\'{e}chet
lattice setting, one is sometimes led to equip the underlying space with a different
topology -- for instance, to obtain the special dual representations in Biagini and
Frittelli~\cite{BiaginiFrittelli2009} or in Orihuela and Ruiz
Galan~\cite{OrihuelaRuiz2012} or, in particular, to deal with risk measures
on~$L^\infty$ having the Fatou property, which is nothing but lower semicontinuity
with respect to the $\sigma(L^\infty,L^1)$ topology. This immediately takes us
outside the domain of Fr\'{e}chet lattices.

\smallskip

In this general context, we will address the following issues.

\smallskip

\textit{Finiteness}. Given our interpretation of risk measures as required capital,
it is important to study finiteness properties. Indeed, if $\rho_{\cA,S}(X)=\infty$
for a position $X\in\cX$, then~$X$ cannot be made acceptable by raising any amount of
capital and investing it in the reference asset~$S$. This means that~$S$ is not an
effective vehicle to help reach acceptability for that position. On the other hand,
if $\rho_{\cA,S}(X)=-\infty$ then we could extract arbitrary amounts of capital
without compromising the acceptability of~$X$, which is financially implausible. Note
also that in many cases it is possible to establish that finiteness implies
continuity -- as for convex risk measures on Fr\'{e}chet lattices due to the extended
Namioka-Klee theorem in Biagini and Frittelli~\cite{BiaginiFrittelli2009} or, in a
more general setting, in Borwein~\cite{Borwein1987}. Thus, understanding finiteness
is also relevant from this perspective. Note that, since no finiteness result is
provided in~\cite{BiaginiFrittelli2009}, our finiteness results can be considered to
be complementary to that paper.

\smallskip

\textit{Continuity and dual representations}. Much effort in the literature has been
devoted to showing various continuity properties of risk measures. From a practical
perspective continuity is important since if~$\rho_{\cA,S}$ fails to be continuous at
some position~$X$, then a slight change or misstatement of~$X$ might lead to a
dramatically different capital requirement. Moreover, as recently discussed in
Kr\"{a}tschmer, Schied and Z\"{a}hle~\cite{KreatschmerSchiedZaehle2013}, continuity
is closely related to statistical robustness. Finally, continuity is also a useful
property in the context of dual representations, which play an important role in
optimization problems, for instance arising in connection to portfolio selection.

\smallskip

We undertake a systematic investigation of finiteness and continuity in terms of the
interplay between the two fundamental financial primitives: the acceptance set~$\cA$
and the reference asset $S=(S_0,S_T)$. Since we do not restrict their range a priori,
the results in this paper are entirely new in this generality and sometimes provide
new insights even for the standard cash-additive case. The main results are the
following:
\begin{enumerate}
\item In Proposition~\ref{pseudo order unit} we provide a complete picture of
    finiteness and continuity when~$S_T$ belongs to the core or the interior
    of~$\cX_+$, without any assumption on~$\cA$.

\item In Theorem~\ref{finiteness in frechet lattice} we establish a sufficient
    condition for finiteness in case~$\cX$ is a topological Riesz space and~$\cA$
    has nonempty interior, extending to the non-convex case the finiteness result
    obtained in Theorem~2.3 in Svindland~\cite{Svindland2008} and in Theorem~4.6 in
    Cheridito and Li~\cite{CheriditoLi2009} for convex, cash-additive risk measures
    on $L^p$ spaces and Orlicz hearts, respectively.

\item In Theorem~\ref{cont after finit} we prove a characterization of continuity
    for convex risk measures, which can be seen as a generalization to arbitrary
    ordered topological vector spaces of the extended Namioka-Klee theorem in
    Biagini and Frittelli~\cite{BiaginiFrittelli2009} when applied to risk
    measures.

\item In Theorem~\ref{general finiteness convex with str pos elements} and
    Corollary~\ref{finiteness for convex acc} we provide criteria for finiteness
    and continuity in case~$\cA$ is convex.

\item In Theorem~\ref{finite coherent} we provide a full characterization of
    finiteness and continuity when~$\cA$ is coherent.
\end{enumerate}

\smallskip

\textit{Applications}. Throughout Section~\ref{section applications} we provide
several concrete examples. In particular, we focus on risk measures based on the most
prominent acceptability criteria in practice: acceptability based on Value-at-Risk,
on Tail Value-at-Risk, and on shortfall risk arising in the context of utility
maximization problems.

\smallskip

\textit{Cash subadditivity}.  Cash-subadditive risk measures were introduced in El
Karoui and Ravanelli~\cite{ElKarouiRavanelli2009} with the intent to ``model
stochastic and/or ambiguous interest rates or defaultable contingent claims''. Since
our framework provides a natural approach to deal with defaultable reference assets,
we investigate in Section~\ref{cash-sub section} when~$\rho_{\cA,S}$ is cash
subadditive on~$L^p$. When $S=(S_0,S_T)$ is a defaultable bond, we always have cash
subadditivity if~$S$ can only default on the interest payment, i.e. if
$\probp(S_T<S_0)=0$. For important choices of the acceptance set, we show
that~$\rho_{\cA,S}$ fails to be cash subadditive unless the probability
$\probp(S_T<S_0)$ that the invested capital is at risk is sufficiently small or
sometimes even zero. Hence, if~$\rho_{\cA,S}$ is to be cash subadditive, the bond~$S$
can only be allowed to default to a fairly limited extent. These findings provide a
better insight into the property of cash subadditivity and show that the link between
cash subadditivity and defaultability is less straightforward than what was suggested
in~\cite{ElKarouiRavanelli2009}.

\smallskip

\textit{Illiquid markets}. In Section~\ref{illiquidity section} we allow for the
possibility that the market for the reference asset is not liquid. In this case we
are naturally led to a quasiconvex risk measure, for which we provide in
Proposition~\ref{corollary on dual illiquid} a dual representation highlighting the
underlying financial fundamentals. We also show in Proposition~\ref{continuity of
pricing functional} that the associated risk measure can only be cash subadditive if
the pricing rule for the reference asset depends continuously on the traded volume.

\subsubsection*{Embedding in the literature}

Risk measures with respect to a general reference asset have been considered before
to various degrees. In addition to the seminal
paper~\cite{ArtznerDelbaenEberHeath1999} by Artzner, Delbaen, Eber and Heath, we
refer to Jaschke and K\"{u}chler~\cite{JaschkeKuchler2001}, Frittelli and
Scandolo~\cite{FrittelliScandolo2006}, Hamel~\cite{Hamel2006}, and Filipovi\'{c} and
Kupper~\cite{FilipovicKupper2007}. More recent relevant publications are the
papers~\cite{ArtznerDelbaenKoch2009} by Artzner, Delbaen, and Koch-Medina,
and~\cite{KonstantinidesKountzakis2011} by Konstantinides and Kountzakis. Some of
these references contain results on finiteness and continuity, as well as dual
representations. However, all relevant results are obtained, implicitly or
explicitly, under the assumption that the payoff of the reference asset is an
interior point of the positive cone. This critically limits their applicability since
the positive cone of many spaces encountered in the literature has empty interior --
for instance~$L^p$ spaces, $0\leq p<\infty$, and Orlicz hearts on nonatomic
probability spaces. In this respect, Proposition~\ref{pseudo order unit} can be seen
as a general formulation of that type of results. In~\cite{FarkasKochMunari2012}, the
present authors consider general eligible assets in the $L^\infty$ setting. However,
the treatment there relies heavily on the fact that the positive cone in $L^\infty$
has nonempty interior and cannot be adapted to more general spaces which are
important in financial applications. Finally, we mention that risk measures of the
form~$\rho_{\cA,S}$ on~$L^p$ can be regarded as scalarizations of set-valued risk
measures -- as studied in Hamel, Heyde and Rudloff~\cite{HamelHeydeRudloff2011} --
where the underlying market consists of~$S$ and the risk-free asset. Hence, our
results can also be applied in that particular setting.

%%%%%%%%%%%%%%%%%%%%%%%%%%%%%%%%%%%%%%%%%%%%%%%%%%%%%%%%%%%%%%%%%%

\section{Risk measures beyond cash additivity}
\label{sec 2}

We start by defining risk measures associated to general acceptance sets and general
reference assets, setting the scene for the remainder of the paper.

%%%%%%%%%%%%%%%%%%%%%%%%%%%%%%%%%%%%%%%%%%%%%%%%%

\subsection{The space of financial positions}

In this paper, financial positions are assumed to belong to a (Hausdorff) topological
vector space over~$\R$ denoted by~$\cX$. We assume~$\cX$ is ordered by a pointed
convex cone~$\cX_+$ called the {\em positive cone}. Note that a set $\cA\subset\cX$
is a cone if $\lambda\cA\subset\cA$ for all $\lambda\ge 0$ and is pointed if
$\cA\cap(-\cA) =\{0\}$. We write $X\le Y$ whenever $Y-X\in\cX_+$. The topological
dual of~$\cX$ is denoted by~$\cX'$. The space~$\cX'$ is itself an ordered vector
space when equipped with the positive cone~$\cX'_+$ consisting of all functionals
$\psi\in\cX'$ such that $\psi(X)\geq 0$ whenever $X\in\cX_+$.

\smallskip

If~$\cA$ is a subset of~$\cX$, we denote by~$\Interior(\cA)$,~$\overline{\cA}$,
and~$\partial\cA$ the interior, the closure, and the boundary of~$\cA$, respectively.
Moreover, we denote by~$\Core(\cA)$ the core, or algebraic interior, of~$\cA$, i.e.
the set of all positions $X\in\cA$ such that for each $Y\in\cX$ there exists $\e>0$
with $X+\lambda Y\in\cA$ whenever $\left|\lambda\right|<\e$.

\smallskip

In case~$\cX$ is equipped with a lattice structure, we use the standard notation
$X\vee Y:=\sup\{X,Y\}$ and $X\wedge Y:=\inf\{X,Y\}$. Moreover, we set $X^+:=X\vee0$
for the positive part of~$X$, $X^-:=(-X)\vee0$ for its negative part, and
$\left|X\right|:=X\vee(-X)$ for its absolute value.

\medskip

\begin{example}[Standard spaces]
Standard examples of ordered topological vector spaces used in financial mathematics
are provided by spaces of random variables defined on a probability space
$(\Omega,\cF,\probp)$, which throughout this paper will always be assumed to be
\textit{nonatomic}. As usual, random variables which coincide almost surely are
identified so that equalities and inequalities involving random variables will always
be understood in the almost-sure sense. The natural order structure is given by
almost-sure pointwise ordering. The vector space~$L^0$ of all $\cF$-measurable
functions $X:\Omega\to\R$ is a Fr\'{e}chet lattice with respect to the topology of
convergence in probability. If $0<p<\infty$, we denote by~$L^p$ the subspace of~$L^0$
consisting of all functions satisfying $\E[\left|X\right|^p]<\infty$. It is a Banach
lattice under the usual norm when $p\geq 1$ and a Fr\'{e}chet lattice under the usual
metric when $0<p<1$. The space~$L^\infty$ is the subspace of~$L^0$ consisting of all
essentially bounded functions. It is a Banach lattice with respect to the standard
(essential) supremum norm. If~$\Phi$ is an Orlicz function as defined
in~\cite{EdgarSucheston1992}, the Orlicz space~$L^\Phi$ is the subspace of~$L^0$
consisting of all functions $X\in L^0$ such that $\E[\Phi(\lambda X)]<\infty$ for
some $\lambda>0$. The Orlicz heart~$H^\Phi$ is the subspace of~$L^\Phi$ consisting of
all functions satisfying the previous inequality for every $\lambda>0$. These spaces
are Banach lattices under the Luxemburg norm.

\smallskip

If~$\cX$ is any of the spaces described above and~$\cY$ is a vector space such that
$(\cX,\cY)$ is a dual pair, then~$\cX$ equipped with the weak topology
$\sigma(\cX,\cY)$ is an ordered topological vector space. However, it is no longer a
Fr\'{e}chet lattice by Corollary~9.9 in~\cite{AliprantisBorder2006}. A typical
instance of this situation encountered in financial mathematics is when $L^\infty$ is
equipped with the $\sigma(L^\infty,L^1)$ topology.
\end{example}

\medskip

The positive cone~$\cX_+$ may have empty interior. In this case we will consider two
types of substitutes for interior points: order units and strictly positive elements.
The elements in $\Core(\cX_+)$ are called \textit{order units}. A point $X\in\cX_+$
is called \textit{strictly positive} whenever $\psi(X)>0$ for all nonzero
$\psi\in\cX'_+$. The set of all strictly positive points will be denoted
by~$\cX_{++}$. We always have $\Interior(\cX_+)\subset\Core(\cX_+)\subset\cX_{++}$.
These inclusions are in general strict, but they coincide whenever~$\cX_+$ has
nonempty interior.

\medskip

\begin{example}
\label{ex bounded away from zero}
\begin{enumerate}[(i)]
	\item (nonempty interior) The positive cone of~$L^\infty$ has nonempty interior
and we have $X\in\Interior(L^\infty_+)$ if and only if $X\geq\e$ almost surely for some
$\e>0$. In particular one should not confuse strictly positive elements with
functions that are strictly positive almost surely.
  \item (empty interior, nonempty core) If we endow~$L^\infty$ with the weak
      topology~$\sigma(L^\infty,L^1)$, then it is not difficult to see that the
      interior of the positive cone is empty. Note that, as in~(i), any positive
      element in~$L^\infty$ which is bounded away from zero is an order unit.
      Moreover, the strictly positive elements are precisely those $X\in L^\infty$
      such that $X>0$ almost surely. As a result, the inclusion
      $\Core(L^\infty_+)\subset L^\infty_{++}$ is strict, even if the positive cone
      has nonempty core.
  \item (empty core, but strictly positive elements) The positive cone of~$L^p$,
      for $1\leq p<\infty$, has empty core. However, the elements $X\in L^p$ such
      that $X>0$ almost surely correspond to the strictly positive elements. The
      same is true for any (nontrivial) Orlicz heart~$H^\Phi$.
  \item (no strictly positive elements) It is known that strictly positive elements
      may not exist, see Exercise~10 in Section~2.2 of~\cite{AliprantisTourky2007},
      where the space~$\cX$ is a nonstandard function space. In Remark~\ref{no
      strictly positive orlicz} below, we provide a more interesting example. We
      show that the Orlicz space~$L^\Phi$ defined by
      $\Phi(x):=e^{\left|x\right|}-1$ has no strictly positive elements.
\end{enumerate}
\end{example}

%%%%%%%%%%%%%%%%%%%%%%%%%%%%%%%%%%%%%%%%%%%%

\subsection{From unacceptable to acceptable}

In this section we introduce risk measures with respect to general reference assets
and general acceptance sets and establish some of their basic properties. A detailed
motivation for studying this type of risk measures was provided in the introduction.
In Section~\ref{section applications} we will discuss several examples of acceptance
sets which are relevant for financial applications.

\medskip

\begin{definition}
\label{acceptance set definition}
A set $\cA\subset\cX$ is called an {\em acceptance set} whenever the following two
conditions are satisfied:
\begin{enumerate}[(i)]
  \item $\cA$ is a nonempty, proper subset of $\cX$ (non-triviality);
  \item if $X\in\cA$ and $Y\geq X$ then $Y\in\cA$ (monotonicity).
\end{enumerate}
\end{definition}

\medskip

These conditions seem to be minimal in the sense that non-triviality allows to
discriminate between ``good'' and ``bad'' positions and monotonicity captures the
intuition that a financial institution is better capitalized than another if the net
worth of the first dominates the net worth of the second. Special classes of
acceptance sets that will be considered later are \textit{convex} acceptance sets,
\textit{conic} acceptance sets, and \textit{coherent} acceptance sets, i.e.
acceptance sets which are convex cones. We refer
to~\cite{ArtznerDelbaenEberHeath1999} and~\cite{FoellmerSchied2011} for a financial
interpretation of these special acceptance sets.

\medskip

We now consider traded assets $S=(S_0,S_T)$ with initial price $S_0>0$ and nonzero
terminal payoff $S_T\in\cX_+$. If a position $X\in\cX$ is not acceptable with respect
to a given acceptance set $\cA\subset\cX$, it is natural to ask which actions can
turn it into an acceptable position, and at which cost. In line with the definition
of a risk measure proposed in~\cite{ArtznerDelbaenEberHeath1999}, we allow for one
specific action: raising capital and investing it in a pre-specified traded asset
$S$. In the sequel, we adopt the standard notation
$\overline{\R}:=\R\cup\{\infty,-\infty\}$.

\medskip

\begin{definition}
\label{capital requirement def}
Let $\cA\subset\cX$ be a monotone set and $S=(S_0,S_T)$ a traded asset. The
\textit{risk measure} with respect to~$\cA$ and~$S$ is the function
$\rho_{\cA,S}:\cX\to\overline{\R}$ defined by
\begin{equation}
\label{eq def cap req}
\rho_{\cA,S}(X):=\inf\left\{m\in\R \,; \ X+\frac{m}{S_0}S_T\in\cA\right\}\,.
\end{equation}
The asset~$S$ will be called the \textit{eligible}, or \textit{reference},
\textit{asset}.
\end{definition}

\medskip

When finite and positive, $\rho_{\cA,S}(X)$ represents the ``minimum'' amount of
capital that needs to be invested in the eligible asset and added to the position~$X$
to reach acceptability. When negative, it represents the amount of capital that can
be extracted from~$X$ without compromising its acceptability. Clearly, unless~$\cA$
is closed, the infimum in~\eqref{eq def cap req} is not necessarily attained.

\medskip

Before stating some natural properties of risk measures~$\rho_{\cA,S}$, for which we
also refer to~\cite{FarkasKochMunari2012}, we recall some notation and terminology
for a map $\rho:\cX\to\overline{\R}$. The \textit{(effective) domain} of~$\rho$ is
the set
\begin{equation*}
\Dom(\rho):=\{X\in\cX \,; \ \rho(X)<\infty\}\,.
\end{equation*}
If the epigraph $\Epi(\rho):=\{(X,\alpha)\in\cX\times\R \,; \ \rho(X)\leq\alpha\}$ is
convex, respectively conic, then~$\rho$ is called \textit{convex}, respectively
\textit{positively homogeneous}. The function~$\rho$ is \textit{decreasing} if
$\rho(X)\geq\rho(Y)$ for all $X\le Y$. Moreover, we say that~$\rho$ is \textit{lower
semicontinuous} at~$X\in\cX$, if for every $\e>0$ there exists a neighborhood~$\cU$
of~$X$ such that $\rho(Y)\geq\rho(X)-\e$ for all $Y\in\cU$, and \textit{upper
semicontinuous} at~$X$ when~$-\rho$ is lower semicontinuous at~$X$. Note that
continuity is equivalent to having both lower and upper semicontinuity. If
$S=(S_0,S_T)$ is a traded asset the function~$\rho$ is said to be
\textit{$S$-additive} if for any $X\in\cX$
\begin{equation*}
\rho(X+\lambda S_T)=\rho(X)-\lambda S_0 \ \ \ \ \mbox{for all} \ \lambda\in\R\,.
\end{equation*}

\medskip

\begin{lemma}
\label{translation and monotonicity}
Let~$\cA\subset\cX$ be a monotone set and $S=(S_0,S_T)$ a traded asset.
Then~$\rho_{\cA,S}$ satisfies the following properties:
\begin{enumerate}[(i)]
  \item $\rho_{\cA,S}$ is $S$-additive and decreasing;
  \item $\{X\in\cX \,; \ \rho_{\cA,S}(X)=0\}\subset\partial\cA$ and
  \begin{equation}
  \label{chain of inclusions}
  \Interior(\cA)\subset\{X\in\cX \,; \
  \rho_{\cA,S}(X)<0\}\subset\cA\subset\{X\in\cX \,; \
  \rho_{\cA,S}(X)\leq0\}\subset\overline{\cA}\,;
  \end{equation}
  \item $\rho_{\cA,S}$ is lower semicontinuous at $X$ if and only if
      $X+\frac{m}{S_0}S_T\notin\overline{\cA}$ for any $m<\rho_{\cA,S}(X)$;
  \item $\rho_{\cA,S}$ is upper semicontinuous at $X$ if and only if
      $X+\frac{m}{S_0}S_T\in\Interior(\cA)$ for any $m>\rho_{\cA,S}(X)$;
  \item if~$\cA$ is convex, respectively conic, then~$\rho_{\cA,S}$ is convex,
      respectively positively homogeneous.
\end{enumerate}
\end{lemma}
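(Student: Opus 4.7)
The plan is to handle the five parts in order, drawing on the defining infimum of $\rho_{\cA,S}$, the monotonicity of $\cA$, and the continuity of the map $m \mapsto m\,S_T$ in the topological vector space $\cX$. For part (i), $S$-additivity comes from the substitution $n := m + \lambda S_0$ in the infimum defining $\rho_{\cA,S}(X + \lambda S_T)$, which rewrites the admissible set as a shift of $\{n\in\R \,; \ X + (n/S_0) S_T \in \cA\}$ by $-\lambda S_0$. Monotonicity of $\rho_{\cA,S}$ follows because monotonicity of $\cA$ makes $\{m\in\R \,; \ X + (m/S_0) S_T \in \cA\}$ grow with $X$ in the set-theoretic sense, so the corresponding infima satisfy $\rho_{\cA,S}(Y)\le\rho_{\cA,S}(X)$ whenever $X\le Y$.

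For part (ii) I prove the four inclusions in the chain; the statement $\{\rho_{\cA,S} = 0\} \subset \partial \cA$ is then an immediate consequence, since such $X$ lies in $\overline{\cA}$ by the last inclusion and is excluded from $\Interior(\cA)$ by the first. The two middle inclusions use only monotonicity of $\cA$: plugging $m = 0$ gives $\cA \subset \{\rho_{\cA,S} \leq 0\}$, and given $m < 0$ with $X + (m/S_0) S_T \in \cA$ the inequality $X \geq X + (m/S_0) S_T$ forces $X \in \cA$. The outer inclusions use continuity of the affine map $m \mapsto X + (m/S_0) S_T$: from $X \in \Interior(\cA)$ I pull back a neighborhood of $X$ contained in $\cA$ to produce some $m < 0$ in the defining set, whence $\rho_{\cA,S}(X) < 0$; conversely, for $\rho_{\cA,S}(X) \leq 0$ a sequence $m_n$ approaching the infimum from above with $X + (m_n/S_0) S_T \in \cA$ converges to $X$ in $\cX$, placing $X$ in $\overline{\cA}$.

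The unified strategy for parts (iii) and (iv) is to combine $S$-additivity with the inclusions of (ii), translating semicontinuity into a containment condition on the translates $X + (m/S_0) S_T$. For (iii), if $X + (m/S_0) S_T \in \overline{\cA}$ for some $m < \rho_{\cA,S}(X)$, I approximate by a net $(Z_\alpha)$ in $\cA$ and set $Y_\alpha := Z_\alpha - (m/S_0) S_T$, so $Y_\alpha \to X$; then $S$-additivity combined with $\rho_{\cA,S}(Z_\alpha) \leq 0$ gives $\rho_{\cA,S}(Y_\alpha) \leq m < \rho_{\cA,S}(X)$, violating lower semicontinuity at $X$. Conversely, if no such $m$ exists, then for each $\e > 0$ the point $X + ((\rho_{\cA,S}(X) - \e)/S_0) S_T$ lies in the open complement of $\overline{\cA}$, and translating its open neighborhood by $-((\rho_{\cA,S}(X) - \e)/S_0) S_T$ yields a neighborhood of $X$ on which $\rho_{\cA,S} \geq \rho_{\cA,S}(X) - \e$; here I use that the admissible set $\{n\in\R \,; \ Y + (n/S_0) S_T \in \cA\}$ is upward-closed, thanks to $S_T\ge 0$ and monotonicity of $\cA$. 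Part (iv) is the mirror image, with $\Interior(\cA)$ replacing $\overline{\cA}$ and the inclusion $\{\rho_{\cA,S} < 0\} \subset \cA$ from (ii) doing the corresponding work.

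Part (v) is handled by checking convexity, respectively conicity, of $\Epi(\rho_{\cA,S})$ directly: given epigraph points $(X_i, \alpha_i)$ and $\e > 0$, I pick witnesses $m_i < \alpha_i + \e$ with $X_i + (m_i/S_0) S_T \in \cA$, and the appropriate convex or nonnegative scalar combination of these points lies in $\cA$ by the assumption on $\cA$, bounding $\rho_{\cA,S}$ of the combination of the $X_i$'s by the combination of the $m_i$'s; letting $\e \to 0$ concludes. The degenerate $\lambda = 0$ case in the conic setting is covered by $0\in\cA$, which holds since $\cA$ is a nonempty cone. The only bookkeeping trap throughout is that the defining infimum need not be attained and may equal $\pm\infty$, so I consistently argue via minimizing nets rather than minimizers, and I take care that the weak versus strict inequalities arising match the paper's definitions of lower and upper semicontinuity.
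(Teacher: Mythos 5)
Your proof is correct and follows essentially the standard route one would take for this lemma; the paper itself states the result without proof (deferring to~\cite{FarkasKochMunari2012}), and every step of your argument --- the change of variables for $S$-additivity, the use of monotonicity and of the upward-closedness of the admissible set $\{m\in\R \,;\ X+\frac{m}{S_0}S_T\in\cA\}$, the translation of neighborhoods for (iii) and (iv), and the $\e$-perturbed witnesses for the epigraph in (v) --- checks out. One cosmetic point: in the last inclusion of (ii), a sequence $m_n$ decreasing to the infimum gives $X+\frac{m_n}{S_0}S_T\to X$ only when $\rho_{\cA,S}(X)=0$; if $\rho_{\cA,S}(X)<0$ you should instead take $m_n\downarrow 0$ (legitimate by the upward-closedness you invoke in (iii)) or simply observe that $X\in\cA\subset\overline{\cA}$ by the middle inclusions already established.
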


\medskip

\begin{remark}
\label{KainaRuschendorf}
\begin{enumerate}[(i)]
  \item By part~\textit{(iv)} in the above lemma, the first inclusion
      in~\eqref{chain of inclusions} is an equality if and only if $\rho_{\cA,S}$
      is globally upper semicontinuous. By part~\textit{(iii)}, the last inclusion
      in~\eqref{chain of inclusions} is an equality if and only if $\rho_{\cA,S}$
      is globally lower semicontinuous.
  \item Note that $\rho_{\cA,S}$ cannot be (upper semi)continuous at any point $X$
      of its domain if $\Interior(\cA)$ is empty. This follows from
      part~\textit{(iv)} of the lemma above.
  \item Consider the space~$L^p$ for some $1\leq p<\infty$. By the previous point,
      Theorem~2.9 in~\cite{KainaRuschendorf2009} cannot be true in the stated
      generality, namely that any lower semicontinuous, coherent cash-additive risk
      measure $\rho:L^p\to\R\cup\{\infty\}$ must automatically be finitely valued
      and continuous. To see this, consider the closed, coherent acceptance
      set~$L^p_+$ and the risk-free asset $S=(1,1_\Omega)$. The corresponding risk
      measure $\rho_{L^p_+,S}$ is cash-additive, convex, and lower semicontinuous,
      but $\rho_{L^p_+,S}(X)=\infty$ whenever~$X$ is not essentially bounded from
      below. Moreover,~$\rho_{L^p_+,S}$ cannot be continuous at any point of
      finiteness since~$L^p_+$ has empty interior. The problem
      in~\cite{KainaRuschendorf2009} originates with the proof of Proposition~2.8
      in that paper which only works for finitely-valued functions.
\end{enumerate}
\end{remark}

%%%%%%%%%%%%%%%%%%%%%%%%%%%%%%%%%%%%%%%%%%%%%%%%%%%%%%%%%%%%%%%%%%%%%%%%%%

\section{Interplay between the acceptance set and the eligible asset}
\label{sec 3}

In this section we investigate finiteness and continuity properties of risk measures
on general ordered topological vector spaces, highlighting the interplay between the
acceptance set and the eligible asset. Essentially, the more we require from the
acceptance set, the less we need to require from the  eligible asset.

%%%%%%%%%%%%%%%%%%%%%%%%%%%%%%%%%%%%%%%%

\subsection{General acceptance sets}

Assume $\cA\subset L^\infty$ is an arbitrary acceptance set, and that the
payoff~$S_T$ of a traded asset~$S=(S_0,S_T)$ is an interior point of~$L^\infty_+$,
i.e.~$S_T$ is bounded away from zero. In this case, a standard argument shows that
the corresponding risk measure~$\rho_{\cA,S}$ is finitely valued and continuous, see
also~\cite{FarkasKochMunari2012}. For a general ordered topological vector
space~$\cX$ the statement remains true. When the interior of the positive cone is
empty we can still obtain finiteness if we require that~$S_T$ is an order unit.

\medskip

\begin{proposition}
\label{pseudo order unit}
Let $\cA\subset\cX$ be an arbitrary acceptance set and~$S=(S_0,S_T)$ a traded asset.
\begin{enumerate}[(i)]
\item If $S_T\in\Core(\cX_+)$, then~$\rho_{\cA,S}$ is finitely valued.
\item If $S_T\in\Interior(\cX_+)$, then~$\rho_{\cA,S}$ is finitely valued and
    continuous.
\end{enumerate}
Moreover, if~$\cX$ is an ordered normed space and $S_T\in\Interior(\cX_+)$,
then~$\rho_{\cA,S}$ is Lipschitz continuous.
\end{proposition}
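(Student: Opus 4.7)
The plan is to prove the three claims in sequence, each exploiting successively stronger properties of $S_T$; throughout, the only other ingredients are the acceptance set axioms (nontriviality and monotonicity) and the $S$-additivity and monotonicity of $\rho_{\cA,S}$ from Lemma~\ref{translation and monotonicity}. For part~\textit{(i)}, I would pick $Y_0 \in \cA$ and $Y_1 \in \cX \setminus \cA$, both existing by nontriviality, and fix an arbitrary $X \in \cX$. Applying the core property of $S_T$ to the direction $X - Y_0$ yields $\varepsilon > 0$ with $S_T + \lambda(X - Y_0) \in \cX_+$ whenever $|\lambda| < \varepsilon$; choosing $\lambda = \varepsilon/2$ and rearranging gives $X + (2/\varepsilon) S_T \geq Y_0$, so monotonicity of $\cA$ yields $\rho_{\cA,S}(X) \leq 2 S_0/\varepsilon$. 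A symmetric application of the core property to $Y_1 - X$ produces some $\varepsilon' > 0$ with $X - (2/\varepsilon') S_T \leq Y_1$; were the left-hand side in $\cA$, monotonicity would force $Y_1 \in \cA$, contradicting the choice of $Y_1$, so $\rho_{\cA,S}(X) \geq -2 S_0/\varepsilon'$. Hence $\rho_{\cA,S}$ is finitely valued.

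For part~\textit{(ii)}, I would upgrade the algebraic bound to a topological one by replacing line segments with neighborhoods. Since $S_T$ lies in $\Interior(\cX_+)$, choose a balanced neighborhood $\cU$ of the origin with $S_T + \cU \subset \cX_+$; scaling gives $\eta S_T + \eta \cU \subset \cX_+$ for every $\eta > 0$, so each $W \in \eta \cU$ satisfies $-\eta S_T \leq W \leq \eta S_T$ thanks to balancedness. If $Y \in X + \eta \cU$ and $m$ is any scalar with $X + (m/S_0) S_T \in \cA$, then $Y - X \geq -\eta S_T$ implies $Y + ((m + \eta S_0)/S_0) S_T \geq X + (m/S_0) S_T \in \cA$, whence monotonicity places the left-hand side in $\cA$. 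Passing to the infimum over admissible $m$ and then swapping the roles of $X$ and $Y$ yields the two-sided bound $|\rho_{\cA,S}(Y) - \rho_{\cA,S}(X)| \leq \eta S_0$, which is continuity at the arbitrary point $X$.

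Part~\textit{(iii)} is a quantitative version of~\textit{(ii)}. In the normed setting, $S_T \in \Interior(\cX_+)$ furnishes $r > 0$ with the open ball of radius~$r$ around $S_T$ contained in $\cX_+$, and for nonzero $W \in \cX$ the vector $(r/2)W/\|W\|$ has norm $r/2 < r$, forcing $\pm W \leq (2\|W\|/r) S_T$. Inserting this order estimate into the argument of~\textit{(ii)} with the choice $\eta = 2\|Y - X\|/r$ produces $|\rho_{\cA,S}(Y) - \rho_{\cA,S}(X)| \leq (2 S_0/r) \|Y - X\|$, the required Lipschitz estimate. The main subtlety is the symmetrization step in~\textit{(ii)}: extracting a two-sided bound requires both $Y - X \geq -\eta S_T$ and $X - Y \geq -\eta S_T$, which is exactly why $\cU$ must be chosen balanced from the outset; without this, the argument would only deliver upper semicontinuity. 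It is worth noting that neither convexity nor closedness of $\cA$ plays any role, in line with the generality of the statement.
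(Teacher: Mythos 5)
Your proof is correct and follows essentially the same route as the paper's: the core property applied to the directions $X-Y_0$ and $Y_1-X$ for finiteness, the order neighborhood $\{-\eta S_T\le W\le\eta S_T\}$ for continuity, and the norm-controlled order bound $\pm W\le(2\|W\|/r)S_T$ for the Lipschitz estimate. The only cosmetic difference is that you derive this last order bound directly from the interior ball, where the paper cites Theorem~9.40 of Aliprantis--Border.
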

\begin{proof}
\textit{(i)} Fix $X\in\cX$ and take $Y\in\cA$ and $Z\in\cA^c$. Since~$S_T$ belongs to
the core of~$\cX_+$, there exists $\lambda_1>0$ such that $Y-X\leq \lambda_1S_T$. As
a result, we have $X+\lambda_1S_T\in\cA$, implying $\rho_{\cA,S}(X)<\infty$. On the
other hand, we can also find $\lambda_2>0$ so that $X-Z\leq \lambda_2S_T$. Thus,
$X-\lambda_2S_T\notin\cA$ by monotonicity, showing that $\rho_{\cA,S}(X)>-\infty$.

\smallskip

\textit{(ii)} Since~$S_T$ is also an element of the core of~$\cX_+$, finiteness
follows from~\textit{(i)}. To prove continuity take an arbitrary $X\in\cX$ and assume
it is the limit of a net~$(X_\alpha)$. Since $\{Y\in\cX \,; \ -S_T\leq Y\leq S_T\}$
is a neighborhood of zero, for every $\e>0$ there exists~$\alpha_\e$ such that $-\e
S_T\leq X_\alpha-X\leq\e S_T$ whenever $\alpha\geq\alpha_\e$. But then
$\left|\rho_{\cA,S}(X_\alpha)-\rho_{\cA,S}(X)\right|\leq\e S_0$ for
$\alpha\geq\alpha_\e$, showing that~$\rho_{\cA,S}$ is continuous at~$X$.

\smallskip

Finally, assume~$\cX$ is an ordered normed space and $S_T\in\Interior(\cX_+)$ so
that~$\rho_{\cA,S}$ is finitely valued by part~\textit{(i)}. Using Theorem~9.40
in~\cite{AliprantisBorder2006} it is not difficult to prove that
$S_T\in\Interior(\cX_+)$ is equivalent to the existence of a constant $\lambda>0$
such that $X\leq\lambda\left\|X\right\|S_T$ for every nonzero $X\in\cX$. To prove
Lipschitz continuity, take now two positions~$X$ and~$Y$ in~$\cX$. Since $Y\leq
X+\lambda\left\|X-Y\right\|S_T$, we obtain that
$\rho_{\cA,S}(X)-\rho_{\cA,S}(Y)\leq\lambda S_0\left\|X-Y\right\|$. Exchanging~$X$
and~$Y$, we conclude the proof.
\end{proof}

\medskip

\begin{remark}
\label{pre-order remark}
The above proposition is easily seen to hold if $\cX_+$ is only assumed to induce a
pre-ordering on $\cX$, i.e. if $\cX_+$ is a convex cone which is not necessarily
pointed. This will be important in Theorem~\ref{finite coherent} where the
proposition is applied with respect to the pre-ordering induced by a coherent
acceptance set.
\end{remark}

\medskip

We now turn to general acceptance sets in topological Riesz spaces, i.e. topological
vector spaces equipped with a lattice ordering. First we need the following
generalization of the notion of order units. An element $Z\in\cX_+$ in a Riesz
space~$\cX$ is called a \textit{weak topological unit} if for every $X\in\cX_+$ we
have $X\wedge nZ\to X$ as $n\to\infty$.

\medskip

The next technical lemma extends Theorem~6.3 in~\cite{Schaefer1974} outside the
normed space setting and establishes the link between weak topological units and
strictly positive elements. Recall that a topological Riesz space $\cX$ is said to be
locally solid when there exists a neighborhood base of zero consisting of
\textit{solid} neighborhoods~$\cU$, i.e. satisfying $X\in\cU$ whenever $Y\in\cU$ and
$\left|X\right|\leq\left|Y\right|$. For more details, we refer to Chapter~9
in~\cite{AliprantisBorder2006}.

\medskip

\begin{lemma}
\label{equivalence q.i. point}
Let~$\cX$ be a topological Riesz space. Then every weak topological unit is strictly
positive. If~$\cX$ is also locally convex and locally solid, the converse is true as
well.
\end{lemma}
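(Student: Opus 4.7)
For the easy direction, suppose $Z \in \cX_+$ is a weak topological unit and pick any $\psi \in \cX'_+$ with $\psi(Z)=0$; the goal is to show $\psi = 0$. For any $X \in \cX_+$, the sandwich $0 \leq X \wedge nZ \leq nZ$ together with positivity of $\psi$ gives $\psi(X \wedge nZ) = 0$, and continuity of $\psi$ combined with the defining convergence $X \wedge nZ \to X$ forces $\psi(X) = 0$. Since every element of the Riesz space $\cX$ is a difference of positives, $\psi$ vanishes on all of $\cX$, so $Z$ is strictly positive.

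For the converse, assume in addition that $\cX$ is locally convex and locally solid, and that $Z$ is strictly positive. I would proceed in two steps. First, show that the principal ideal $J := \bigcup_{n \in \N} [-nZ, nZ]$ is dense in $\cX$: otherwise the Hahn--Banach theorem (available since $\cX$ is locally convex) produces a nonzero $\phi \in \cX'$ vanishing on the closed subspace $\overline{J}$. Local solidity ensures that $\cX'$ is a Riesz subspace of the order dual of $\cX$ (a standard result for locally solid Riesz spaces), so both $\phi^+$ and $\phi^-$ lie in $\cX'_+$. The Riesz--Kantorovich formula gives $\phi^{\pm}(Z) = \sup\{\pm\phi(Y) \,; \ 0 \leq Y \leq Z\} = 0$, since $[0,Z]\subset J$; as $\phi \neq 0$, one of $\phi^{\pm}$ is a nonzero positive functional annihilating $Z$, contradicting strict positivity.

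Second, from density of $J$ I would deduce $X \wedge nZ \to X$ for every $X \in \cX_+$. Pick a net $W_\alpha \in J$ with $W_\alpha \to X$ and set $V_\alpha := W_\alpha^+ \wedge X$, which lies in $J \cap [0,X]$ because $J$ is an ideal. Standard lattice estimates such as $|W_\alpha^+ - X| \leq |W_\alpha - X|$ and $|V_\alpha - X| \leq |W_\alpha^+ - X|$, combined with local solidity of the topology, yield $V_\alpha \to X$. For each $\alpha$ choose $m_\alpha \in \N$ with $V_\alpha \leq m_\alpha Z$; then for $n \geq m_\alpha$ one has $V_\alpha \leq X \wedge nZ$, hence $0 \leq X - X \wedge nZ \leq X - V_\alpha$. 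Given any solid neighborhood $\cU$ of zero, $X - V_\alpha \in \cU$ eventually, and solidity propagates this bound to $X - X \wedge nZ \in \cU$ for all sufficiently large $n$, i.e.\ $X \wedge nZ \to X$.

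The main obstacle is the first step of the converse, namely ensuring that the Hahn--Banach-derived functional $\phi$ decomposes into positive and negative parts still belonging to $\cX'$. This is precisely where local solidity is used, since the topological dual of a merely locally convex Riesz space need not be a sublattice of the order dual. Once that decomposition is in place, the rest of the proof reduces to the identity $X - X \wedge nZ = (X - nZ)^+$, ideal-based approximation, and the solidity of the neighborhood base.
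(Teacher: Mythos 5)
Your proof is correct, and its overall skeleton matches the paper's: the easy direction is argued identically (positivity squeezes $\psi(X\wedge nZ)=0$, continuity passes to the limit, and $\cX=\cX_+-\cX_+$ finishes), and the converse proceeds via density of the principal ideal $\ccI_Z$ followed by the solid-neighborhood estimate $0\leq X-(X\wedge nZ)\leq X-V\leq\left|W-X\right|$ for a suitable $V\in\ccI_Z\cap[0,X]$. The one genuine divergence is how density of $\ccI_Z$ is obtained. The paper cites Theorem~8.54 in Aliprantis--Border for \emph{weak} density and then upgrades to density in the original topology via Mazur's theorem (convexity of $\ccI_Z$ plus local convexity). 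You instead prove density directly: Hahn--Banach produces a nonzero $\phi\in\cX'$ annihilating $\overline{\ccI_Z}$, local solidity guarantees that $\cX'$ is a Riesz subspace (indeed an ideal) of the order dual so that $\phi^{\pm}\in\cX'_+$, and the Riesz--Kantorovich formula forces $\phi^{\pm}(Z)=0$, contradicting strict positivity. This makes the argument self-contained at the cost of invoking the lattice structure of the topological dual, and it makes transparent that local convexity enters only through Hahn--Banach while local solidity is needed twice: once for the dual decomposition and once for the final solid-neighborhood estimate. Both routes are sound; yours essentially re-proves the special case of the cited theorem that is actually needed.
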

\begin{proof}
Let~$Z$ be a weak topological unit and assume $\psi(Z)=0$ for some nonzero
$\psi\in\cX'_+$. Then $\psi(X\wedge nZ)=0$ for all $X\in\cX_+$ and all positive
integers~$n$. Hence, by continuity, $\psi(X)=0$ for all $X\in\cX_+$, implying
that~$\psi$ is null. This proves~$Z$ must be strictly positive.

\smallskip

Assume now that~$\cX$ is locally convex. Let~$Z$ be strictly positive and take
$X\in\cX_+$. To prove that $Z$ is a weak topological unit, it is sufficient to show
that for every solid neighborhood of zero~$\cU$  we eventually have $X-(X\wedge
nZ)\in\cU$. By Theorem~8.54 in~\cite{AliprantisBorder2006} the principal ideal
\begin{equation*}
\ccI_Z:=\{Y\in\cX \,; \ \exists \ \lambda>0 \,:\, \left|Y\right|\leq\lambda Z\}
\end{equation*}
is
weakly dense in~$\cX$. Since~$\ccI_Z$ is convex and~$\cX$ is locally convex, this
implies that~$\ccI_Z$ is dense in~$\cX$ with respect to the original topology. As a
result, we can find $Y\in\ccI_Z$ with $X-Y\in\cU$. Setting $W:=X\wedge Y^+$ and
noting that~$W$ belongs to~$\ccI_Z$, we see that $W\leq n_0 Z$ for some positive
integer~$n_0$. Since for all $n\geq n_0$
\begin{equation*}
0 \leq X-(X\wedge nZ) \leq X-(X\wedge n_0Z) \leq X-W \leq X\vee Y-X\wedge Y =
\left|X-Y\right|\,,
\end{equation*}
the solidity of~$\cU$ implies that $X-(X\wedge nZ)\in\cU$ for every $n\geq n_0$,
concluding the proof.
\end{proof}

\medskip

\begin{remark}
\begin{enumerate}[(i)]
\item Weak \textit{topological} units differ from weak \textit{order} units
    $Z\in\cX_+$ which for all $X\in\cX_+$ satisfy $X=\sup_{n}X\wedge nZ$. For
    instance, every element $Z\in L^\infty_+$ with $Z>0$ almost surely is a weak
    order unit, but not a weak topological unit unless it is bounded away from
    zero.

\item By the previous result, weak topological units in~$L^p$ spaces, $1\leq
    p<\infty$, or in Orlicz hearts, are precisely those positive elements~$Z$ for
    which $Z>0$ almost surely. In~$L^\infty$ they correspond to elements that are
    bounded away from zero.
\item Recall that~$L^p$ is a topological Riesz space which is not locally convex
    whenever $0\leq p<1$. In this case, the set of strictly positive elements
    coincides with the positive cone since the only continuous linear functional is
    the zero functional. However, it is not difficult to show that $Z\in L^p_+$ is
    a weak topological unit if and only if $Z>0$ almost surely.
\end{enumerate}
\end{remark}

\medskip

The next theorem is the main result of this section and provides a sufficient
condition for a risk measure on a topological Riesz space to be finitely valued. We
require neither convexity of $\cA$ nor cash additivity of $\rho_{\cA,S}$. Our result
contains as a special case nonconvex extensions of two well-known finiteness results
for convex cash-additive risk measures: Theorem~2.3 in Svindland~\cite{Svindland2008}
on~$L^p$ spaces, $1\leq p\leq\infty$, and Theorem~4.6 in Cheridito and
Li~\cite{CheriditoLi2009} on Orlicz hearts. The proofs of both of these results rely
on separation arguments which cannot be reproduced in our nonconvex setting. In fact,
our approach is simpler and depends solely on the lattice structure. It is closer in
spirit to the proof of Proposition~6.7 in Shapiro, Dentcheva and
Ruszczy\'{n}ski~\cite{DentchevaRuszczynskiShapiro2009}, who, however, make use of a
category argument that only works if lower semicontinuity is additionally assumed.

\medskip

\begin{theorem}
\label{finiteness in frechet lattice}
Let~$\cX$ be a topological Riesz space, and $\cA\subset\cX$ an acceptance set with
nonempty interior. Let $S=(S_0,S_T)$ be a traded asset and assume that~$\rho_{\cA,S}$
does not attain the value~$-\infty$. If~$S_T$ is a weak topological unit,
then~$\rho_{\cA,S}$ is finitely valued.
\end{theorem}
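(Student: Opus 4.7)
The hypothesis already rules out $-\infty$, so my only task is to show $\rho_{\cA,S}(X) < \infty$ for every $X \in \cX$. By Definition~\ref{capital requirement def}, it suffices to produce, for each fixed $X$, a single $n \in \N$ with $X + n S_T \in \cA$, since then $\rho_{\cA,S}(X) \leq n S_0$.

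The plan is to use the interior of $\cA$ as a source of "slack" that can absorb the topological error introduced by truncating with $n S_T$. Concretely, I would pick any $Y_0 \in \Interior(\cA)$, so that $\cU := \cA - Y_0$ is a neighborhood of zero, and set $V := (Y_0 - X)^+ \in \cX_+$. Because $S_T$ is a weak topological unit, $V \wedge n S_T \to V$, hence $\e_n := V - V \wedge n S_T \to 0$ and, by continuity of the vector operations, $-\e_n \to 0$ as well. Consequently $-\e_{n_0} \in \cU$ for some $n_0$, i.e.\ $Y_0 - \e_{n_0} \in \cA$.

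The finiteness then drops out of the chain
\begin{equation*}
X + n_0 S_T \,\geq\, X + V \wedge n_0 S_T \,=\, X + V - \e_{n_0} \,\geq\, Y_0 - \e_{n_0} \,\in\, \cA,
\end{equation*}
where the first inequality uses $V \wedge n_0 S_T \leq n_0 S_T$ and the last uses $V = (Y_0 - X)^+ \geq Y_0 - X$. Monotonicity of $\cA$ gives $X + n_0 S_T \in \cA$, hence $\rho_{\cA,S}(X) \leq n_0 S_0 < \infty$.

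The main obstacle, and the reason this goes genuinely beyond Proposition~\ref{pseudo order unit}(i), is that a weak topological unit $S_T$ need not dominate a given element even after scaling, so the naive estimate $Y_0 - X \leq \lambda S_T$ that was available for order units is no longer at our disposal. The fix is precisely the truncation $V \wedge n S_T$: it restores an order inequality at the price of a topological error $\e_n$, and the nonempty interior of $\cA$ is exactly what turns such a vanishing topological error into something harmless for acceptability.
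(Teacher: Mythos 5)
Your proof is correct and follows essentially the same route as the paper's: both arguments take an interior point $Y_0$ of $\cA$, truncate the positive part $(Y_0-X)^+$ by $nS_T$, and use the weak topological unit property to push the error term $\e_n=(Y_0-X)^+-(Y_0-X)^+\wedge nS_T=((Y_0-X)^+-nS_T)^+$ into the neighborhood of zero provided by $\Interior(\cA)$. The only difference is cosmetic — you fold the paper's two steps (bounding $\rho_{\cA,S}(Y_0-V)$ first, then comparing with $X$) into a single order-theoretic chain.
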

\begin{proof}
Take $Z\in\Interior(\cA)$ and choose a neighborhood of zero~$\cU$ such that
$Z+\cU\subset\cA$. Fix $Y\in\cX_+$ and note that $Y=Y\wedge(nS_T)+(Y-nS_T)^+$ for any $n\in\N$. Since~$S_T$ is a weak topological unit, we have
$(Y-nS_T)^+\to0$ as $n\to\infty$, so that $-(Y-mS_T)^+\in\cU$ for a sufficiently large~$m$. Note that
$Z-(Y-mS_T)^+\in\cA$ and $Z-(Y-mS_T)^+-mS_T\leq Z-Y$. Hence, by monotonicity and by $S$-additivity, we have $\rho_{\cA,S}(Z-Y)\leq mS_0<\infty$. Now take an arbitrary
$X\in\cX$. Setting $Y:=(Z-X)^+$, it follows that
$\rho_{\cA,S}(X)\leq\rho_{\cA,S}(Z-Y)<\infty$. Hence~$\rho_{\cA,S}$ is finitely
valued.
\end{proof}

\medskip

When~$\cX$ is a Fr\'{e}chet lattice, i.e. a topological Riesz space which is locally
solid and completely metrizable, the interior and the core of a monotone set always
coincide. This can be shown by adapting the proof of Lemma~4.1 for monotone
functionals on a Banach lattice in Cheridito and Li~\cite{CheriditoLi2009}.
Consequently, on Fr\'{e}chet lattices the above theorem holds under the weaker
assumption that the acceptance set has nonempty core. Because of its practical
relevance -- it is generally easier to show that an element belongs to the core than
to show it belongs to the interior of a set -- we record it in the next proposition.

\medskip

\begin{proposition}
\label{interiors of acceptance}
Let~$\cX$ be a Fr\'{e}chet lattice. The following statements hold:
\begin{enumerate}[(i)]
  \item $\Interior(\cA)=\Core(\cA)$ for every monotone set $\cA\subset\cX$;
  \item if $\cA$ is an acceptance set with nonempty core and $S=(S_0,S_T)$ a traded
      asset with $S_T$ weak topological unit, then~$\rho_{\cA,S}$ is finitely
      valued whenever it does not attain the value $-\infty$.
\end{enumerate}
\end{proposition}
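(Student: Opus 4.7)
\medskip

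The plan is to treat the two parts separately. Part~(i) is the substantive step; part~(ii) is then a one-line consequence, since if $\Core(\cA)\neq\emptyset$ then~(i) gives $\Interior(\cA)\neq\emptyset$, and Theorem~\ref{finiteness in frechet lattice} applies verbatim to deliver the claimed finiteness of~$\rho_{\cA,S}$.

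\medskip

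For part~(i) only the inclusion $\Core(\cA)\subset\Interior(\cA)$ needs proof. I would fix $X\in\Core(\cA)$ and, exploiting translation invariance of the topology, reduce to $X=0\in\cA$. Using that~$\cX$ is a Fr\'{e}chet lattice, I equip it with a translation-invariant, subadditive, solid metric~$d$ with respect to which~$\cX$ is complete. Arguing by contradiction, if $0\notin\Interior(\cA)$, there exists a sequence $Y_n\to 0$ with $Y_n\notin\cA$; monotonicity applied to $-|Y_n|\leq Y_n$ forces $-|Y_n|\notin\cA$, while solidity of~$d$ yields $|Y_n|\to 0$.

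\medskip

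The decisive step is then to produce a single $Y^{\ast}\in\cX_{+}$ order-dominating the sequence up to scalars. Continuity of scalar multiplication allows a choice of indices $n_k$ with $d(2^{k}|Y_{n_k}|,0)\leq 2^{-k}$; subadditivity of~$d$ then makes the increasing partial sums $S_K:=\sum_{k=1}^{K}2^{k}|Y_{n_k}|$ Cauchy, and completeness together with closedness of~$\cX_{+}$ produces a limit $Y^{\ast}\in\cX_{+}$ satisfying $2^{k}|Y_{n_k}|\leq Y^{\ast}$ for every~$k$. Applying the core property at~$0$ in the direction~$-Y^{\ast}$ furnishes $\e>0$ with $-tY^{\ast}\in\cA$ for $0\leq t<\e$. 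For any~$k$ with $2^{-k}<\e$ we then have $-|Y_{n_k}|\geq -2^{-k}Y^{\ast}\in\cA$, so monotonicity forces $-|Y_{n_k}|\in\cA$, contradicting the previous step.

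\medskip

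I expect the construction of~$Y^{\ast}$ to be the only nontrivial point: in a general topological vector space a null sequence need not be uniformly dominated by a single positive vector, and it is precisely here that the Fr\'{e}chet lattice hypotheses (locally solid topology together with a complete translation-invariant metric) are used to trade speed of topological convergence for order domination. This is the mechanism underlying Lemma~4.1 in~\cite{CheriditoLi2009}; notably, convexity of~$\cA$ plays no role in the argument, so the conclusion covers arbitrary monotone sets as stated.
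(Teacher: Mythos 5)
Your proof is correct and follows essentially the route the paper intends: the paper proves part (i) by ``adapting the proof of Lemma~4.1 in Cheridito and Li'' --- which is precisely your construction of a dominating element $Y^{\ast}$ via the geometric-series trick with a solid, translation-invariant $F$-norm --- and then deduces part (ii) from Theorem~\ref{finiteness in frechet lattice} exactly as you do. The only difference is that you spell out the adaptation (replacing norm estimates by the subadditive solid metric and using closedness of $\cX_+$ to pass to the limit), which the paper leaves implicit.
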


\medskip

\begin{remark}
\begin{enumerate}[(i)]
	\item Part~\textit{(i)} of the above proposition provides an alternative approach
to the extended Namioka-Klee theorem obtained in Biagini and
Frittelli~\cite{BiaginiFrittelli2009}: \textit{Every convex monotone function
$\rho:\cX\to\R\cup\{\infty\}$ on a Fr\'{e}chet lattice $\cX$ is continuous on the
interior of its domain}. Indeed, assume~$\rho$ is such a map and let $X$ be an
interior point of its domain. As in the proof of Proposition~\ref{lemma on internal
point and domain} below, it is not difficult to show that, for any
$\alpha>\rho(X)$, the point~$X$ belongs to the core of $\cA:=\{Y\in\cX \,; \
\rho(Y)<\alpha\}$. Then~$X$ is an interior point of~$\cA$ by
Proposition~\ref{interiors of acceptance}, hence the map~$\rho$ turns out to be
bounded from above on a neighborhood of~$X$. As a result, Theorem~5.43
in~\cite{AliprantisBorder2006} implies~$\rho$ is continuous at~$X$.
\item If the acceptance set in the preceding proposition is additionally assumed to be
    convex, the finiteness of~$\rho_{\cA,S}$ immediately implies continuity by
    Theorem~1 in~\cite{BiaginiFrittelli2009} or by the first remark.
\end{enumerate}
\end{remark}

%%%%%%%%%%%%%%%%%%%%%%%%%%%%%%%%%%%%%%%%%%%%%%%%%%%%

\subsection{Convex acceptance sets}

In this section we focus on convex acceptance sets and provide a variety of
finiteness and continuity results in general ordered topological vector spaces.
Convexity allows us to obtain results for a wide range of eligible assets, without
requiring that the positive cone has nonempty interior. In particular, all results in
this section apply to~$L^p$, $1\leq p\leq\infty$, and Orlicz spaces. We start by
showing a general necessary condition for a convex risk measure to be finite.

\medskip

\begin{proposition}
\label{lemma on internal point and domain}
Let $\cA\subset\cX$ be a convex acceptance set and $S=(S_0,S_T)$ a traded asset.
Assume that~$\rho_{\cA,S}$ does not attain the value~$-\infty$. Then
$\Core(\Dom(\rho_{\cA,S}))$ is nonempty if and only if~$\Core(\cA)$ is nonempty.

\smallskip

In particular, if~$\rho_{\cA,S}$ is finitely valued then~$\Core(\cA)$ is nonempty.
\end{proposition}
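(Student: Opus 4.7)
The plan is to prove the equivalence by handling the easy direction directly and concentrating the real work on the converse; the ``in particular'' clause is then automatic since if $\rho_{\cA,S}$ is finitely valued we have $\Dom(\rho_{\cA,S})=\cX$, whose core is all of $\cX$ and hence nonempty.

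For the direction $\Core(\cA)\neq\emptyset\Rightarrow\Core(\Dom(\rho_{\cA,S}))\neq\emptyset$, I would simply observe that the chain of inclusions in Lemma~\ref{translation and monotonicity}(ii) yields $\cA\subset\{X\in\cX \,; \ \rho_{\cA,S}(X)\leq 0\}\subset\Dom(\rho_{\cA,S})$, so the core inclusion $\Core(\cA)\subset\Core(\Dom(\rho_{\cA,S}))$ is immediate from the definition of core.

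For the converse, given $X_0\in\Core(\Dom(\rho_{\cA,S}))$, the hypothesis ruling out $-\infty$ guarantees $\rho_{\cA,S}(X_0)\in\R$. I would pick any real $m>\rho_{\cA,S}(X_0)$ and set $Z:=X_0+\frac{m}{S_0}S_T$ as the candidate core point of $\cA$. The $S$-additivity from Lemma~\ref{translation and monotonicity}(i) gives $\rho_{\cA,S}(Z)=\rho_{\cA,S}(X_0)-m<0$, hence $Z\in\cA$ by the chain of inclusions. To verify the absorption property, fix any $Y\in\cX$; since $X_0$ is in the core of $\Dom(\rho_{\cA,S})$, there is $\e>0$ with $X_0+\lambda Y\in\Dom(\rho_{\cA,S})$ for all $\left|\lambda\right|<\e$. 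The one-variable function $f(\lambda):=\rho_{\cA,S}(X_0+\lambda Y)$ is then real-valued and convex on $(-\e,\e)$, using that $\cA$ convex implies $\rho_{\cA,S}$ convex by Lemma~\ref{translation and monotonicity}(v). Since finite convex functions on an open real interval are continuous, and $f(0)=\rho_{\cA,S}(X_0)<m$, there is $\delta>0$ with $f(\lambda)<m$ for all $\left|\lambda\right|<\delta$. By $S$-additivity, $\rho_{\cA,S}(Z+\lambda Y)=f(\lambda)-m<0$, so $Z+\lambda Y\in\cA$ by the chain of inclusions, and the absorption condition holds.

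The main obstacle — really the only subtle point — is that core membership of $X_0$ in $\Dom(\rho_{\cA,S})$ provides only directional finiteness, with no a priori uniform control of $\rho_{\cA,S}$ near $X_0$; convexity lets us upgrade this directional finiteness to directional continuity via the one-dimensional continuity of finite convex functions, which is exactly what is needed so that a single translate $Z$ (one fixed choice of $m$) absorbs every direction $Y$ simultaneously.
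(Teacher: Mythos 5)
Your proposal is correct and follows essentially the same route as the paper: the easy direction via $\cA\subset\Dom(\rho_{\cA,S})$, and the converse by restricting $\rho_{\cA,S}$ to a line through the core point, using one-dimensional continuity of the resulting finite convex function to get a negative value (hence membership in $\cA$) on a whole subinterval. The only cosmetic difference is that you explicitly translate $X_0$ by $\frac{m}{S_0}S_T$ to produce the core point $Z$ of $\cA$, whereas the paper absorbs this step into a ``without loss of generality $\rho_{\cA,S}(X)<0$'' at the outset.
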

\begin{proof}
Since $\cA\subset\Dom(\rho_{\cA,S})$, it is enough to prove the ``only if'' part. Let
$X\in\Core(\Dom(\rho_{\cA,S}))$ and assume without loss of generality that
$\rho_{\cA,S}(X)<0$. Take a nonzero $Y\in\cX$ and choose $\e>0$ in such a way that
$X+\lambda Y\in\Dom(\rho_{\cA,S})$ whenever $\lambda\in (-\e,\e)$. Then
$f(\lambda):=\rho_{\cA,S}(X+\lambda Y)$ defines a real-valued function on~$(-\e,\e)$,
which must be continuous by convexity. Since $f(0)=\rho_{\cA,S}(X)<0$, it follows
that there exists $\delta>0$ such that $\rho_{\cA,S}(X+\lambda Y)=f(\lambda)<0$ for
$\lambda\in(-\delta,\delta)$ and, consequently, $X+\lambda Y\in\cA$ for all
such~$\lambda$. In conclusion,~$X\in\Core(\cA)$.
\end{proof}

\medskip

\begin{remark}
If~$\cX$ is a Fr\'{e}chet lattice and $\cA\subset\cX$ a convex acceptance set, it
follows immediately from the above result and Proposition~\ref{interiors of
acceptance} that the domain of a risk measure~$\rho_{\cA,S}$ has nonempty interior if
and only if~$\cA$ itself has nonempty interior.
\end{remark}

\medskip

The preceding remark allows us to reformulate the continuity part of Theorem~1 in
Biagini and Frittelli~\cite{BiaginiFrittelli2009} when restricted to convex risk
measures as follows: \textit{Let~$\cX$ be a Frech\'{e}t lattice, $\cA\subset\cX$ a
convex acceptance set with nonempty interior and $S=(S_0,S_T)$ a traded asset.
If~$\rho_{\cA,S}$ does not assume the value~$-\infty$, then it is continuous on the
interior of its domain.} As a consequence, the following result can be regarded as an
extended Namioka-Klee theorem for convex risk measures defined on general ordered
topological vector spaces. Note that no lattice structure is required here and the
proof is more direct.

\medskip

\begin{theorem}
\label{cont after finit}
Let $\cA\subset\cX$ be a convex acceptance set, and~$S=(S_0,S_T)$ a traded asset.
Assume~$\rho_{\cA,S}$ does not take the value~$-\infty$. The following statements are
equivalent:
\begin{enumerate}[(a)]
	\item $\Dom(\rho_{\cA,S})$ has nonempty interior and $\rho_{\cA,S}$ is
continuous on $\Interior(\Dom(\rho_{\cA,S}))$;
	\item $\Interior(\cA)$ is nonempty.
\end{enumerate}
In particular, if~$\cA$ has nonempty interior then~$\rho_{\cA,S}$ is continuous
on~$\cX$ whenever it is finitely valued.
\end{theorem}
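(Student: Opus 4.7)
The plan is to use the basic convex-analysis fact that a convex function $f\colon\cX\to(-\infty,\infty]$ which is bounded above on a neighborhood of some point of its domain is automatically continuous on the whole interior of that domain (see, e.g., Theorem~5.43 in \cite{AliprantisBorder2006}, which does not require a lattice structure). The bridge between this fact and the acceptance set is the chain of inclusions in Lemma~\ref{translation and monotonicity}(ii): $\Interior(\cA)\subset\{X\in\cX \,; \ \rho_{\cA,S}(X)<0\}\subset\cA$. The $S$-additivity proved in Lemma~\ref{translation and monotonicity}(i) will let us move points along the direction $S_T$ while staying in the (interior of the) domain.

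For the implication (b)$\Rightarrow$(a), I would pick $Z\in\Interior(\cA)$ and a neighborhood of zero $\cU$ with $Z+\cU\subset\cA$. By the chain of inclusions, $\rho_{\cA,S}(Y)\le 0$ for every $Y\in Z+\cU$, so $Z$ lies in $\Dom(\rho_{\cA,S})$ and $\rho_{\cA,S}$ is bounded above on a neighborhood of $Z$. Since $\rho_{\cA,S}$ is convex by Lemma~\ref{translation and monotonicity}(v) and never assumes $-\infty$ by hypothesis, the above convex-analysis fact delivers continuity on $\Interior(\Dom(\rho_{\cA,S}))$, and this interior is nonempty because it contains $Z$.

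For (a)$\Rightarrow$(b), I would exploit $S$-additivity to translate a chosen interior point of the domain into the open sublevel set $\{X\in\cX \,; \ \rho_{\cA,S}(X)<0\}$. Concretely, take $X_0\in\Interior(\Dom(\rho_{\cA,S}))$ with $\alpha:=\rho_{\cA,S}(X_0)\in\R$ and set $X_1:=X_0+\tfrac{\alpha+1}{S_0}S_T$, so that $\rho_{\cA,S}(X_1)=-1$ by $S$-additivity. Because the translation $Y\mapsto Y+\tfrac{\alpha+1}{S_0}S_T$ is a homeomorphism that maps $\Dom(\rho_{\cA,S})$ into itself (again by $S$-additivity), the point $X_1$ still lies in $\Interior(\Dom(\rho_{\cA,S}))$. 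Continuity of $\rho_{\cA,S}$ at $X_1$ then furnishes an open neighborhood of $X_1$ on which $\rho_{\cA,S}<0$, which by Lemma~\ref{translation and monotonicity}(ii) is contained in $\cA$, so $X_1\in\Interior(\cA)$. Finally, the ``in particular'' statement follows immediately: if $\Interior(\cA)\neq\emptyset$ and $\rho_{\cA,S}$ is finitely valued then $\Dom(\rho_{\cA,S})=\cX$, hence $\Interior(\Dom(\rho_{\cA,S}))=\cX$, and (a) gives global continuity.

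I do not anticipate a serious obstacle: the only nontrivial ingredient is the standard convex-continuity theorem on a topological vector space, which applies directly in this general setting, while everything else is bookkeeping with $S$-additivity, monotonicity, and the inclusions already recorded in Lemma~\ref{translation and monotonicity}.
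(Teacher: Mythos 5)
Your proposal is correct and follows essentially the same route as the paper: the implication (b)$\Rightarrow$(a) is verbatim the paper's argument (boundedness above by $0$ on a neighborhood of an interior point of $\cA$, then Theorem~5.43 of \cite{AliprantisBorder2006}), while for (a)$\Rightarrow$(b) you simply re-derive directly, via $S$-additivity and the chain of inclusions, what the paper obtains by citing Remark~\ref{KainaRuschendorf}(ii) (itself a consequence of Lemma~\ref{translation and monotonicity}(iv)). No gaps.
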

\begin{proof}
By Remark~\ref{KainaRuschendorf} it is enough to prove that~\textit{(b)}
implies~\textit{(a)}. Note first that the domain of~$\rho_{\cA,S}$ has nonempty
interior because it contains~$\cA$. Since~$\rho_{\cA,S}$ is bounded above by~$0$
on~$\Interior(\cA)$, we can apply Theorem~5.43 in~\cite{AliprantisBorder2006} to
obtain~\textit{(a)}.
\end{proof}

\medskip

We now focus on finiteness results in the context of convex acceptance sets with
nonempty interior. In this case, finiteness always implies continuity by
Theorem~\ref{cont after finit}. The following lemma will prove to be useful.

\medskip

\begin{lemma}
\label{halfspaces containing acceptance sets}
Let $\cA\subset\cX$ be an arbitrary acceptance set and consider a (not necessarily
continuous) linear functional $\psi:\cX\to\R$. Then~$\psi$ is positive whenever it is
bounded from below on~$\cA$.
\end{lemma}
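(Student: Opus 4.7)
The plan is to exploit the monotonicity of $\cA$ together with linearity of $\psi$, and the fact that $\cX_+$ is a cone. Suppose $\psi$ is bounded from below on $\cA$, say $\psi(X) \geq c$ for every $X \in \cA$, where $c \in \R$. By non-triviality, pick any $X_0 \in \cA$, so that $\psi(X_0) \geq c$ is a specific real number.

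Now take an arbitrary $Y \in \cX_+$. For every $\lambda \geq 0$ we have $X_0 + \lambda Y \geq X_0$, and monotonicity of $\cA$ yields $X_0 + \lambda Y \in \cA$. Applying $\psi$ and using linearity gives
\[
\psi(X_0) + \lambda\,\psi(Y) = \psi(X_0 + \lambda Y) \geq c \qquad \text{for every } \lambda \geq 0.
\]
If $\psi(Y)$ were strictly negative, the left-hand side would tend to $-\infty$ as $\lambda \to \infty$, contradicting the lower bound. Hence $\psi(Y) \geq 0$, and since $Y \in \cX_+$ was arbitrary, $\psi$ is positive.

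There is no real obstacle here: the argument does not invoke any continuity, topology, or convexity, only the two defining properties of an acceptance set (non-emptiness and monotonicity), the cone structure of $\cX_+$, and linearity of $\psi$. The only thing worth noting explicitly is that non-triviality is essential to furnish the base point $X_0$; without it the lower-bound hypothesis would be vacuous and the conclusion could fail.
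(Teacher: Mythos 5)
Your proof is correct and is essentially identical to the one in the paper: both fix a point of $\cA$, use monotonicity to conclude $X_0+\lambda Y\in\cA$ for all $\lambda\geq 0$ and $Y\in\cX_+$, and let $\lambda\to\infty$ against the lower bound to force $\psi(Y)\geq 0$. Your remark about non-triviality furnishing the base point is a fair observation, but there is no substantive difference from the paper's argument.
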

\begin{proof}
Let $X\in\cX_+$ be arbitrary and fix $Y\in\cA$. Then, by monotonicity of~$\cA$, we
have $Y+\lambda X\in\cA$ for all $\lambda\geq0$. Hence,
$\psi(Y)+\lambda\psi(X)\geq\inf_{Z\in\cA}\psi(Z)>-\infty$ for all $\lambda\geq0$,
which can only be true if $\psi(X)\geq0$.
\end{proof}

\medskip

We start by showing that if a risk measure is finitely valued in the direction of
some strictly positive element, then it is finitely valued on~$\cX$. This provides a
simple criterion for finiteness and continuity which we will use in
Proposition~\ref{risk measure u finite} in the context of shortfall risk measures.
Note that we do not require any explicit assumption on the eligible asset~$S$.

\medskip

\begin{theorem}
\label{general finiteness convex with str pos elements}
Assume~$\cX$ admits a strictly positive element $U\in\cX_+$. Let $\cA\subset\cX$ be a
convex acceptance set with nonempty interior, and~$S=(S_0,S_T)$ a traded asset.
Assume~$\rho_{\cA,S}$ does not attain the value~$-\infty$. Then~$\rho_{\cA,S}$ is
finitely valued if and only if $\rho_{\cA,S}(-\lambda U)<\infty$ for all $\lambda>0$.
In this case,~$\rho_{\cA,S}$ is also continuous.
\end{theorem}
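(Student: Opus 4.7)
The ``only if'' direction is immediate since if $\rho_{\cA,S}$ is finitely valued then in particular $\rho_{\cA,S}(-\lambda U)<\infty$ for every $\lambda>0$. The whole work lies in the converse, so I will assume that $\rho_{\cA,S}(-\lambda U)<\infty$ for all $\lambda>0$ and argue by contradiction: suppose there exists $X_0\in\cX$ with $\rho_{\cA,S}(X_0)=\infty$.

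The plan is to separate $X_0$ from $\Dom(\rho_{\cA,S})$ by a continuous linear functional and then use strict positivity of $U$. By Lemma~\ref{translation and monotonicity}\textit{(v)}, $\rho_{\cA,S}$ is convex, so $\Dom(\rho_{\cA,S})$ is a convex subset of $\cX$. Since $\rho_{\cA,S}\leq 0$ on $\Interior(\cA)$ by~\eqref{chain of inclusions}, one has $\Interior(\cA)\subset\Dom(\rho_{\cA,S})$, so the domain has nonempty interior. As $X_0\notin\Dom(\rho_{\cA,S})$, the geometric Hahn-Banach theorem produces a nonzero $\psi\in\cX'$ and $\alpha\in\R$ such that $\psi(X_0)\leq\alpha\leq\psi(Y)$ for every $Y\in\Dom(\rho_{\cA,S})$, and in particular for every $Y\in\cA$. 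Lemma~\ref{halfspaces containing acceptance sets} then forces $\psi\in\cX'_+$, and since $U$ is strictly positive and $\psi$ is nonzero, we have $\psi(U)>0$.

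Now I would exploit the hypothesis on $U$: for every $\lambda>0$ one has $-\lambda U\in\Dom(\rho_{\cA,S})$, hence $\psi(X_0)\leq\psi(-\lambda U)=-\lambda\psi(U)$. Letting $\lambda\to\infty$ drives the right-hand side to $-\infty$, which is incompatible with $\psi(X_0)\in\R$. This contradiction shows that $\rho_{\cA,S}$ is finitely valued. Continuity then follows immediately from Theorem~\ref{cont after finit}, since $\cA$ has nonempty interior by hypothesis.

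The only step that requires genuine care is verifying that the separation argument is available, i.e.\ that $\Dom(\rho_{\cA,S})$ is convex with nonempty interior and that $X_0$ lies outside it; once that is in place, the combination of Lemma~\ref{halfspaces containing acceptance sets} (to upgrade $\psi$ to a \emph{positive} functional) and the definition of strict positivity of $U$ (to guarantee $\psi(U)>0$) makes the contradiction mechanical. I do not anticipate a real obstacle beyond this bookkeeping.
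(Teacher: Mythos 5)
Your proof is correct and follows essentially the same route as the paper's: separate the point where $\rho_{\cA,S}$ is infinite from the convex domain (which has nonempty interior since it contains $\Interior(\cA)$), upgrade the separating functional to a positive one via Lemma~\ref{halfspaces containing acceptance sets}, and derive a contradiction with the strict positivity of $U$ using $\psi(-\lambda U)\to-\infty$. The paper phrases the final step as forcing $\psi(U)=0$, but this is the same contradiction.
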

\begin{proof}
We only need to prove the ``if'' part. Assume $X\notin\Dom(\rho_{\cA,S})$. Since
$\Dom(\rho_{\cA,S})$ is convex and has nonempty interior, by separation and
Lemma~\ref{halfspaces containing acceptance sets} we find a nonzero $\psi\in\cX'_+$
with $\psi(X)\leq\psi(-\lambda U)$ for all $\lambda>0$. But this implies $\psi(U)=0$,
contradicting the strict positivity of~$U$. Hence,~$\rho_{\cA,S}$ must be finitely
valued and, hence, also continuous.
\end{proof}

\medskip

\begin{remark}
\begin{enumerate}[(i)]
\item The theorem above is particularly useful when $\cX$ is an $L^p$ space, $1\le
    p\le\infty$, or an Orlicz heart, since $U:=1_\Omega$ is a strictly positive
    element in these spaces
\item Note that if the acceptance set in the preceding theorem is assumed to be
    coherent, the condition $\rho_{\cA,S}(-\lambda U)<\infty$ for all $\lambda>0$
    becomes equivalent to $\rho_{\cA,S}(-U)<\infty$ due to positive homogeneity.
\end{enumerate}
\end{remark}

\medskip

By Proposition~\ref{pseudo order unit}, for general acceptance sets we always have
finiteness if the payoff of the eligible asset is an order unit. If the acceptance
set is convex and has nonempty interior it suffices to require the payoff to be strictly positive. In Proposition~\ref{risk measure u finite} below we will show that this condition is sometimes also necessary for finiteness. Note that, in contrast to other results in this section, we do not need to require a priori that the risk measure does not attain the value~$-\infty$.

\medskip

\begin{corollary}
\label{finiteness for convex acc}
Let $\cA\subset\cX$ be a convex acceptance set with nonempty interior, and
$S=(S_0,S_T)$ a traded asset. If~$S_T$ is strictly positive, then~$\rho_{\cA,S}$ is
finitely valued and continuous.
\end{corollary}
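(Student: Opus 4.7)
The plan is to derive both conclusions from Theorem~\ref{general finiteness convex with str pos elements} applied with the strictly positive element $U := S_T$. Since that theorem already states that finiteness entails continuity, it suffices to verify its two standing hypotheses: (a) $\rho_{\cA,S}$ does not attain the value $-\infty$, and (b) $\rho_{\cA,S}(-\lambda S_T) < \infty$ for every $\lambda > 0$.

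For~(b), $S$-additivity from Lemma~\ref{translation and monotonicity}\textit{(i)} gives $\rho_{\cA,S}(-\lambda S_T) = \rho_{\cA,S}(0) + \lambda S_0$, reducing the task to showing $\rho_{\cA,S}(0) < \infty$, i.e., that some scalar multiple of $S_T$ belongs to $\cA$. Suppose, to the contrary, that the line $L := \{\mu S_T \,; \ \mu \in \R\}$ is disjoint from $\cA$. Since $\cA$ is convex with nonempty interior, geometric Hahn-Banach produces a nonzero $\psi \in \cX'$ separating $L$ from $\cA$. Boundedness of $\psi$ on the entire line $L$ forces $\psi(S_T) = 0$, and the separating inequality (after possibly replacing $\psi$ by $-\psi$) makes $\psi$ bounded from below on $\cA$. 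Lemma~\ref{halfspaces containing acceptance sets} then upgrades $\psi$ to a nonzero element of $\cX'_+$, but this contradicts the strict positivity of $S_T$, which would require $\psi(S_T) > 0$.

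For~(a), assume $\rho_{\cA,S}(X_0) = -\infty$ for some $X_0 \in \cX$; by definition of the infimum, $X_0 - n S_T \in \cA$ for arbitrarily large $n > 0$. Pick any $Z \notin \cA$, which exists because $\cA$ is proper, and separate $Z$ from $\cA$ via Hahn-Banach. After adjusting the sign and applying Lemma~\ref{halfspaces containing acceptance sets}, I obtain a nonzero $\psi \in \cX'_+$ with $\psi(Y) \geq \psi(Z)$ for all $Y \in \cA$. Strict positivity of $S_T$ yields $\psi(S_T) > 0$, so substituting $Y = X_0 - n S_T$ gives $\psi(X_0) - n\psi(S_T) \geq \psi(Z)$ for arbitrarily large $n$, a contradiction.

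The delicate step is~(b): having $\Interior(\cA) \neq \emptyset$ does not on its own force any multiple of $S_T$ into $\cA$, and only the strict positivity of $S_T$, combined with Hahn-Banach and Lemma~\ref{halfspaces containing acceptance sets}, rules out the offending separation. Step~(a) is a routine variant of the same mechanism.
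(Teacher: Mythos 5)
Your proposal is correct and follows essentially the same route as the paper's proof: establish that $\rho_{\cA,S}$ does not attain $-\infty$ by separating a non-acceptable position from $\cA$ and invoking Lemma~\ref{halfspaces containing acceptance sets} together with the strict positivity of $S_T$, then apply Theorem~\ref{general finiteness convex with str pos elements} with $U=S_T$, using $S$-additivity to reduce matters to $\rho_{\cA,S}(0)<\infty$, which is ruled out of failing by separating the line $\R S_T$ from $\cA$. The only cosmetic difference is that in the first step the paper derives $\psi(S_T)=0$ from boundedness along the line and then contradicts strict positivity, whereas you use strict positivity first and contradict unboundedness; these are the same argument.
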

\begin{proof}
First, we show that~$\rho_{\cA,S}$ never attains the value~$-\infty$. Indeed, assume
to the contrary that $\rho_{\cA,S}(X)=-\infty$ for some $X\in\cX$, and take
$Y\notin\cA$. By a standard separation argument, there exists a nonzero $\psi\in\cX'$
such that $\psi(Y)\leq\psi(X+\lambda S_T)$ for any $\lambda\in\R$. Hence, we must
have $\psi(S_T)=0$. Note that $\psi$ is positive due to Lemma~\ref{halfspaces
containing acceptance sets}. Since $S_T$ is strictly positive, $\psi(S_T)=0$ cannot
hold and we conclude that~$\rho_{\cA,S}$ does not attain the value~$-\infty$.

\smallskip

By Theorem~\ref{general finiteness convex with str pos elements} and $S$-additivity,
to conclude the proof we just need to show that $\rho_{\cA,S}(0)<\infty$. If this is
not the case, then $\R S_T\cap\cA=\emptyset$. Thus we can find a nonzero separating
functional $\varphi\in\cX'$ such that
$\lambda\varphi(S_T)\leq\varphi(X)$ for every $X\in\cA$ and $\lambda\in\R$. This
implies $\varphi(S_T)=0$, which is again in contrast to the positivity of~$\varphi$
ensured by Lemma~\ref{halfspaces containing acceptance sets}. Hence
$\rho_{\cA,S}(0)<\infty$, concluding the proof.
\end{proof}

\medskip

We now show that, when the underlying acceptance set has nonempty interior, a convex
risk measure which is finitely valued on a dense subspace is automatically finitely
valued on the whole space. This is particularly useful when dealing with risk
measures defined on~$L^p$, $1\leq p<\infty$, or on Orlicz hearts~$H^\Phi$ since,
typically, it is not difficult to establish finiteness on the dense
subspace~$L^\infty$. The result is also valid for general convex maps whose domain
has nonempty interior.

\medskip

\begin{proposition}
\label{finiteness and density}
Let $\cA\subset\cX$ be a convex acceptance set with nonempty interior, and
$S=(S_0,S_T)$ a traded asset. Assume~$\rho_{\cA,S}$ does not attain the
value~$-\infty$. If~$\rho_{\cA,S}$ is finitely valued on a dense linear
subspace~$\cS$ of~$\cX$, then~$\rho_{\cA,S}$ is finitely valued and continuous
on~$\cX$.
\end{proposition}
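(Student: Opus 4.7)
The plan is to assume for contradiction that $\rho_{\cA,S}$ is not finitely valued on $\cX$, and to extract a contradiction from the density of $\cS$ by means of a Hahn--Banach separation argument. Once finiteness is established, continuity follows immediately from Theorem~\ref{cont after finit}.

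First I would set $D:=\Dom(\rho_{\cA,S})$ and note that $D$ is convex, since $\rho_{\cA,S}$ is convex by Lemma~\ref{translation and monotonicity}~\textit{(v)}, and that $D$ has nonempty interior, since it contains $\cA$. Now suppose for contradiction that there exists $X\in\cX$ with $\rho_{\cA,S}(X)=\infty$. Then $X\notin D$, so in particular $X\notin\Interior(D)$. By the geometric form of the Hahn--Banach theorem, we obtain a nonzero continuous linear functional $\psi\in\cX'$ and a constant $c\in\R$ satisfying $\psi(Y)\leq c\leq\psi(X)$ for every $Y\in\Interior(D)$.

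Next I would extend the inequality $\psi(Y)\leq\psi(X)$ from $\Interior(D)$ to all of $D$. Fix any $Y_0\in\Interior(D)$ and an arbitrary $Y\in D$. By convexity of $D$, the point $tY_0+(1-t)Y$ lies in $\Interior(D)$ for every $t\in(0,1]$, hence
\begin{equation*}
t\,\psi(Y_0)+(1-t)\,\psi(Y) \leq \psi(X) \quad\text{for all } t\in(0,1]\,.
\end{equation*}
Letting $t\to 0^+$ yields $\psi(Y)\leq\psi(X)$ for every $Y\in D$. Since $\rho_{\cA,S}$ is finitely valued on $\cS$, we have $\cS\subset D$, and therefore $\psi(Y)\leq\psi(X)$ for every $Y\in\cS$.

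The conclusion will then follow from density: for any $Z\in\cX$ there exists a net $(Y_\alpha)\subset\cS$ converging to $Z$, and continuity of $\psi$ gives $\psi(Z)=\lim_\alpha\psi(Y_\alpha)\leq\psi(X)$. Hence $\psi$ is bounded above on the entire vector space $\cX$, which forces $\psi\equiv 0$ (otherwise $\psi(nZ)=n\psi(Z)\to\infty$ for any $Z$ with $\psi(Z)>0$), contradicting the nontriviality of $\psi$. Therefore $\Dom(\rho_{\cA,S})=\cX$, and Theorem~\ref{cont after finit} then delivers continuity on $\cX$. The main delicate step is the separation: one must separate $X$ from $\Interior(D)$ rather than from $D$ itself (since $X$ could a priori lie on the boundary of $D$), and then bootstrap to $D$ via the standard convex-combination trick before invoking density.
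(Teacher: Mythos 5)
Your proof is correct and follows essentially the same route as the paper's: separate the point of non-finiteness from the convex domain (which has nonempty interior because it contains $\cA$), restrict the resulting inequality to the dense subspace $\cS$, and use linearity together with density to force $\psi=0$, a contradiction. The only cosmetic differences are that you spell out the separation step (separating from $\Interior(D)$ and bootstrapping to $D$ via the convex-combination trick) and that you kill $\psi$ on all of $\cX$ after invoking density, whereas the paper first notes $\psi$ annihilates the linear subspace $\cS$ and then passes to $\cX$.
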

\begin{proof}
Assume $X\notin\Dom(\rho_{\cA,S})$. Since the domain of~$\rho_{\cA,S}$ is convex and
contains~$\cA$, by separation we find a nonzero $\psi\in\cX'$ such that
$\psi(X)\leq\psi(Y)$ for all $Y\in\cS$. But this implies~$\psi$ must annihilate~$\cS$
and hence, by density, the whole space~$\cX$. Therefore,~$\rho_{\cA,S}$ must be
finitely valued, hence continuous, on the whole of~$\cX$.
\end{proof}

%%%%%%%%%%%%%%%%%%%%%%%%%%%%%%%%%%%%%%%%%%%%%%%%%%%%%%

\subsection{Conic and coherent acceptance sets}
\label{section conic coherent}

In this section, we focus our analysis on conic and coherent acceptance sets. We
start with the main result characterizing the range of eligible assets for which a
coherent risk measure is finitely valued, respectively continuous. We will apply this
result to risk measures based on $\TVaR$-acceptability in~$L^p$ spaces in
Section~\ref{applications tvar}. Note that, if $\cA\subset\cX$ is a coherent
acceptance set, the relation $X\leq_\cA Y$ defined by $Y-X\in\cA$ is a pre-ordering
on~$\cX$ with positive cone~$\cA$. Note that $\leq_\cA$ is not an ordering
unless~$\cA$ is pointed.

\medskip

\begin{theorem}
\label{finite coherent}
Assume $\cA\subset\cX$ is a coherent acceptance set and let $S=(S_0,S_T)$ be a traded
asset.
\begin{enumerate}[(i)]
\item The following statements are equivalent:
\begin{enumerate}[(a)]
	\item $\rho_{\cA,S}$ is finitely valued;
	\item $S_T\in\Core(\cA)$.
\end{enumerate}
\item The following statements are equivalent:
\begin{enumerate}[(a)]
	\item $\rho_{\cA,S}$ is continuous on~$\cX$;
	\item $\rho_{\cA,S}$ is continuous at~$0$;
	\item $S_T\in\Interior(\cA)$.
\end{enumerate}
\end{enumerate}
Moreover, if $\cX$ is an ordered normed space, then~$\rho_{\cA,S}$ is Lipschitz
continuous whenever $S_T\in\Interior(\cA)$.
\end{theorem}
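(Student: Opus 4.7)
The strategy is to exploit Remark~\ref{pre-order remark} and apply Proposition~\ref{pseudo order unit} with respect to the pre-ordering $\leq_\cA$ induced by the coherent cone $\cA$ itself, rather than with respect to the original order on $\cX$. Since $\cA$ is a convex cone, $\cA$ is monotone under $\leq_\cA$ (if $X\in\cA$ and $Y-X\in\cA$ then $Y=X+(Y-X)\in\cA$), and the definition of $\rho_{\cA,S}$ is unchanged under this relabeling. I will also repeatedly use that $\rho_{\cA,S}$ is convex and positively homogeneous by Lemma~\ref{translation and monotonicity}(v); in particular, positive homogeneity forces $\rho_{\cA,S}(0)\in\{-\infty,0,\infty\}$, so whenever $\rho_{\cA,S}$ is finitely valued one has $\rho_{\cA,S}(0)=0$.

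For part~(i), the implication $S_T\in\Core(\cA)\Rightarrow\rho_{\cA,S}$ finite is immediate from Proposition~\ref{pseudo order unit}(i) applied to the pre-ordering $\leq_\cA$. The converse is the main technical step. Assuming $\rho_{\cA,S}$ finite, $S$-additivity yields $\rho_{\cA,S}(S_T)=\rho_{\cA,S}(0)-S_0=-S_0<0$, so the chain of inclusions~\eqref{chain of inclusions} already places $S_T$ in $\cA$. To upgrade this to $S_T\in\Core(\cA)$, I fix $Y\in\cX$ and observe via positive homogeneity that $\rho_{\cA,S}(\lambda Y)\to 0$ as $\lambda\to 0$ (using the values at $Y$ and $-Y$ for the two one-sided limits). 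Then $S$-additivity gives $\rho_{\cA,S}(S_T+\lambda Y)=\rho_{\cA,S}(\lambda Y)-S_0<0$ for all sufficiently small $|\lambda|$, and~\eqref{chain of inclusions} delivers $S_T+\lambda Y\in\cA$, which is exactly the defining property of the core.

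Part~(ii) is then straightforward. The implication (a)$\Rightarrow$(b) is trivial. For (b)$\Rightarrow$(c), continuity at $0$ forces $\rho_{\cA,S}(0)=0$, and Lemma~\ref{translation and monotonicity}(iv) applied at $X=0$ puts $\tfrac{m}{S_0}S_T$ in $\Interior(\cA)$ for every $m>0$; taking $m=S_0$ gives (c). For (c)$\Rightarrow$(a) I apply Proposition~\ref{pseudo order unit}(ii) once more to $\leq_\cA$: the hypothesis $S_T\in\Interior(\cA)$ is precisely that $S_T$ is a topological interior point of the new positive cone, so that proposition yields finiteness and global continuity of $\rho_{\cA,S}$ on $\cX$.

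Finally, the Lipschitz addendum follows from the normed-space part of Proposition~\ref{pseudo order unit} applied, again, with $\leq_\cA$: it produces a constant $\lambda>0$ such that $\lambda\|X\|S_T-X\in\cA$ for every nonzero $X\in\cX$. Replacing $X$ by $X-Y$ gives $X\leq_\cA Y+\lambda\|X-Y\|S_T$; then the $\leq_\cA$-monotonicity of $\rho_{\cA,S}$ combined with $S$-additivity yields $\rho_{\cA,S}(Y)-\rho_{\cA,S}(X)\leq\lambda S_0\|X-Y\|$, and swapping the roles of $X$ and $Y$ produces the Lipschitz bound. The main obstacle I anticipate is the reverse direction of part~(i), where one must extract the core property of $S_T$ from a bare finiteness hypothesis; everything else reduces to bookkeeping once the passage to the pre-ordering $\leq_\cA$ is in place.
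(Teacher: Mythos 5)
Your proof is correct and follows essentially the same route as the paper: the pre-order $\leq_\cA$ together with Proposition~\ref{pseudo order unit} handles (i)(b)$\Rightarrow$(a), (ii)(c)$\Rightarrow$(a) and the Lipschitz claim, and Lemma~\ref{translation and monotonicity}(iv) handles (ii)(b)$\Rightarrow$(c). The only divergence is (i)(a)$\Rightarrow$(b), which the paper proves contrapositively from the conicity of $\cA$ (if $S_T+\lambda_n X\notin\cA$ with $\lambda_n\downarrow 0$, then $X+\lambda_n^{-1}S_T\notin\cA$ and so $\rho_{\cA,S}(X)=\infty$), whereas you argue directly via positive homogeneity, $S$-additivity and~\eqref{chain of inclusions}; both rest on the same homogeneity and are equally valid.
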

\begin{proof}
\textit{(i)} Note that~\textit{(b)} is equivalent to~$S_T$ being an order unit with
respect to~$\leq_\cA$. Hence,~\textit{(b)} implies~\textit{(a)} by
Remark~\ref{pre-order remark}. To prove the converse assume~$\rho_{\cA,S}$ is
finitely valued but $S_T\notin\Core(\cA)$. Then we can find $X\in\cX$ such that
$S_T+\lambda_n X\notin\cA$ for a suitable sequence~$(\lambda_n)$ of strictly positive
numbers converging to zero. Equivalently, $X+\frac{1}{\lambda_n}S_T\notin\cA$ for
every~$n\in\N$, implying $\rho_{\cA,S}(X)=\infty$.

\smallskip

\textit{(ii)} Clearly~\textit{(a)} implies~\textit{(b)}. If~$\rho_{\cA,S}$ is
continuous at~$0$, then for every $m>0\geq\rho_{\cA,S}(0)$ we have
$\frac{m}{S_0}S_T\in\Interior(\cA)$ by Lemma~\ref{translation and monotonicity}.
Taking $m:=S_0$ we see that $S_T\in\Interior(\cA)$, proving that~\textit{(b)}
implies~\textit{(c)}. Finally, if~\textit{(c)} holds then we can again pass to the
induced pre-ordering~$\leq_\cA$ and refer to Remark~\ref{pre-order remark} to
conclude the proof.
\end{proof}

\medskip

We conclude this section with a finiteness result in the context of conic acceptance
sets, omitting the easy proof. This result will be of practical importance in the
context of risk measures based on $\VaR$-acceptability in~$L^p$ spaces treated
in Section~\ref{applications var}.

\medskip

\begin{proposition}
\label{compact notation cones}
Assume $\cA\subset\cX$ is a conic acceptance set and let $S=(S_0,S_T)$ be a traded
asset. The following statements hold:
\begin{enumerate}[(i)]
  \item $\rho_{\cA,S}<\infty$ if and only if $S_T\in\Core(\cA)$;
	\item $\rho_{\cA,S}>-\infty$ if and only if $-S_T\in\Core(\cA^c)$.
\end{enumerate}
In particular, if~$\rho_{\cA,S}$ is finitely valued then~$\Core(\cA)$ is nonempty.
\end{proposition}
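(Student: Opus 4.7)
The plan is to exploit two basic features of a conic acceptance set $\cA$. First, conicity yields the rescaling equivalence
\[
X+(m/S_0)S_T\in\cA \iff \lambda X+(\lambda m/S_0)S_T\in\cA\quad(\lambda>0),
\]
together with its contrapositive in which ``$\in$'' is replaced by ``$\notin$''. Second, monotonicity together with $S_T\in\cX_+$ makes $\{m\in\R:X+(m/S_0)S_T\in\cA\}$ an up-closed subset of $\R$, so that $m>\rho_{\cA,S}(X)$ forces $X+(m/S_0)S_T\in\cA$ and $m<\rho_{\cA,S}(X)$ forces $X+(m/S_0)S_T\notin\cA$. I shall also use repeatedly that $S_T\in\cA$, which follows from $0\in\cA$, $S_T\geq 0$, and monotonicity.

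For (i), the direction $S_T\in\Core(\cA)\Rightarrow\rho_{\cA,S}<\infty$ is immediate: given $X\in\cX$, the core hypothesis provides $\e>0$ with $S_T+\lambda X\in\cA$ for $|\lambda|<\e$, and rescaling any $\lambda\in(0,\e)$ by $1/\lambda$ yields $X+(1/\lambda)S_T\in\cA$, hence $\rho_{\cA,S}(X)\leq S_0/\lambda$. For the converse, fix a direction $Y\in\cX$. Using $\rho_{\cA,S}(\pm Y)<\infty$ and up-closedness, I select a single $m_0>0$ such that $\pm Y+(m/S_0)S_T\in\cA$ for every $m\geq m_0$, and then rescale by $S_0/m$ to obtain $S_T\pm(S_0/m)Y\in\cA$ for all such $m$; letting $\mu:=S_0/m$ sweep through $(0,S_0/m_0]$ and recalling $S_T\in\cA$, this covers $S_T+\lambda Y\in\cA$ for all $|\lambda|<S_0/m_0$, so $S_T\in\Core(\cA)$.

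Part (ii) follows the same template applied to $\cA^c$ via the contrapositive rescaling identity. For ``$\Leftarrow$'', the core hypothesis gives $\e>0$ with $-S_T+\lambda X\notin\cA$ for $|\lambda|<\e$; rescaling $\lambda\in(0,\e)$ by $1/\lambda$ yields $X-(1/\lambda)S_T\notin\cA$, and up-closedness forces $\rho_{\cA,S}(X)\geq-S_0/\lambda>-\infty$. For ``$\Rightarrow$'', boundedness from below of $\rho_{\cA,S}(0)$ provides some negative $m$ with $(m/S_0)S_T\notin\cA$, which rescales to $-S_T\notin\cA$; for an arbitrary direction $Y$, the equivalence $-S_T+\lambda Y\notin\cA\iff Y-(1/\lambda)S_T\notin\cA$ (for $\lambda>0$), combined with $\rho_{\cA,S}(Y)>-\infty$ and up-closedness, delivers non-membership for all sufficiently small $\lambda>0$, and the analogous argument with $-Y$ handles $\lambda<0$. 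The ``in particular'' claim is immediate from (i).

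The chief subtlety is the converse in (i): it is not enough to produce one admissible $\lambda$, one must produce a full interval of them. The resolution is to let monotonicity supply an entire ray $[m_0,\infty)$ of admissible scaling parameters, whose image under $m\mapsto S_0/m$ is an interval $(0,S_0/m_0]$, thereby yielding the required two-sided neighborhood of $0$ around $S_T$ inside $\cA$.
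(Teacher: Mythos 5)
Your proof is correct. The paper itself omits the proof of this proposition as ``easy'', but your argument -- rescaling $X+(m/S_0)S_T$ by a positive factor to trade membership of $S_T+\lambda X$ in $\cA$ (or $\cA^c$) against membership of $X+(1/\lambda)S_T$, combined with the up-closedness of the admissible set of $m$'s -- is exactly the technique the authors use for the analogous coherent case in Theorem~\ref{finite coherent}, and you correctly handle the one genuine subtlety, namely producing a full two-sided interval of admissible $\lambda$'s in the converse directions.
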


%%%%%%%%%%%%%%%%%%%%%%%%%%%%%%%%%%%%%%%%%%%%%%%%%%%%%%%%%%%%%%%%%%%%%%%

\section{Applications}
\label{section applications}

We now apply our previous results to provide complete characterizations of finiteness
and continuity for risk measures on~$L^p$ spaces based on the two most prominent
acceptability criteria in practice: Value-at-Risk and Tail Value-at-Risk. We also
provide a treatment of shortfall risk measures on Orlicz spaces arising from utility
functions. Throughout this entire section we maintain the assumption that
$(\Omega,\cF,\probp)$ is a nonatomic probability space.

%%%%%%%%%%%%%%%%%%%%%%%%%%%%%%%%%%%%%%%%%%%%%%%%%%

\subsection{Acceptability based on Value-at-Risk}
\label{applications var}

In this subsection we work in the setting of~$\cX=L^p$ for a fixed $0\leq p<\infty$.
The case $p=\infty$ is analogous to the case where~$\cX$ is the space of bounded
measurable functions for which we refer to~\cite{FarkasKochMunari2012}.

\medskip

For $\alpha\in(0,1)$ the {\em Value-at-Risk} of $X\in L^p$ at the level~$\alpha$ is
defined as
\begin{equation*}
\label{def var}
\VaR_\alpha(X):=\inf\{m\in\R \,; \ \probp(X+m<0)\le\alpha\}\,.
\end{equation*}

\medskip

The set
\begin{equation*}
\cA_{\alpha}:=\{X\in L^p \,; \ \VaR_\alpha(X)\le 0\}=\{X\in L^p \,; \
\probp(X<0)\le\alpha\}
\end{equation*}
is a conic acceptance set which is not convex and which is well known to be closed,
see for instance Theorem~3 in~\cite{Chambers2009}. The following lemma describes the
interior of~$\cA_\alpha$.

\medskip

\begin{lemma}
\label{lemma core var}
The acceptance set~$\cA_\alpha$ has nonempty interior in~$L^p$. Moreover,
\begin{equation}
\label{core var Lp}
\Interior(\cA_\alpha)=\{X\in L^p \,; \ \probp(X\leq0)<\alpha\}\,.
\end{equation}
In particular, for $S_T\in L^p_+$ we have $S_T\in\Interior(\cA_\alpha)$ if and only
if $\probp(S_T=0)<\alpha$.
\end{lemma}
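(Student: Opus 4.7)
The goal is to identify $\Interior(\cA_\alpha)$ with $\{X \in L^p \,;\, \probp(X \le 0) < \alpha\}$; once this is done, nonemptiness is automatic because $1_\Omega \in L^p$ satisfies $\probp(1_\Omega \le 0) = 0 < \alpha$, and the final equivalence for $S_T \in L^p_+$ is immediate from $\{S_T \le 0\} = \{S_T = 0\}$. So the substantive work is the double inclusion, and I would handle the two directions separately, the reverse direction being the delicate one because of a boundary case.

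For the inclusion $\supset$, fix $X$ with $\probp(X \le 0) < \alpha$. Continuity of $\probp$ along the decreasing family $\{X < \e\} \downarrow \{X \le 0\}$ produces $\e > 0$ and $\delta > 0$ with $\probp(X < \e) + \delta \le \alpha$. I would then exploit the set-theoretic inclusion $\{Y < 0\} \subset \{X < \e\} \cup \{|X - Y| \geq \e\}$, which holds because $Y < 0$ together with $X \geq \e$ forces $X - Y > \e$. Any $Y$ with $\probp(|X - Y| \geq \e) \le \delta$ therefore lies in $\cA_\alpha$, and such $Y$ form a neighborhood of $X$: for $1 \le p < \infty$ this follows from $\probp(|X - Y| \geq \e) \le \|X - Y\|_p^p / \e^p$, for $0 < p < 1$ from the analogous Chebyshev bound against the natural metric $\int |X - Y|^p \, d\probp$, and for $p = 0$ directly from the definition of convergence in probability.

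For the converse $\subset$, I would argue by contraposition and split on whether $\probp(X \le 0) \geq \alpha$ is strict. If $\probp(X \le 0) > \alpha$, the shifts $Y := X - \e 1_\Omega$ satisfy $\probp(Y < 0) = \probp(X < \e) \geq \probp(X \le 0) > \alpha$ for every $\e > 0$, placing $X$ on the boundary of $\cA_\alpha$. The genuine difficulty is the boundary case $\probp(X \le 0) = \alpha$, where a global shift only yields $\probp(Y < 0) \geq \alpha$ and cannot in general force strict inequality. Here I would use that $\alpha < 1$ implies $\probp(X > 0) = 1 - \alpha > 0$ and invoke nonatomicity to select $B \subset \{X > 0\}$ of arbitrarily small strictly positive measure; the perturbation $Y := X - (X + \delta) 1_B - \delta 1_{\{X = 0\}}$ collapses both $B$ and $\{X = 0\}$ to the value $-\delta$ while leaving $X$ unchanged on the complement, so $\{Y < 0\} = B \cup \{X \le 0\}$ and hence $\probp(Y < 0) = \probp(B) + \alpha > \alpha$. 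Smallness of $Y - X$ in the relevant topology comes from choosing $\delta$ small and shrinking $\probp(B)$: for $p \geq 1$ this uses absolute continuity of the integral $\int_B X^p \, d\probp$ with respect to $\probp(B)$; for $0 < p < 1$ the same argument applies to $\int |Y - X|^p \, d\probp$; for $p = 0$ one notes that $|Y - X|$ is at most $\delta$ outside $B$ and concludes $\probp(|Y - X| \geq \e) \le \probp(B) \to 0$ for $\delta < \e$.

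The main obstacle is precisely this boundary case $\probp(X \le 0) = \alpha$, which explains why naive deterministic translations do not detect the interior and why the nonatomicity hypothesis on $(\Omega,\cF,\probp)$ enters the argument essentially.
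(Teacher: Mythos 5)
Your proof is correct, and it takes a genuinely different route from the paper's. The paper first invokes its Proposition on Fr\'{e}chet lattices to identify $\Interior(\cA_\alpha)$ with $\Core(\cA_\alpha)$ and then argues purely algebraically: for the inclusion $\supset$ it shows that $X\notin\Core(\cA_\alpha)$ yields $Z\in L^p_+$ and $\lambda_n\downarrow 0$ with $\probp(X<\lambda_n Z)>\alpha$, forcing $\probp(X\le 0)\ge\alpha$; for $\subset$ it builds, from a disjoint sequence $A_n\subset\{0<X<\e\}$ with $\probp(A_n)<n^{-p-2}$, a single direction $Z:=1_{\{X\le 0\}}+\sum_n n1_{A_n}$ along which every scaling $\lambda Z$ pushes $X$ out of $\cA_\alpha$, contradicting membership in the core. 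You instead work directly with the metric topology: a Chebyshev-type estimate shows the sublevel condition is stable under small perturbations (giving $\supset$), and an explicit two-piece perturbation $Y=X-(X+\delta)1_B-\delta 1_{\{X=0\}}$, with $B\subset\{X>0\}$ of small measure supplied by nonatomicity, witnesses failure of interiority in the boundary case $\probp(X\le 0)=\alpha$ (giving $\subset$). Your approach has the virtue of being self-contained --- it does not rely on the core-equals-interior result, whose proof (adapted from Cheridito--Li) is itself nontrivial --- at the cost of a case analysis over $p=0$, $0<p<1$, and $p\ge 1$ for the choice of metric; the paper's argument is uniform in $p$ and shorter once that proposition is in place, and its reverse-inclusion construction packages your ``arbitrarily small $B$'' idea into one fixed direction $Z$, which is exactly what the core formulation requires. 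Both proofs correctly isolate nonatomicity as the essential hypothesis for the hard inclusion.
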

\begin{proof}
To prove~\eqref{core var Lp} first recall that, by Proposition~\ref{interiors of
acceptance}, the core and the interior of any acceptance set in~$L^p$ coincide. Take
now $X\in L^p$ with $\probp(X\leq0)<\alpha$. If $X\notin\Core(\cA_\alpha)$, then we
can find $Z\in L^p_+$ and $\lambda_n\downarrow 0$ such that $\probp(X<\lambda_n
Z)>\alpha$, implying $\probp(X\leq0)\geq\alpha$. But this contradicts what assumed
above, hence~$X$ must belong to~$\Core(\cA_\alpha)$.

\smallskip

To prove the converse inclusion take $X\in\Core(\cA_ \alpha)$ and assume
$\probp(X\leq0)\geq\alpha$. Since $X\in\cA_\alpha$, we have $\probp(X>0)>0$ and thus
$\probp(0<X<\e)>0$ for some $\e>0$. Therefore, we find a sequence~$(A_n)$ of pairwise
disjoint measurable subsets of $\{0<X<\e\}$ with $0<\probp(A_n)<n^{-p-2}$. Setting
$Z:=1_{\{X\leq 0\}}+\sum_n n1_{A_n}\in L^p_+$ it is easy to see that for every
$\lambda>0$ there exists a positive integer~$n$ for which $\probp(X<\lambda
Z)\geq\probp(X\leq0)+\probp(A_n)>\alpha$. But this contradicts
$X\in\Core(\cA_\alpha)$, hence~\eqref{core var Lp} must hold.
\end{proof}

\medskip

Let $S=(S_0,S_T)$ be a traded asset. The corresponding risk measure based on
$\VaR$-acceptability is
\begin{equation*}
\rho_{\cA_\alpha,S}(X)=\inf\left\{m\in\R \,; \
\probp\left(X+\frac{m}{S_0}S_T<0\right)\leq\alpha\right\}\,.
\end{equation*}

\medskip

The following proposition provides a characterization of the finiteness
of~$\rho_{\cA_\alpha,S}$ and shows that risk measures based on $\VaR$-acceptability
can never be globally continuous, regardless of the choice of the eligible asset.

\medskip

\begin{proposition}
\label{finiteness and cont var}
Let $S=(S_0,S_T)$ be a traded asset. The following statements are equivalent:
\begin{enumerate}[(a)]
	\item $\rho_{\cA_\alpha,S}$ is finitely valued on~$L^p$;
	\item $\probp(S_T=0)<\min\{\alpha,1-\alpha\}$.
\end{enumerate}
Moreover,~$\rho_{\cA_\alpha,S}$ is never globally continuous on~$L^p$.
\end{proposition}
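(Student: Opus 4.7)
The plan is to derive the equivalence (a)$\Leftrightarrow$(b) from the conic-acceptance-set machinery of Proposition~\ref{compact notation cones}, and then to refute global continuity by exhibiting an element of $\{X\in L^p\,;\,\rho_{\cA_\alpha,S}(X)<0\}\setminus\Interior(\cA_\alpha)$, which by Remark~\ref{KainaRuschendorf}(i) is incompatible with global upper semicontinuity.

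For the equivalence I first observe that $\cA_\alpha$ is a conic acceptance set, so Proposition~\ref{compact notation cones} reduces finiteness of $\rho_{\cA_\alpha,S}$ to the pair of conditions $S_T\in\Core(\cA_\alpha)$ and $-S_T\in\Core(\cA_\alpha^c)$, which I will identify explicitly. Since $L^p$ is a Fr\'echet lattice, Proposition~\ref{interiors of acceptance}(i) gives $\Core(\cA_\alpha)=\Interior(\cA_\alpha)$, and Lemma~\ref{lemma core var} translates the first condition into $\probp(S_T=0)<\alpha$. For the complement, closedness of $\cA_\alpha$ makes $\cA_\alpha^c$ open, so every point of $\cA_\alpha^c$ is automatically in its core; thus $-S_T\in\Core(\cA_\alpha^c)$ is just $\probp(S_T>0)>\alpha$, i.e.\ $\probp(S_T=0)<1-\alpha$. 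Conjoining the two conditions yields~(b).

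For the failure of continuity it suffices to treat the case in which~(b) holds, since otherwise $\rho_{\cA_\alpha,S}$ already fails to be finitely valued. By Remark~\ref{KainaRuschendorf}(i), global continuity would force $\Interior(\cA_\alpha)=\{X\in L^p\,;\,\rho_{\cA_\alpha,S}(X)<0\}$. I will violate this via an explicit construction. Since $S_T\in L^p$ is a.s.\ finite, I can choose $n\in\N$ with $\probp(S_T>n)<\alpha$; by nonatomicity of $\probp$ I then enlarge $\{S_T>n\}$ to a measurable set $A$ with $\probp(A)=\alpha$, and set $X:=-1_A+1_{A^c}\in L^\infty\subset L^p$. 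Then $\probp(X\leq 0)=\alpha$ gives $X\notin\Interior(\cA_\alpha)$ via Lemma~\ref{lemma core var}, while inspecting the event $\{X-S_T/n<0\}$ on $A$ (where $-1-S_T/n<0$ trivially) and on $A^c$ (where $S_T\leq n$ yields $1-S_T/n\geq 0$) shows $\probp(X-S_T/n<0)=\alpha$, hence $X-S_T/n\in\cA_\alpha$ and so $\rho_{\cA_\alpha,S}(X)\leq -S_0/n<0$, contradicting the forced equality.

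The main obstacle is arranging $A$ so that $A\supseteq\{S_T>n\}$ while simultaneously $\probp(A)=\alpha$; without the containment, the term $\probp(\{X-S_T/n<0\}\cap A^c)$ would be strictly positive and we would no longer land in $\cA_\alpha$. This is exactly where almost sure finiteness of $S_T$ (giving vanishing upper tails) and nonatomicity of $\probp$ (allowing the exact enlargement) combine to make the counterexample work.
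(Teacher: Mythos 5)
Your proof is correct and follows essentially the same route as the paper: Proposition~\ref{compact notation cones} combined with Lemma~\ref{lemma core var} for finiteness, and an explicit counterexample to upper semicontinuity for the last claim. Two minor differences: for the ``$\rho>-\infty$'' half you exploit that $\cA_\alpha^c$ is open (so its core is itself), whereas the paper argues directly that $\probp(X<nS_T)>\alpha$ for large $n$ when $\probp(S_T=0)<1-\alpha$; and your witness $X=-1_A+1_{A^c}$ with $A\supseteq\{S_T>n\}$ differs from the paper's $X=-(S_T+\e)1_A$, though both accomplish the same thing via Lemma~\ref{translation and monotonicity} and Remark~\ref{KainaRuschendorf}. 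One small quibble: your opening reduction ``it suffices to treat the case in which (b) holds, since otherwise $\rho_{\cA_\alpha,S}$ fails to be finitely valued'' tacitly assumes that a non-finitely-valued risk measure cannot be globally continuous, which under the paper's $\e$-neighborhood definition of semicontinuity does hold but would need a short connectedness argument; fortunately the reduction is also unnecessary, since your construction nowhere uses (b) and works unconditionally.
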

\begin{proof}
To characterize finiteness, by Proposition~\ref{compact notation cones} and
Lemma~\ref{lemma core var} we only need to show that~$\rho_{\cA_\alpha,S}$ never
attains the value~$-\infty$ if and only if $\probp(S_T=0)<1-\alpha$. If
$\probp(S_T=0)\ge 1-\alpha$, then clearly $\rho_{\cA_\alpha,S}(0)=-\infty$. On the
other side, assume $\probp(S_T=0)<1-\alpha$. Note that for any $X\in L^p$ we have
$\probp(\{X<nS_T\}\cap\{S_T>0\})\to\probp(S_T>0)$ as $n\to\infty$, implying
$\probp(X<nS_T)>\alpha$ for large enough~$n\in\N$. Hence it follows that
$\rho_{\cA_\alpha,S}(X)>-\infty$.

\smallskip

To show that~$\rho_{\cA_\alpha,S}$ is never continuous on the whole~$L^p$, take
$\e>0$ and a measurable set with $\probp(A)=\alpha$, and set $X:=-(S_T+\e)1_A\in
L^p$. Note that $\probp(X<0)=\alpha$ and $\probp(X+S_T\leq0)\geq\alpha$. As a result,
$\rho_{\cA_\alpha,S}(X) \leq 0 < S_0 \leq \rho_{\Interior(\cA_\alpha),S}(X)$ and
Lemma~\ref{translation and monotonicity} implies that~$\rho_{\cA_\alpha,S}$ cannot be
(upper semi)continuous at~$X$.
\end{proof}

%%%%%%%%%%%%%%%%%%%%%%%%%%%%%%%%%%%%%%%%%

\subsection{Acceptability based on Tail Value-at-Risk}
\label{applications tvar}

We continue to work on~$\cX= L^p$ for a fixed $1\leq p<\infty$. As for Value-at-Risk,
the case $p=\infty$ can be treated similarly to the case of bounded measurable
functions which can be found in~\cite{FarkasKochMunari2012}.

\medskip

Fix $\alpha\in(0,1)$. The {\em Tail Value-at-Risk} of $X\in L^p$ at the
level~$\alpha$ is defined as
\begin{equation*}
\TVaR_\alpha(X):=\frac{1}{\alpha} \int_0^\alpha\VaR_\beta(X)d\beta\,.
\end{equation*}
It is well known that~$\TVaR_\alpha$ is cash additive and Lipschitz continuous
on~$L^1$ and, therefore, also on~$L^p$. Therefore, the set
\begin{equation*}
\cA^{\alpha}:=\{X\in L^p \,; \ \TVaR_\alpha(X)\le 0\}
\end{equation*}
is a closed, coherent acceptance set which has nonempty interior. Moreover, note that
$\cA^\alpha\subset\cA_{\alpha}$.

\medskip

\begin{lemma}
\label{lemma core tvar}
The following holds:
\begin{equation}
\label{core tvar Lp}
\Interior(\cA^\alpha)=\{X\in L^p \,; \ \TVaR_\alpha(X)<0\}\subset\{X\in L^p \,; \
\probp(X\leq0)<\alpha\}\,.
\end{equation}
For $S_T\in L^p_+$ we have $S_T\in\Interior(\cA^\alpha)$ if and only if
$\probp(S_T=0)<\alpha$.
\end{lemma}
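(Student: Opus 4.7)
The plan is to argue the three pieces of the lemma in sequence, leveraging the continuity and cash additivity of $\TVaR_\alpha$ on $L^p$ (recorded just above the statement) and a direct quantile computation for $S_T$.

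First I would establish the equality $\Interior(\cA^\alpha)=\{X\in L^p \,;\, \TVaR_\alpha(X)<0\}$. Since $\TVaR_\alpha:L^p\to\R$ is continuous, the strict sublevel set $\{\TVaR_\alpha<0\}$ is open and contained in $\cA^\alpha$, hence in $\Interior(\cA^\alpha)$. For the reverse inclusion, if $X\in\Interior(\cA^\alpha)$, then using that $1_\Omega\in L^p$ (finite measure) I can find $c>0$ with $X-c\,1_\Omega\in\cA^\alpha$; cash additivity of $\TVaR_\alpha$ then gives $\TVaR_\alpha(X)+c\le0$, so $\TVaR_\alpha(X)<0$.

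Next I would prove the inclusion $\{\TVaR_\alpha(X)<0\}\subset\{\probp(X\le0)<\alpha\}$ by contraposition. Assuming $\probp(X\le0)\ge\alpha$, for any $\beta\in(0,\alpha)$ and any $m<0$ one has $\probp(X+m<0)\ge\probp(X\le0)\ge\alpha>\beta$, so no negative $m$ lies in the defining set of $\VaR_\beta(X)$, giving $\VaR_\beta(X)\ge0$. Integrating over $\beta\in(0,\alpha)$ yields $\TVaR_\alpha(X)\ge0$.

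For the final equivalence applied to $S_T\in L^p_+$, the forward direction is immediate: $S_T\in\Interior(\cA^\alpha)$ implies $\probp(S_T\le0)<\alpha$ by the inclusion just proved, and $S_T\ge0$ turns this into $\probp(S_T=0)<\alpha$. For the converse, assuming $\probp(S_T=0)<\alpha$, I would compute $\VaR_\beta(S_T)=-\sup\{t\ge 0 \,;\, \probp(S_T<t)\le\beta\}$. The function $t\mapsto\probp(S_T<t)$ is nondecreasing in $t$ and, by continuity of measure along $\{S_T<t\}\downarrow\{S_T=0\}$, decreases to $\probp(S_T=0)$ as $t\downarrow0$. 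Therefore for every $\beta\in(\probp(S_T=0),\alpha)$ there exists $t>0$ with $\probp(S_T<t)\le\beta$, giving $\VaR_\beta(S_T)<0$. Since also $\VaR_\beta(S_T)\le0$ for every $\beta$ (because $\probp(S_T<0)=0$), and the subinterval $(\probp(S_T=0),\alpha)$ has positive Lebesgue measure, the defining integral yields $\TVaR_\alpha(S_T)<0$, hence $S_T\in\Interior(\cA^\alpha)$ by the first step.

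The most delicate step is the converse in the final paragraph: passing from the pointwise condition $\probp(S_T=0)<\alpha$ to a strict negativity of the averaged quantile requires both the left-continuity argument to place the cutoff of $\VaR_\beta(S_T)$ strictly below $0$ on a set of positive measure, and the observation that $\VaR_\beta(S_T)\le 0$ on the remainder of $(0,\alpha)$; everything else is a routine consequence of the continuity and cash additivity of $\TVaR_\alpha$.
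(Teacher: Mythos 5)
Your proof is correct and follows essentially the same route as the paper: continuity (and cash additivity) of $\TVaR_\alpha$ yields $\Interior(\cA^\alpha)=\{\TVaR_\alpha<0\}$, and the converse of the final equivalence rests on the same quantile computation (choosing $t>0$ with $\probp(S_T<t)<\alpha$ so that $\VaR_\beta(S_T)<0$ on a subinterval of $(0,\alpha)$ of positive measure, with $\VaR_\beta(S_T)\le0$ elsewhere by positivity). The only cosmetic difference is that you prove the middle inclusion by a direct contrapositive argument on $\VaR_\beta$, whereas the paper deduces it from $\cA^\alpha\subset\cA_\alpha$ together with Lemma~\ref{lemma core var}.
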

\begin{proof}
The equality in~\eqref{core tvar Lp} follows from Remark~\ref{KainaRuschendorf}. Moreover, since $\cA^\alpha\subset\cA_\alpha$, the inclusion in~\eqref{core tvar Lp} is an immediate consequence of Lemma~\ref{lemma core var}. Finally, if $X\in L^p_+$ and $\probp(X=0)<\alpha$ we must find $\lambda>0$ in such a way that $\gamma:=\probp(X<\lambda)<\alpha$. Then $\VaR_\beta(X)<0$ for all $\beta\in(\gamma,\alpha)$. Since~$X$ is positive, this implies $\TVaR_\alpha(X)<0$.
\end{proof}

\medskip

Given a traded asset $S=(S_0,S_T)$, we consider the corresponding risk measure based
on $\TVaR$-acceptability
\begin{equation*}
\rho_{\cA^\alpha,S}(X)=\inf\left\{m\in\R \,; \
\TVaR_\alpha\left(X+\frac{m}{S_0}S_T\right)\leq0\right\}\,.
\end{equation*}

\medskip

The following proposition provides a characterization of the finiteness and
continuity of $\TVaR$-based risk measures and is a direct consequence of the above
lemma and the results in Section~\ref{section conic coherent}. Note the strong
contrast to $\VaR$-based risk measures, which are never globally continuous on~$L^p$
for $p<\infty$.

\medskip

\begin{proposition}
\label{finiteness and continuity tvar}
Let $S=(S_0,S_T)$ be a traded asset. The following statements are equivalent:
\begin{enumerate}[(a)]
	\item $\rho_{\cA^\alpha,S}$ is finitely valued on~$L^p$;
	\item $\rho_{\cA^\alpha,S}$ is Lipschitz continuous on~$L^p$;
	\item $\TVaR_\alpha(S_T)<0$;
	\item $\probp(S_T=0)<\alpha$.
\end{enumerate}
\end{proposition}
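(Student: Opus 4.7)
The plan is to assemble the equivalences from the already-developed machinery rather than redo any heavy analysis. The four conditions split naturally: (c) $\Leftrightarrow$ (d) is an analytic characterization of when $\TVaR_\alpha(S_T)<0$, while the chain (a) $\Leftrightarrow$ (b) $\Leftrightarrow$ (c) reduces to applying the coherent-case finiteness/continuity dichotomy of Theorem~\ref{finite coherent} to $\cA^\alpha$, exploiting the description of $\Interior(\cA^\alpha)$ from Lemma~\ref{lemma core tvar}.

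First I would observe that $\cA^\alpha$ is coherent and closed with nonempty interior, and that $L^p$ for $1\leq p<\infty$ is a Banach (hence Fréchet) lattice, so Proposition~\ref{interiors of acceptance}\textit{(i)} yields $\Interior(\cA^\alpha)=\Core(\cA^\alpha)$. Then by Theorem~\ref{finite coherent}\textit{(i)}, $\rho_{\cA^\alpha,S}$ is finitely valued iff $S_T\in\Core(\cA^\alpha)=\Interior(\cA^\alpha)$, and by Theorem~\ref{finite coherent}\textit{(ii)}, $\rho_{\cA^\alpha,S}$ is continuous on $L^p$ iff $S_T\in\Interior(\cA^\alpha)$, which in the present ordered normed setting upgrades to Lipschitz continuity by the last clause of that theorem. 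This already gives (a) $\Leftrightarrow$ (b) $\Leftrightarrow$ $S_T\in\Interior(\cA^\alpha)$.

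Next I would invoke Lemma~\ref{lemma core tvar}: the equality $\Interior(\cA^\alpha)=\{X\in L^p \,; \ \TVaR_\alpha(X)<0\}$ immediately identifies $S_T\in\Interior(\cA^\alpha)$ with (c). Finally, the equivalence (c) $\Leftrightarrow$ (d) is again delivered by Lemma~\ref{lemma core tvar}, where the last sentence of that lemma records exactly that $S_T\in\Interior(\cA^\alpha)$ iff $\probp(S_T=0)<\alpha$ for $S_T\in L^p_+$. Chaining the three equivalences closes the loop (a) $\Leftrightarrow$ (b) $\Leftrightarrow$ (c) $\Leftrightarrow$ (d).

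There is essentially no genuine obstacle: every nontrivial ingredient has been isolated in the preceding sections, and the proposition is really a bookkeeping exercise that packages Theorem~\ref{finite coherent}, Proposition~\ref{interiors of acceptance}, and Lemma~\ref{lemma core tvar} for the specific acceptance set $\cA^\alpha$. The only point requiring mild care is ensuring one uses the correct direction of Lemma~\ref{lemma core tvar} (the ``if'' direction for a positive $S_T$ with $\probp(S_T=0)<\alpha$ requires producing a $\lambda>0$ with $\probp(S_T<\lambda)<\alpha$), but this is already handled in that lemma's proof and requires no additional argument here.
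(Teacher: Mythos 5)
Your proposal is correct and coincides with the paper's own (deliberately omitted) argument: the paper states the proposition is a direct consequence of Lemma~\ref{lemma core tvar} and the results of Section~\ref{section conic coherent}, which is precisely the combination of Theorem~\ref{finite coherent}, Proposition~\ref{interiors of acceptance}\textit{(i)}, and the two characterizations of $\Interior(\cA^\alpha)$ that you assemble. The bookkeeping, including the Lipschitz upgrade from the normed-space clause of Theorem~\ref{finite coherent}, is handled exactly as intended.
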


%%%%%%%%%%%%%%%%%%%%%%%%%%%%%%%%%%%%%%%%%

\subsection{Acceptability based on shortfall risk}
\label{applications utility}

Cash-additive risk measures based on utility functions have been widely investigated
on spaces of bounded measurable functions, see~\cite{FoellmerSchied2011} for a
general overview. As a means of unifying the treatment of utility maximization
problems, Biagini and Frittelli proposed in~\cite{BiaginiFrittelli2008} to work
instead in the setting of Orlicz spaces; see also Biagini and
Frittelli~\cite{BiaginiFrittelli2009} and Arai~\cite{Arai2011}.

\medskip

Recall that a nonconstant function $u:\R\to\R$ is a \textit{utility function} if it
is concave and increasing. Note that this implies
$u(-\infty):=\lim_{x\to-\infty}u(x)=-\infty$. The function defined by
\begin{equation*}
\label{orlicz from utility}
\widehat{u}(x):=u(0)-u(-\left|x\right|)
\end{equation*}
is an Orlicz function in the sense of Definition~2.1.1 in~\cite{EdgarSucheston1992}.
By~$H^{\widehat{u}}$ we denote the corresponding Orlicz heart associated
with~$(\Omega,\cF,\probp)$.

\medskip

We fix a level $\alpha\in\R$ such that $\alpha\leq u(x_0)$ for some $x_0\in\R$. Then
the set
\begin{equation*}
\cA_u:=\{X\in H^{\widehat{u}} \,; \ \E[u(X)]\geq\alpha\}
\end{equation*}
is a convex acceptance set which, in general, is not coherent. Note that we disregard
any level~$\alpha$ strictly bounding~$u$ from above, since then~$\cA_u$ would be
empty. If $S=(S_0,S_T)$ is a traded asset, the corresponding \textit{shortfall risk
measure} on~$H^{\widehat{u}}$ is defined by
\begin{equation*}
\rho_{\cA_u,S}(X)=\inf\left\{m\in\R \,; \
\E\left[u\left(X+\frac{m}{S_0}S_T\right)\right]\geq\alpha\right\}\,.
\end{equation*}

\medskip

We start by describing the topological properties of the acceptance set~$\cA_u$.

\medskip

\begin{lemma}
\label{charact nonempty int in orlicz}
\begin{enumerate}[(i)]
  \item The set $\cA_u$ has nonempty interior if and only if $u(x_0)>\alpha$ for
      some $x_0>0$.
	\item If~$u$ is bounded from above, then $\cA_u$ is closed.
\end{enumerate}
\end{lemma}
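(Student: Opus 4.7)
The plan is to treat the two parts of the lemma separately, using a contrapositive argument for part~(i) together with the extended Namioka-Klee theorem for the direct direction, and a reverse Fatou argument for part~(ii).

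For part~(i), I prove the ``only if'' direction by contrapositive. Assume $u(x)\leq\alpha$ for every $x>0$. Monotonicity of $u$ then forces $u\leq\alpha$ on all of $\R$, so $\E[u(X)]\leq\alpha$ for every $X\in H^{\widehat{u}}$, with equality if and only if $u(X)=\alpha$ almost surely. Combined with the standing hypothesis $\alpha\leq u(x_0^*)$ and the continuity of $u$, the set $\{y\in\R: u(y)=\alpha\}$ is either empty or of the form $[c,\infty)$ for some finite $c$, so $\cA_u$ is either empty or equal to $c\cdot 1_\Omega+H^{\widehat{u}}_+$. By Example~\ref{ex bounded away from zero}(iii) the positive cone of the Orlicz heart has empty core, which by Proposition~\ref{interiors of acceptance}(i) is the same as having empty interior in the Fr\'echet lattice $H^{\widehat{u}}$. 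Since translation preserves this, $\cA_u$ has empty interior, contradicting the assumption.

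For the ``if'' direction, I take $X:=x_0\cdot 1_\Omega$, which belongs to $H^{\widehat{u}}$ and satisfies $\E[u(X)]=u(x_0)>\alpha$, and I show $X\in\Interior(\cA_u)$ by establishing continuity of the convex and monotone functional $\rho(Y):=-\E[u(Y)]$ on the Banach lattice $H^{\widehat{u}}$. Finiteness of $\rho$ on $H^{\widehat{u}}$ follows from two Orlicz-type estimates: the pointwise bound $u(Y)\geq u(-|Y|)=u(0)-\widehat{u}(Y)$ yields $\E[u(Y)^-]\leq\E[\widehat{u}(Y)]+|u(0)|<\infty$, while concavity of $u$ at $0$ gives $u(Y)\leq u(0)+u'_+(0)\,Y$, and the standard inclusion $H^{\widehat{u}}\subset L^1$ (arising from the at-least-linear growth of $\widehat{u}$ since $u(-\infty)=-\infty$) gives $\E[u(Y)^+]<\infty$. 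The extended Namioka-Klee theorem recalled in the remark following Proposition~\ref{interiors of acceptance} then yields continuity of $\rho$ on $H^{\widehat{u}}$. Consequently $\{Y: \rho(Y)<-\alpha\}=\{Y: \E[u(Y)]>\alpha\}$ is open, contains $X$, and is contained in $\cA_u$, so $X\in\Interior(\cA_u)$.

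For part~(ii), assume $u\leq M$ and let $(Y_n)\subset\cA_u$ converge to $Y$ in the Luxemburg norm. Luxemburg convergence implies convergence in probability, so along a subsequence $Y_{n_k}\to Y$ almost surely; continuity of $u$ gives $u(Y_{n_k})\to u(Y)$ almost surely, and the uniform dominant $u(Y_{n_k})\leq M$ licenses the reverse Fatou lemma, yielding $\alpha\leq\limsup_k\E[u(Y_{n_k})]\leq\E[u(Y)]$ and hence $Y\in\cA_u$. Sequential closedness in the metrizable Banach lattice $H^{\widehat{u}}$ gives closedness of $\cA_u$.

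The main obstacle is ensuring finiteness of $\rho$ on $H^{\widehat{u}}$ in the ``if'' part of (i), since $u$ is not assumed bounded above: the two-sided Orlicz estimates and the inclusion $H^{\widehat{u}}\subset L^1$ are precisely what is needed to avoid $\E[u(Y)]=+\infty$ and to place $X$ in the interior of the domain of $\rho$. Once this is in hand, Namioka-Klee delivers the required continuity and the rest of the argument is routine.
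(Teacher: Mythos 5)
Your proof is correct, but for part~(i) it takes a genuinely different route from the paper's. For the ``if'' direction the paper argues directly: choosing $\lambda\in(0,1)$ with $\alpha-\lambda u(x_0)+(1-\lambda)(1-u(0))\le 0$ and combining concavity of~$u$ with the definition of the Luxemburg norm, it exhibits an explicit ball $\{Y \,; \ \left\|Y\right\|_{\widehat{u}}<1-\lambda\}$ around $x_01_\Omega$ contained in $\cA_u$; this is self-contained and quantitative. You instead show that $Y\mapsto-\E[u(Y)]$ is finite-valued, convex and decreasing on $H^{\widehat{u}}$ -- your two-sided estimates ($u(Y)\ge u(0)-\widehat{u}(Y)$ for the negative part, the supporting line $u(y)\le u(0)+u'_+(0)y$ together with the embedding $H^{\widehat{u}}\subset L^1$ for the positive part) are exactly what is needed and are correct -- and then invoke the extended Namioka--Klee theorem to conclude that the strict superlevel set $\{\E[u(\cdot)]>\alpha\}$ is open. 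This is less elementary, since it imports Theorem~1 of Biagini--Frittelli, but it yields the slightly stronger conclusion that the whole strict superlevel set lies in $\Interior(\cA_u)$. For the ``only if'' direction the paper perturbs an arbitrary $X\in\cA_u$ downwards on a small measurable set (using nonatomicity) to exit $\cA_u$ inside any given ball; you instead observe that under $u\le\alpha$ the acceptance set collapses to the translate $c1_\Omega+H^{\widehat{u}}_+$ of the positive cone and appeal to Example~\ref{ex bounded away from zero}(iii) and Proposition~\ref{interiors of acceptance}(i). That is slicker, but note it rests on the assertion -- stated without proof in Example~\ref{ex bounded away from zero}(iii) and itself requiring nonatomicity -- that the positive cone of a nontrivial Orlicz heart has empty core, whereas the paper's construction is self-contained. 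Part~(ii) coincides with the paper's Fatou argument.
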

\begin{proof}
\textit{(i)} To prove the ``if'' part, we show that $X:=x_01_\Omega$ is an interior
point of~$\cA_u$. Choose $\lambda\in(0,1)$ in such a way that $\alpha-\lambda
u(x_0)+(1-\lambda)(1-u(0))\leq0$. Note that for every $Y\in H^{\widehat{u}}$ with
$\left\|Y\right\|_{\widehat{u}}<1-\lambda$ we have
$\E\left[\widehat{u}\left(\frac{Y}{1-\lambda}\right)\right]\leq1$, yielding
\begin{eqnarray*}
\E[-u(X+Y)]
&\leq&
\lambda\E\left[-u\left(\frac{X}{\lambda}\right)\right]+(1-\lambda)\E\left[-u\left(\frac{Y}{1-\lambda}\right)\right]
\\
&\leq&
-\lambda
u\left(\frac{x_0}{\lambda}\right)+(1-\lambda)\E\left[\widehat{u}\left(\frac{Y}{1-\lambda}\right)\right]-(1-\lambda)u(0)
\\
&\leq&
-\lambda u(x_0)+(1-\lambda)(1-u(0)) \\
&\leq&
-\alpha\,.
\end{eqnarray*}
As a result, $X+Y\in\cA_u$ whenever $\left\|Y\right\|_{\widehat{u}}<1-\lambda$,
showing that~$X$ belongs to the interior of~$\cA_u$.

\smallskip

To prove the ``only if'' part, assume $u(x)\leq\alpha$ for all $x\in\R$. Fix
$X\in\cA_u$ and $r>0$. We claim that $Y\notin\cA_u$ for some $Y\in H^{\widehat{u}}$
with $\left\|Y-X\right\|_{\widehat{u}}\leq r$. To this end, take $\gamma>0$ such that
$\probp(\left|X\right|\leq\gamma)>0$ and $\lambda>0$ for which
$u(\gamma-\lambda)<\alpha$. Since $(\Omega,\cF,\probp)$ is nonatomic, we can find a
measurable set $A\subset\{\left|X\right|\leq\gamma\}$ satisfying
$\widehat{u}\left(\frac{\lambda}{r}\right)\probp(A)\leq1$. Hence, setting
$Y:=(X-\lambda)1_A+X1_{A^c}$ it follows that $\left\|Y-X\right\|_{\widehat{u}}\leq
r$. Moreover, since $u(\gamma-\lambda)<\alpha$, we obtain
\begin{equation*}
\E[u(Y)] \leq u(\gamma-\lambda)\probp(A)+\alpha\,\probp(A^c)<\alpha\,.
\end{equation*}
showing that $Y\notin\cA_u$.

\smallskip

\textit{(ii)} Assume $u$ is bounded from above, and let $(X_n)$ be a sequence in
$\cA_u$ converging to $X$. Without loss of generality, we can assume $X_n\to X$
almost surely. Since $u$ is bounded from above, it follows from Fatou's Lemma 11.20
in \cite{AliprantisBorder2006} that $X\in\cA_u$.
\end{proof}

\medskip

We can now provide a complete characterization of finiteness and continuity for risk
measures based on shortfall risk on Orlicz hearts.

\medskip

\begin{proposition}
\label{risk measure u finite}
Consider a traded asset~$S=(S_0,S_T)$.
\begin{enumerate}[(i)]
	\item If~$u$ is bounded from above, the following statements are equivalent:
\begin{enumerate}[(a)]
	\item $\rho_{\cA_u,S}$ is finitely valued;
	\item $u(x_0)>\alpha$ for some $x_0>0$ and $\probp(S_T=0)=0$.
\end{enumerate}
  In this case, $\rho_{\cA_u,S}$ is continuous.

  \item If~$u$ is not bounded from above, then $\rho_{\cA_u,S}$ is always finitely
      valued and continuous.
\end{enumerate}
\end{proposition}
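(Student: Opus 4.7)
I would handle both parts uniformly by applying Theorem~\ref{general finiteness convex with str pos elements} to the strictly positive element $U := 1_\Omega \in H^{\widehat{u}}$ (as noted in the remark following that theorem, $1_\Omega$ is strictly positive in every Orlicz heart). Since $\cA_u$ is convex, finiteness of $\rho_{\cA_u,S}$ reduces to three checks: (I) $\Interior(\cA_u)\neq\emptyset$, (II) $\rho_{\cA_u,S}$ never attains $-\infty$, and (III) $\rho_{\cA_u,S}(-\lambda 1_\Omega)<\infty$ for every $\lambda>0$. Continuity then comes for free either from the ``in this case'' clause of Theorem~\ref{general finiteness convex with str pos elements} or from Theorem~\ref{cont after finit}.

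\textbf{Part (i).} For the implication (a)$\Rightarrow$(b), finiteness of $\rho_{\cA_u,S}$ forces $\Core(\cA_u)\neq\emptyset$ via Proposition~\ref{lemma on internal point and domain}, hence $\Interior(\cA_u)\neq\emptyset$ by Proposition~\ref{interiors of acceptance}, and Lemma~\ref{charact nonempty int in orlicz}(i) then delivers $u(x_0)>\alpha$ for some $x_0>0$. The condition $\probp(S_T=0)=0$ I would obtain by contradiction: setting $M:=\sup u<\infty$, choose $c>0$ so large that $u(-c)\probp(S_T=0)+M\probp(S_T>0)<\alpha$ (possible since $u(-\infty)=-\infty$); then the bounded position $X:=-c\,1_{\{S_T=0\}}\in H^{\widehat{u}}$ satisfies $\E[u(X+(m/S_0)S_T)]<\alpha$ for every $m\in\R$, forcing $\rho_{\cA_u,S}(X)=\infty$. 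For (b)$\Rightarrow$(a), Lemma~\ref{charact nonempty int in orlicz}(i) provides (I); for (II), boundedness of $u$ above by $M$ furnishes an integrable upper envelope, so reverse Fatou applied as $m\to-\infty$ shows $\E[u(X+(m/S_0)S_T)]\to-\infty$, using that $S_T>0$ almost surely; for (III), monotone convergence as $m\to+\infty$ gives $\E[u(-\lambda+(m/S_0)S_T)]\uparrow u(+\infty)\geq u(x_0)>\alpha$.

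\textbf{Part (ii)} is structurally parallel: unboundedness of $u$ automatically yields $u(x_0)>\alpha$ for some $x_0>0$, so Lemma~\ref{charact nonempty int in orlicz}(i) again supplies (I); and (III) now goes through even if $\probp(S_T=0)>0$, because the unbounded growth of $u$ on the positive-probability set $\{S_T>0\}$ still pushes $\E[u(-\lambda+(m/S_0)S_T)]$ above $\alpha$ as $m\to+\infty$. The main technical obstacle is (II) here: without upper boundedness of $u$, the reverse-Fatou argument of part~(i) has no natural integrable envelope. I would resolve this by the decreasing form of the monotone convergence theorem applied to $m\mapsto u(X+(m/S_0)S_T)$, which is pointwise decreasing in $m$ since $S_T\geq 0$; provided $\E[u(X+(m_0/S_0)S_T)]$ is finite for some $m_0$ (a standard consequence of $X+(m_0/S_0)S_T\in H^{\widehat{u}}$ together with the concave bound $u(y)\leq u(0)+u'(0^+)y^+$), the expectation descends to $\E[u(X)\,1_{\{S_T=0\}}]+(-\infty)\probp(S_T>0)=-\infty$ as $m\to-\infty$, which yields (II) and completes the proof.
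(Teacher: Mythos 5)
Your proposal is correct, and its overall architecture matches the paper's: both directions of part (i) rest on Proposition~\ref{lemma on internal point and domain}, Proposition~\ref{interiors of acceptance} and Lemma~\ref{charact nonempty int in orlicz}, the necessity of $\probp(S_T=0)=0$ is obtained in both cases by exhibiting a bounded position pushed far down on $\{S_T=0\}$, and part (ii) ultimately runs through Theorem~\ref{general finiteness convex with str pos elements} with $U=1_\Omega$. The differences are confined to two technical sub-steps. For (b)$\Rightarrow$(a) in part (i), the paper simply notes that $\probp(S_T=0)=0$ makes $S_T$ strictly positive in $H^{\widehat u}$ and invokes Corollary~\ref{finiteness for convex acc}, which disposes of the value $-\infty$ by a separation argument; you instead verify the hypotheses of Theorem~\ref{general finiteness convex with str pos elements} directly, excluding $-\infty$ by reverse Fatou with the constant envelope $\sup u$. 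For the exclusion of $-\infty$ in part (ii), the paper truncates the utility from below, setting $v:=u\vee\beta$, and applies dominated convergence; you use the decreasing form of monotone convergence, which forces you to supply an integrable upper envelope via $u(y)\le u(0)+u'(0^+)y^+$ and the inclusion $H^{\widehat u}\subset L^1$. Both routes are sound: the paper's truncation avoids any appeal to $H^{\widehat u}\subset L^1$ (though that inclusion does hold, since $u(-\infty)=-\infty$ forces $\widehat u$ to grow at least linearly), while yours avoids introducing the auxiliary function $v$ and is arguably more transparent about where the hypothesis $\probp(S_T>0)>0$ enters. The verifications of $\rho_{\cA_u,S}(-\lambda 1_\Omega)<\infty$ coincide with the paper's up to notation.
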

\begin{proof}
\textit{(i)} Assume~\textit{(a)} holds. Then~$\cA_u$ must have nonempty core by
Proposition~\ref{lemma on internal point and domain}, and hence nonempty interior by
Proposition~\ref{interiors of acceptance}. As a result, Lemma~\ref{charact nonempty
int in orlicz} implies $u(x_0)>\alpha$ for some $x_0>0$. Assume now that
$\probp(S_T=0)>0$. Since $u(-\infty)=-\infty$, taking $\xi>0$ large enough we obtain
for all $\lambda\in\R$
\begin{equation*}
\E[u(-\xi1_\Omega+\lambda S_T)] \leq u(-\xi)\,
\probp(S_T=0)+\sup_{x\in\R}u(x)\,\probp(S_T>0) < \alpha\,.
\end{equation*}
As a result $\rho_{\cA_u,S}(-\xi1_\Omega)=\infty$, contradicting~\textit{(a)}.
Hence,~\textit{(a)} implies~\textit{(b)}.

\smallskip

To prove the converse implication, assume~\textit{(b)} holds. Note
that~$\probp(S_T=0)=0$ implies that~$S_T$ is a strictly positive element
in~$H^{\widehat{u}}$. Moreover,~$\cA_u$ has nonempty interior by Lemma~\ref{charact
nonempty int in orlicz}. Hence,~\textit{(a)} follows immediately from
Corollary~\ref{finiteness for convex acc}.

\smallskip

\textit{(ii)} First, we show that~$\rho_{\cA_u,S}$ never attains~$-\infty$. Indeed,
assume $\rho_{\cA_u,S}(X)=-\infty$ for some $X\in H^{\widehat{u}}$, and take
$\beta<0$ such that
\begin{equation}
\label{auxiliary ineq utility}
\beta\,\probp(S_T>0)+\E[u(X)1_{\{u(X)>0\}}]<\alpha\,.
\end{equation}
Since $v:=u\vee\beta\ge u$, we have $\E[v(X-nS_T)]\geq\alpha$ for every positive
integer~$n$. Hence, using dominated convergence it is easy to show that
\begin{equation*}
\beta\,\probp(S_T>0)+\E[u(X)1_{\{u(X)>0\}}] \geq \E[\beta
1_{\{S_T>0\}}+v(X)1_{\{S_T=0\}}] \geq \alpha\,.
\end{equation*}
But this contradicts~\eqref{auxiliary ineq utility}, showing that~$\rho_{\cA_u,S}$
cannot attain the value~$-\infty$.

\smallskip

Now, since~$u$ is not bounded from above, we always have $u(x_0)>\alpha$ for some
$x_0>0$, hence the interior of~$\cA_u$ is nonempty by Lemma~\ref{charact nonempty int
in orlicz}. Finally, take $\gamma>0$ so that $\probp(S_T>\gamma)>0$. For any $\xi>0$
we can find $\lambda>0$ for which
\begin{equation*}
\E[u(-\xi1_\Omega+\lambda S_T)] \geq
u(-\xi)\,\probp(S_T\leq\gamma)+u(-\xi+\lambda\gamma)\,\probp(S_T>\gamma) \geq
\alpha\,,
\end{equation*}
showing that $\rho_{\cA_u,S}(-\xi1_\Omega)<\infty$. Since~$1_\Omega$ is a strictly
positive element, we can apply Theorem~\ref{general finiteness convex with str pos
elements} to find that~$\rho_{\cA_u,S}$ is finitely valued and continuous,
concluding the proof.
\end{proof}

\medskip

Note that Lemma~\ref{charact nonempty int in orlicz} continues to hold if the
underlying reference space is taken to be the Orlicz space~$L^{\widehat{u}}$. Our
next example shows that, when~$u$ is the exponential utility function, the risk
measure~$\rho_{\cA_u,S}$ is never finitely valued on $L^{\widehat{u}}$ even though
its domain has nonempty interior. This extends the cash-additive example by Biagini
and Frittelli at the end of Section~5.1 in~\cite{BiaginiFrittelli2009}, which was
used to highlight that the results in Cheridito and Li~\cite{CheriditoLi2009} for
Orlicz hearts are not valid in the context of general Orlicz spaces.

\medskip

\begin{example}
\label{no strictly positive}
Let $u(x):=1-e^{-x}$ be the exponential utility function. For any traded asset
$S=(S_0,S_T)$ the risk measure~$\rho_{\cA_u,S}$ is not finitely valued
on~$L^{\widehat{u}}$. Indeed, since $(\Omega,\cF,\probp)$ is nonatomic, we can always
find $Y\in L^{1/2}\setminus L^1$ such that $Y\geq1$ almost surely and~$Y$ is
independent of~$S_T$. Setting $X:=-\log(Y)$, it is easy to see that $X\in
L^{\widehat{u}}$ and $\E[e^{-X}]=\infty$. As a consequence, for any $\lambda\in\R$ we
have
\begin{equation*}
\E[u(X+\lambda S_T)] = 1-\E[e^{-X}]\,\E[e^{-\lambda S_T}] = -\infty < \alpha
\end{equation*}
showing that $\rho_{\cA_u,S}(X)=\infty$.
\end{example}

\medskip

\begin{remark}
\label{no strictly positive orlicz}
\begin{enumerate}[(i)]
\item The previous example has an interesting consequence. Since~$\cA_u$ is convex
    and has nonempty interior in~$L^{\widehat{u}}$, Corollary~\ref{finiteness for
    convex acc} implies that the Orlicz space~$L^{\widehat{u}}$ has no strictly
    positive elements. We do not know whether, more generally, whenever a
    nontrivial Orlicz heart and the corresponding Orlicz space do not coincide, the
    Orlicz space does not possess strictly positive elements.
\item Note that $1_\Omega$ is a strictly positive element in every nontrivial
    Orlicz heart but not in a general Orlicz space, unless the two coincide. This
    is the fundamental reason why the results in Cheridito and
    Li~\cite{CheriditoLi2009} on Orlicz hearts are not always applicable in the
    context of general Orlicz spaces.
\end{enumerate}
\end{remark}

%%%%%%%%%%%%%%%%%%%%%%%%%%%%%%%%%%%%%%%%%%%%%%%%%%

\section{Cash subadditivity and quasiconvexity}
\label{cash sub and quasi conv}

This final section is devoted to discussing the link between the general risk
measures studied in this paper and cash-subadditive as well as quasiconvex risk
measures. Establishing this link is important since our framework provides a natural
setting to deal with a defaultable reference asset and cash subadditivity was
introduced in~\cite{ElKarouiRavanelli2009} to address the possible defaultability of
the given reference bond. Moreover, quasiconvexity arises naturally in the presence
of a convex acceptability criterion when the reference asset is not liquidly traded.

%%%%%%%%%%%%%%%%%%%%%%%%%%%%%%%%%%%%%

\subsection{Cash subadditivity and defaultable assets}
\label{cash-sub section}

In the recent influencial paper~\cite{ElKarouiRavanelli2009}, El Karoui and Ravanelli
questioned the axiom of cash additivity and introduced the new class of (convex)
cash-subadditive risk measures on~$L^\infty$ in order to ``model stochastic and/or
ambiguous interest rates or defaultable contingent claims''. In a more general
setting, i.e. working in $L^p$, $0\le p\le\infty$, and without requiring finiteness
or convexity, a \textit{cash-subadditive} risk measure is defined as a decreasing
function $\rho:L^p\to\overline{\R}$ satisfying
\begin{equation*}
\rho(X+\lambda1_\Omega)\geq\rho(X)-\lambda \ \ \ \ \mbox{for all} \ \lambda>0 \
\mbox{and} \ X\in L^p\,.
\end{equation*}

\medskip

Consider an acceptance set $\cA\subset L^p$ and a traded asset~$S=(S_0,S_T)$. Note that in the
framework considered thus far the $S$-additivity of~$\rho_{\cA,S}$ is a direct
consequence of the fact that the price of~$\lambda$ units of~$S$ is $\lambda S_0$.
Hence, unless we assume a nonlinear pricing rule as we will do in the final section
of this paper,~$\rho_{\cA,S}$ will always be $S$-additive. Consequently, cash
subadditivity is not a surrogate for $S$-additivity but rather a
property~$\rho_{\cA,S}$ may or may not have. If we wish to interpret risk measures as
capital requirements which measure the distance of future financial positions to
acceptability, any new property stipulated for risk measures, such as cash
subadditivity, needs to be justified by a corresponding financially meaningful
property of either~$\cA$ or~$S$. Therefore, in this section we investigate what
makes~$\rho_{\cA,S}$ be cash subadditive. By doing so, we also provide a better
financial insight into the axiom of cash subadditivity. In particular, our results
show that this assumption is typically not satisfied when the asset~$S$ is
defaultable, thus raising questions at least about part of the interpretation given
in~\cite{ElKarouiRavanelli2009}.

\medskip

Note that, by $S$-additivity, cash subadditivity of~$\rho_{\cA,S}$ is equivalent to
\begin{equation}
\label{equivalent cash subadditivity}
\rho_{\cA,S}(X+\lambda1_\Omega)\geq\rho_{\cA,S}\left(X+\frac{\lambda}{S_0}S_T\right)
\ \ \ \ \mbox{for all} \ \lambda>0 \ \mbox{and} \ X\in L^p\,.
\end{equation}

\medskip

Assume $S=(S_0,S_T)$ is a defaultable bond with face value~$1$, recovery rate $0\le
S_T\le 1$, and price $S_0<1$. Then we can interpret~$S_0$ as the \textit{invested
capital} and $1-S_0$ as the \textit{interest payment}. The following result shows
that if the invested capital is not at risk, i.e. if the bond can only default on the
interest payment, then the risk measure~$\rho_{\cA,S}$ is always cash subadditive.

\medskip

\begin{proposition}
\label{ex of cash-sub}
Let~$\cA\subset L^p$ be an acceptance set and $S=(S_0,S_T)$ a traded asset. Assume
$\probp(S_T<S_0)=0$. Then~$\rho_{\cA,S}$ is cash subadditive.
\end{proposition}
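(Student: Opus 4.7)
The plan is to reduce cash subadditivity to the monotonicity (decreasing) property of $\rho_{\cA,S}$ via the equivalent formulation \eqref{equivalent cash subadditivity} that the authors have just pointed out.

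First, I would invoke \eqref{equivalent cash subadditivity}: cash subadditivity of $\rho_{\cA,S}$ is equivalent to showing that
\begin{equation*}
\rho_{\cA,S}(X+\lambda 1_\Omega) \geq \rho_{\cA,S}\!\left(X+\frac{\lambda}{S_0}S_T\right)
\end{equation*}
for every $\lambda>0$ and $X\in L^p$. Since Lemma~\ref{translation and monotonicity}\textit{(i)} tells us that $\rho_{\cA,S}$ is decreasing, it suffices to establish the pointwise almost-sure inequality
\begin{equation*}
X + \lambda 1_\Omega \leq X + \frac{\lambda}{S_0}S_T,
\end{equation*}
i.e.\ $S_0\,1_\Omega \leq S_T$ almost surely (after cancelling $X$ and dividing by $\lambda/S_0>0$).

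The hypothesis $\probp(S_T<S_0)=0$ is precisely this inequality. Hence the chain of implications is immediate: the hypothesis gives $S_T \geq S_0\,1_\Omega$ a.s., multiplying by $\lambda/S_0>0$ and adding $X$ yields $X+\frac{\lambda}{S_0}S_T \geq X+\lambda 1_\Omega$, and the decreasing property of $\rho_{\cA,S}$ reverses the inequality, producing \eqref{equivalent cash subadditivity}. There is essentially no obstacle here; the entire content of the proposition is that the condition $\probp(S_T<S_0)=0$ is the exact almost-sure pointwise statement needed to compare the two translates, and the authors have isolated \eqref{equivalent cash subadditivity} so that only monotonicity of $\rho_{\cA,S}$ is required in the final step.
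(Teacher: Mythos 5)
Your proof is correct and is essentially identical to the paper's: both reduce cash subadditivity to \eqref{equivalent cash subadditivity}, observe that $\probp(S_T<S_0)=0$ gives $\lambda 1_\Omega\leq\frac{\lambda}{S_0}S_T$ almost surely, and conclude via the decreasing property of $\rho_{\cA,S}$ from Lemma~\ref{translation and monotonicity}. You have merely spelled out the pointwise comparison in more detail than the paper does.
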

\begin{proof}
Taking $X\in L^p$ and $\lambda>0$ and noting that
$\lambda1_\Omega\leq\frac{\lambda}{S_0}S_T$, we immediately obtain
by~\eqref{equivalent cash subadditivity} and monotonicity that~$\rho_{\cA,S}$ is cash
subadditive.
\end{proof}

\medskip

We investigate now the case where the capital invested in the asset $S=(S_0,S_T)$ is
at risk, i.e. when $\probp(S_T<S_0)>0$. In this situation, we will see that cash
subadditivity is typically not satisfied. Moreover, cash subadditivity turns out to
depend not only on the payoff~$S_T$ of the eligible asset but also on the prevailing
price~$S_0$. As such, this property is not stable with respect to changes in the
price of the eligible asset, a circumstance which would seem to limit its practical
usefulness.

\medskip

We start by providing a necessary condition for cash subadditivity for a general
underlying acceptance set and a sufficient condition in the coherent case.

\medskip

\begin{proposition}
\label{charact cashsub coherent}
Let $\cA\subset L^p$ be an acceptance set containing $0$ and $S=(S_0,S_T)$ a traded
asset. The following statements hold:
\begin{enumerate}[(i)]
	\item if $\rho_{\cA,S}$ is cash subadditive, then
$S_T-S_01_\Omega\in\overline{\cA}$;
	\item if~$\cA$ is coherent and $S_T-S_01_\Omega\in\cA$, then $\rho_{\cA,S}$ is
cash subadditive.
\end{enumerate}
In particular, if $\cA$ is closed and coherent, then $\rho_{\cA,S}$ is cash
subadditive if and only if $S_T-S_01_\Omega\in\cA$.
\end{proposition}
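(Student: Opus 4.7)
The plan is to establish the two implications separately and then combine them for the ``in particular'' claim, relying throughout on the $S$-additivity of $\rho_{\cA,S}$ (Lemma~\ref{translation and monotonicity}(i)), the characterization $\{X\in\cX \,;\ \rho_{\cA,S}(X)\leq 0\}\subset\overline{\cA}$ (Lemma~\ref{translation and monotonicity}(ii)), and the equivalent formulation~\eqref{equivalent cash subadditivity} of cash subadditivity.

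For part~(i), the idea is to show directly that $\rho_{\cA,S}(S_T-S_0 1_\Omega)\leq 0$, which by Lemma~\ref{translation and monotonicity}(ii) places $S_T-S_0 1_\Omega$ in $\overline{\cA}$. Since $0\in\cA$, we start from $\rho_{\cA,S}(0)\leq 0$. Applying cash subadditivity to $X:=-S_0 1_\Omega$ with $\lambda:=S_0$ gives
\begin{equation*}
\rho_{\cA,S}(0)=\rho_{\cA,S}(-S_0 1_\Omega+S_0 1_\Omega)\geq \rho_{\cA,S}(-S_0 1_\Omega)-S_0,
\end{equation*}
so $\rho_{\cA,S}(-S_0 1_\Omega)\leq S_0$. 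By $S$-additivity applied with $\lambda=1$,
\begin{equation*}
\rho_{\cA,S}(S_T-S_0 1_\Omega)=\rho_{\cA,S}(-S_0 1_\Omega)-S_0\leq 0,
\end{equation*}
which is exactly what is needed.

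For part~(ii), I would verify the equivalent inequality~\eqref{equivalent cash subadditivity}. Fix $X\in L^p$ and $\lambda>0$, and let $m>\rho_{\cA,S}(X+\lambda 1_\Omega)$ (the inequality being trivial if $\rho_{\cA,S}(X+\lambda 1_\Omega)=\infty$). By definition of the risk measure, $X+\lambda 1_\Omega+\tfrac{m}{S_0}S_T\in\cA$. By the coherence of $\cA$, the element $\tfrac{\lambda}{S_0}(S_T-S_0 1_\Omega)$ lies in $\cA$, and summing two elements of a convex cone stays in the cone:
\begin{equation*}
\Bigl(X+\lambda 1_\Omega+\tfrac{m}{S_0}S_T\Bigr)+\tfrac{\lambda}{S_0}(S_T-S_0 1_\Omega)=X+\tfrac{m+\lambda}{S_0}S_T\in\cA.
\end{equation*}
This means $\rho_{\cA,S}\bigl(X+\tfrac{\lambda}{S_0}S_T\bigr)\leq m$, and taking infimum over such $m$ yields~\eqref{equivalent cash subadditivity}, i.e.\ cash subadditivity.

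Finally, the ``in particular'' statement follows at once: if $\cA$ is closed and coherent, then $0\in\cA$ and $\overline{\cA}=\cA$, so parts~(i) and~(ii) together give the equivalence. The arguments are essentially bookkeeping; the only nonobvious step is the choice of $X=-S_0 1_\Omega$ and $\lambda=S_0$ in part~(i), which converts cash subadditivity into information about $S_T-S_0 1_\Omega$, and I expect this to be the main conceptual point rather than a genuine obstacle.
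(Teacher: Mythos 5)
Your proof is correct and follows essentially the same route as the paper: part (i) uses the same test position $X=-S_01_\Omega$ with $\lambda=S_0$ (you unwind the equivalent formulation~\eqref{equivalent cash subadditivity} via $S$-additivity, the paper applies it directly), and part (ii) rests on the same decomposition $X+\tfrac{\lambda}{S_0}S_T=(X+\lambda1_\Omega)+\tfrac{\lambda}{S_0}(S_T-S_01_\Omega)$ together with the convex-cone structure of $\cA$, argued directly where the paper argues by contraposition. The only implicit step worth noting is that $m>\rho_{\cA,S}(X+\lambda1_\Omega)$ gives $X+\lambda1_\Omega+\tfrac{m}{S_0}S_T\in\cA$, which holds because the set of such $m$ is upward closed by monotonicity of $\cA$ and $S_T\in\cX_+$.
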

\begin{proof}
To prove \textit{(i)}, assume cash subadditivity and note that taking
$X:=-S_01_\Omega$ and $\lambda:=S_0$ in \eqref{equivalent cash subadditivity} we
obtain
\begin{equation*}
\rho_{\cA,S}(S_T-S_01_\Omega) = \rho_{\cA,S}\left(X+\frac{\lambda}{S_0}S_T\right)
\leq \rho_{\cA,S}(X+\lambda1_\Omega) = \rho_{\cA,S}(0) \leq  0\,.
\end{equation*}
Hence, $S_T-S_01_\Omega\in\overline{\cA}$.

\smallskip

To prove~\textit{(ii)}, assume~$\rho_{\cA,S}$ is not cash subadditive. Then we find
$X\in L^p$ and $\lambda>0$ for which
\begin{equation*}
\rho_{\cA,S}(X+\lambda1_\Omega) < 0 <
\rho_{\cA,S}\left(X+\frac{\lambda}{S_0}S_T\right)
\end{equation*}
so that $X+\lambda1_\Omega\in\cA$ while $X+\frac{\lambda}{S_0}S_T\notin\cA$.
Since~$\cA$ is coherent and
\begin{equation*}
X+\frac{\lambda}{S_0}S_T = X+\lambda1_\Omega+\frac{\lambda}{S_0}(S_T-S_01_\Omega)
\end{equation*}
we immediately conclude that $S_T-S_01_\Omega\notin\cA$. This shows that~\textit{(ii)}
holds.
\end{proof}

\medskip

The following corollaries provide a characterization of cash subadditivity for risk
measures based on $\TVaR$-acceptability and scenario-based acceptability,
respectively. In particular, for acceptability based on Tail Value-at-Risk at level~$\alpha$ it turns out
that the corresponding risk measure~$\rho_{\cA^\alpha,S}$ is not cash subadditive
whenever the probability that the invested capital is at risk exceeds the
level~$\alpha$.

\medskip

\begin{corollary}[Tail Value-at-Risk]
Let $\cA^\alpha\subset L^p$ be the acceptance set based on Tail Value-at-Risk at
level $\alpha\in(0,1)$, and let $S=(S_0,S_T)$ be a traded asset. Then
$\rho_{\cA^\alpha,S}$ is cash subadditive if and only if
$\TVaR_\alpha(S_T)\leq-S_0$.

\smallskip

In particular, if $\rho_{\cA^\alpha,S}$ is cash subadditive then
$\probp(S_T<S_0)\leq\alpha$.
\end{corollary}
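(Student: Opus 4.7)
The plan is to observe first that $\cA^\alpha$ is a closed, coherent acceptance set that contains $0$, since $\TVaR_\alpha(0)=0$. Consequently the last assertion of Proposition~\ref{charact cashsub coherent} applies, and reduces the cash subadditivity of $\rho_{\cA^\alpha,S}$ to the single membership condition $S_T-S_0 1_\Omega\in\cA^\alpha$. By the very definition of $\cA^\alpha$ together with the well-known cash additivity of $\TVaR_\alpha$, this inclusion rewrites as
\[
\TVaR_\alpha(S_T-S_0 1_\Omega)\;=\;\TVaR_\alpha(S_T)+S_0\;\le\;0\,,
\]
which is precisely the stated inequality $\TVaR_\alpha(S_T)\le -S_0$. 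This settles the first equivalence.

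For the ``in particular'' statement, I would invoke the elementary comparison $\VaR_\alpha\le\TVaR_\alpha$, which is a direct consequence of the integral representation of $\TVaR_\alpha$ and of the fact that $\beta\mapsto\VaR_\beta(\cdot)$ is non-increasing. Thus $\TVaR_\alpha(S_T)\le-S_0$ forces $\VaR_\alpha(S_T)\le-S_0$, and the task reduces to showing that the latter implies $\probp(S_T<S_0)\le\alpha$.

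Finally, since the set $\{m\in\R\,;\ \probp(S_T+m<0)\le\alpha\}$ is an upper half-line whose infimum, namely $\VaR_\alpha(S_T)$, is $\le-S_0$, I can pick a sequence $m_n\downarrow\VaR_\alpha(S_T)$ with $\probp(S_T<-m_n)\le\alpha$ for every $n$. Then $-m_n\uparrow-\VaR_\alpha(S_T)\ge S_0$, so either $-m_n\ge S_0$ for all large $n$ (giving directly $\{S_T<S_0\}\subset\{S_T<-m_n\}$) or $-m_n\uparrow S_0$ from below (giving $\{S_T<-m_n\}\uparrow\{S_T<S_0\}$ by left-continuity of $t\mapsto\probp(S_T<t)$). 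In either case, passing to the limit yields $\probp(S_T<S_0)\le\alpha$. The only minor subtlety is that the infimum in the definition of $\VaR_\alpha$ need not be attained; this is precisely what the monotone-approximation step addresses. No deeper obstacle is anticipated.
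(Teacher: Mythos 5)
Your proof is correct and follows the paper's intended route: the corollary is an immediate consequence of the closed, coherent case of Proposition~\ref{charact cashsub coherent} (note $\cA^\alpha$ is closed, coherent and contains $0$) together with the cash additivity of $\TVaR_\alpha$, exactly as you argue. For the final assertion your monotone-approximation step is sound, though it essentially re-derives the identity $\cA_\alpha=\{X\in L^p\,;\ \probp(X<0)\le\alpha\}$ already recorded in the paper, which combined with the inclusion $\cA^\alpha\subset\cA_\alpha$ and part~(i) of Proposition~\ref{charact cashsub coherent} yields $\probp(S_T<S_0)\le\alpha$ directly.
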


\medskip

\begin{corollary}[Scenarios]
Let $A\in\cF$ and define $\cA(A):=\{X\in L^p \,; \ X1_A\ge 0\}$. If $S=(S_0,S_T)$ is
a traded asset, then $\rho_{\cA(A),S}$ is cash subadditive if and only if
$\probp(A\cap\{S_T<S_0\})=0$.
\end{corollary}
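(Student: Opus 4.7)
The plan is to reduce the statement to the ``in particular'' clause of Proposition~\ref{charact cashsub coherent} by recognizing that $\cA(A)$ is a closed coherent acceptance set containing~$0$. Once this is done, the rest is just an unpacking of the condition $S_T-S_01_\Omega\in\cA(A)$ in terms of a probability.

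First I would verify that $\cA(A)$ satisfies the required structural properties (assuming $\probp(A)>0$; otherwise $\cA(A)=L^p$, the risk measure is identically $-\infty$, and the condition $\probp(A\cap\{S_T<S_0\})=0$ holds automatically, so both sides are trivially true). Nontriviality is clear since $1_\Omega\in\cA(A)$ while $-1_\Omega\notin\cA(A)$, and $0\in\cA(A)$. Monotonicity follows from the fact that $X1_A\ge 0$ and $Y\ge X$ imply $Y1_A\ge X1_A\ge 0$. The coherence (convex cone property) is immediate from the linearity of $X\mapsto X1_A$ and the conic structure of $\{Z\in L^p \,; \ Z\ge 0\}$. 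For closedness, given $X_n\to X$ in $L^p$ (including convergence in probability when $p=0$), I would pass to an almost-surely convergent subsequence and use that the pointwise inequality $X_n1_A\ge 0$ is preserved in the limit.

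Having established that $\cA(A)$ is closed and coherent, Proposition~\ref{charact cashsub coherent} yields
\begin{equation*}
\rho_{\cA(A),S} \text{ is cash subadditive} \ \Longleftrightarrow \ S_T-S_01_\Omega\in\cA(A)\,.
\end{equation*}
Unpacking the right-hand side, $S_T-S_01_\Omega\in\cA(A)$ means $(S_T-S_0)1_A\ge 0$ almost surely, which is equivalent to $\{S_T<S_0\}\cap A$ being a $\probp$-null set, i.e. $\probp(A\cap\{S_T<S_0\})=0$.

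There is no real obstacle here; the only mild subtlety is the verification of closedness in the full range $0\le p<\infty$, handled uniformly by the almost-sure-subsequence argument. The proof is essentially a direct application of the preceding proposition.
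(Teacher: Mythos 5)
Your proposal is correct and follows exactly the route the paper intends: the corollary is stated without proof as an immediate consequence of the ``in particular'' clause of Proposition~\ref{charact cashsub coherent}, and your verification that $\cA(A)$ is a closed coherent acceptance set containing $0$, together with the unpacking of $S_T-S_01_\Omega\in\cA(A)$ as $\probp(A\cap\{S_T<S_0\})=0$, is precisely what is needed. The degenerate case $\probp(A)=0$ is also handled sensibly.
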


\medskip

The following result shows that $\VaR$-based risk measures are never cash subadditive
as soon as the invested capital is at risk.

\medskip

\begin{proposition}[Value-at-Risk]
\label{no cash-sub for var}
Let $\cA_\alpha\subset L^p$ be the acceptance set based on Value-at-Risk at level
$\alpha\in(0,1)$ and consider a traded asset $S=(S_0,S_T)$. Then $\rho_{\cA_\alpha,S}$ is
cash subadditive if and only if~$\probp(S_T<S_0)=0$.
\end{proposition}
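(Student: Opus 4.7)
The plan is to prove the two implications separately. The sufficiency direction, namely that $\probp(S_T<S_0)=0$ implies cash subadditivity of $\rho_{\cA_\alpha,S}$, is an immediate instance of Proposition~\ref{ex of cash-sub} applied to the acceptance set $\cA_\alpha$.

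For the necessity direction I would argue by contraposition. Assuming $\probp(S_T<S_0)>0$, I would construct an explicit $X\in L^p$ and scalar $\lambda>0$ for which
\begin{equation*}
\rho_{\cA_\alpha,S}(X+\lambda 1_\Omega) < \rho_{\cA_\alpha,S}\Bigl(X+\tfrac{\lambda}{S_0}S_T\Bigr),
\end{equation*}
which contradicts the equivalent form \eqref{equivalent cash subadditivity} of cash subadditivity. The guiding intuition is that on $\{S_T<S_0\}$ a constant cash injection $\lambda 1_\Omega$ pointwise strictly dominates an investment of the same value in $S$, and this discrepancy can be engineered into a violation at the $\alpha$-quantile level using the nonatomicity of $(\Omega,\cF,\probp)$.

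Concretely, by continuity of measure there exists $\eta\in(0,S_0)$ with $\probp(S_T\leq S_0-\eta)>0$. Using nonatomicity I select $C\subset\{S_T\leq S_0-\eta\}$ with $0<\probp(C)<1-\alpha$, choose $K>0$ large enough that $\probp(\{S_T\leq K\})\geq\alpha+\probp(C)$, and take $D\subset\{S_T\leq K\}\setminus C$ with $\probp(D)=\alpha$. For fixed $\lambda>0$, small $\delta\in(0,\lambda\eta/S_0)$, and sufficiently large $K'>\lambda$, set
\begin{equation*}
X := -K'\,1_D + (-\lambda+\delta)\,1_C \in L^p.
\end{equation*}
Then $\{X+\lambda 1_\Omega<0\}=D$, giving $X+\lambda 1_\Omega\in\cA_\alpha$ and hence $\rho_{\cA_\alpha,S}(X+\lambda 1_\Omega)\leq 0$. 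The choices $\delta<\lambda\eta/S_0$ and $S_T\leq S_0-\eta$ on $C$, combined with the size of $K'$, force $X+\frac{\lambda}{S_0}S_T<0$ both on $C$ and on $D$, so $\probp(X+\frac{\lambda}{S_0}S_T<0)=\alpha+\probp(C)>\alpha$.

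The main obstacle is upgrading ``$\notin\cA_\alpha$'' to a strictly positive risk value, i.e.\ showing $\rho_{\cA_\alpha,S}(X+\frac{\lambda}{S_0}S_T)>0$ rather than merely $\geq 0$. This is handled by the buffer $\eta$: a short calculation shows that for $m^*:=\frac{\lambda\eta-\delta S_0}{S_0-\eta}>0$ and every $m\in[0,m^*)$, the inequality $X+\frac{\lambda+m}{S_0}S_T<0$ still holds on $C$ and, after possibly enlarging $K'$, on $D$. Hence $\probp(X+\frac{\lambda+m}{S_0}S_T<0)>\alpha$ persists throughout $[0,m^*)$, yielding $\rho_{\cA_\alpha,S}(X+\frac{\lambda}{S_0}S_T)\geq m^*>0$ and producing the strict inequality needed to violate cash subadditivity.
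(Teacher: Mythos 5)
Your proof is correct, and the sufficiency direction is handled exactly as in the paper, via Proposition~\ref{ex of cash-sub}. For necessity, both you and the paper construct an explicit two-piece position that is $\VaR$-acceptable after a cash injection of $\lambda$ but, after investing the same amount in $S$, fails acceptability with a strictly positive margin (your $m^*$, the paper's $\gamma$), using nonatomicity to carve out the relevant sets; this buffer step is indeed the crux, and you handle it correctly. The constructions differ in a genuine way, though: the paper argues by contradiction and first invokes part~(i) of Proposition~\ref{charact cashsub coherent} to get $\probp(S_T<S_0)\le\alpha$ from the cash-subadditivity hypothesis, which lets it place part of the ``deeply negative'' mass-$\alpha$ set inside $\{S_T\ge S_0\}$ and use a position proportional to $-S_T$ there, with no truncation of $S_T$. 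You argue by pure contraposition, placing the mass-$\alpha$ set $D$ anywhere in $\{S_T\le K\}$ disjoint from $C$ and assigning it a large constant $-K'$; this is more self-contained (it never uses Proposition~\ref{charact cashsub coherent}) at the cost of the extra truncation parameters $K$ and $K'$. The only step you leave implicit is that $X+\frac{\lambda+m}{S_0}S_T\notin\cA_\alpha$ also for $m<0$, needed to conclude $\rho_{\cA_\alpha,S}(X+\frac{\lambda}{S_0}S_T)\ge m^*$ from the definition as an infimum; this is immediate from $S_T\ge0$ and the monotonicity of $\cA_\alpha$, so it is not a gap.
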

\begin{proof}
The ``if'' part follows from Proposition \ref{ex of cash-sub}. To prove the ``only
if'' part assume $\rho_{\cA_\alpha,S}$ is cash subadditive but $\probp(S_T<S_0)>0$.
Take $\e\in(0,1)$ such that $\probp(S_T\leq\e S_0)>0$. Since
$\probp(S_T<S_0)\leq\alpha$ by part \textit{(i)} in Proposition~\ref{charact cashsub
coherent}, we can find $0<\delta<\probp(S_T\leq\e S_0)$ satisfying
\begin{equation*}
\probp(S_T\geq
S_0) > \alpha+\delta-\probp(S_T<S_0) > 0\,.
\end{equation*}
Moreover, since $(\Omega,\cF,\probp)$ is nonatomic we have $\probp(A)=\alpha+\delta-\probp(S_T<S_0)$ for some measurable subset $A$ of $\{S_T\geq S_0\}$. Take $\gamma>0$ such that $(1+
\gamma)\e<1$ and set
\begin{equation*}
X:=\begin{cases}
-1 & \mbox{on $B$}\\
-\frac{2+\gamma}{S_0}S_T & \mbox{on $B^c$}
\end{cases}
\end{equation*}
where $B:=\{S_T\leq\e S_0\}\cup(\{S_T\ge S_0\}\setminus A)$. Then
$\probp(X+1_\Omega<0)\leq\probp(B^c)<\alpha$, so that
$\rho_{\cA_\alpha,S}(X+1_\Omega)\le 0$. Moreover,
\begin{equation*}
X+\frac{1}{S_0}S_T+\frac{\gamma}{S_0}S_T \leq -1+(1+\gamma)\e < 0 \ \ \ \mbox{on} \
\{S_T\le\e S_0\}
\end{equation*}
and
\begin{equation*}
X+\frac{1}{S_0}S_T+\frac{\gamma}{S_0}S_T = -\frac{1}{S_0}S_T < 0 \ \ \ \mbox{on} \
B^c\,.
\end{equation*}
Hence, it follows
\begin{equation*}
\probp\left(X+\frac{1}{S_0}S_T+\frac{\gamma}{S_0}S_T<0\right) \geq
\probp(S_T<S_0)+\probp(A) = \alpha+\delta > \alpha
\end{equation*}
showing that $\rho_{\cA_\alpha,S}(X+\frac{1}{S_0}S_T)\geq\gamma>0$. Since we had
already proved that $\rho_{\cA_\alpha,S}(X+1_\Omega)\leq0$, we conclude that
$\rho_{\cA_\alpha,S}$ cannot be cash subadditive by \eqref{equivalent cash
subadditivity}, contradicting our initial assumption.
\end{proof}

\medskip

\begin{remark}
The above Proposition~\ref{no cash-sub for var} shows that Proposition~\ref{charact
cashsub coherent} fails if we drop the assumption of convexity. In fact, the
assumption of conicity cannot be dropped either. Indeed, consider the acceptance set $\cA:=\{X\in L^p
\,; \ \E[X]\geq\alpha\}$ for a fixed $\alpha\in\R$, and let $S=(S_0,S_T)$ be a traded
asset. Then it is easy to see that $\rho_{\cA,S}$ is cash subadditive if and only if
$\E[S_T]\geq S_0$.
\end{remark}

%%%%%%%%%%%%%%%%%%%%%%%%%%%%%%%%%%%%%%%%%%%%%%%%%

\subsection{Quasiconvexity and illiquid eligible assets}
\label{illiquidity section}

We now proceed to extend in a natural way the definition of a risk measure to account
for situations where the eligible asset is not liquidly traded. We assume~$\cX$ is a
general ordered topological vector space.

\medskip

For a liquidly traded asset $S=(S_0,S_T)$, the price of $\lambda\in\R$ units of~$S$
is $\pi(\lambda):=\lambda S_0$, where~$S_0$ is the unit price. When the asset is not
liquidly traded the \textit{pricing functional} $\pi:\R\to\R$ will no longer be
linear. Here, we only assume~$\pi$ is strictly increasing and satisfies $\pi(0)=0$
and $\pi(1)=S_0$. Note that we do not require any form of continuity for $\pi$ thus
allowing for price jumps, which are a typical feature of an illiquid market.

\medskip

\begin{definition}
Let $\cA\subset\cX$ be an acceptance set and $S=(S_0,S_T)$ a traded asset with
pricing functional~$\pi$. The \textit{risk measure with respect to}~$\cA$,~$S$
{\em and}~$\pi$ is the function $\rho_{\cA,S,\pi}:\cX\to\overline{\R}$ defined by
\begin{equation*}
\rho_{\cA,S,\pi}(X):=\inf\{\pi(\lambda) \,; \ \lambda\in\R \,:\, X+\lambda
S_T\in\cA\}\,.
\end{equation*}
\end{definition}

\medskip

From now on we assume that $\cA$ is closed. In this case we can reduce risk measures
with respect to illiquid eligible assets to risk measures of the form~$\rho_{\cA,S}$.
Note that, even if the asset~$S$ is not liquidly traded, we will still
write~$\rho_{\cA,S}$ to denote the risk measure we would get if we assumed full
liquidity. As usual we set $\pi(\pm\infty):=\lim_{\lambda\to\pm\infty}\pi(\lambda)$.

\medskip

\begin{lemma}
\label{lemma illiquid}
Let $\cA\subset\cX$ be a closed acceptance set and $S=(S_0,S_T)$ a traded asset with
pricing functional~$\pi$. Then for all $X\in\cX$
\begin{equation}
\label{switch pi and rho}
\rho_{\cA,S,\pi}(X)=\pi\left(\frac{1}{S_0}\rho_{\cA,S}(X)\right)\,.
\end{equation}
\end{lemma}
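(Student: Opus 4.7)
The plan is to reduce the infimum over prices defining $\rho_{\cA,S,\pi}(X)$ to an infimum over traded quantities, where $\rho_{\cA,S}(X)$ lives. Introduce the set
\[
\Lambda(X) := \{\lambda\in\R \,; \ X+\lambda S_T\in\cA\}\,.
\]
By definition $\rho_{\cA,S,\pi}(X)=\inf\pi(\Lambda(X))$, and the substitution $m=\lambda S_0$ (permissible since $S_0>0$) in Definition~\ref{capital requirement def} gives $\rho_{\cA,S}(X)=S_0\inf\Lambda(X)$. Hence \eqref{switch pi and rho} is equivalent to the identity $\inf\pi(\Lambda(X))=\pi(\inf\Lambda(X))$, read via the conventions $\pi(\pm\infty):=\lim_{\lambda\to\pm\infty}\pi(\lambda)$.

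First I would observe that $\Lambda(X)$ is a half-line. If $\lambda\in\Lambda(X)$ and $\mu\geq\lambda$, then $(\mu-\lambda)S_T\in\cX_+$, so $X+\mu S_T\geq X+\lambda S_T$ and the monotonicity of $\cA$ yields $\mu\in\Lambda(X)$. Thus $\Lambda(X)$ is upward-closed; set $\lambda^*:=\inf\Lambda(X)\in[-\infty,+\infty]$.

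The key step exploits the closedness of $\cA$: when $\lambda^*\in\R$, I claim the infimum is attained, so that $\Lambda(X)=[\lambda^*,+\infty)$. Indeed, pick a sequence $\lambda_n\downarrow\lambda^*$ with $\lambda_n\in\Lambda(X)$. By continuity of addition and scalar multiplication in the topological vector space $\cX$, we have $X+\lambda_n S_T\to X+\lambda^* S_T$, and since each $X+\lambda_n S_T\in\cA$ with $\cA$ closed, $\lambda^*\in\Lambda(X)$. Because $\pi$ is strictly increasing, this gives $\inf\pi(\Lambda(X))=\pi(\lambda^*)$, and dividing by $S_0$ yields~\eqref{switch pi and rho}.

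Finally, the boundary cases follow from the same reasoning combined with the conventions on $\pi(\pm\infty)$: if $\lambda^*=-\infty$, taking $\lambda_n\in\Lambda(X)$ with $\lambda_n\to-\infty$ gives $\rho_{\cA,S,\pi}(X)=\lim_n\pi(\lambda_n)=\pi(-\infty)$ by monotonicity of $\pi$; if $\Lambda(X)=\emptyset$, both sides equal $+\infty$ under the convention $\pi(+\infty)=+\infty$. The only genuine obstacle is the closedness-plus-continuity argument pinning down $\lambda^*\in\Lambda(X)$; without this step one would be stuck with $\pi$ evaluated at a one-sided limit rather than at $\lambda^*$ itself, and since $\pi$ need not be continuous the two could differ.
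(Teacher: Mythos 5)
Your proof is correct and follows essentially the same route as the paper: the paper's own argument likewise observes that closedness of $\cA$ makes $\{\lambda\in\R \,; \ X+\lambda S_T\in\cA\}$ a closed upward half-line and then invokes the monotonicity of $\pi$ to commute $\pi$ with the infimum. You have merely spelled out the details (attainment of $\lambda^*$ via continuity of $\lambda\mapsto X+\lambda S_T$, and the boundary cases) that the paper leaves implicit.
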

\begin{proof}
Take $X\in\cX$. Since~$\cA$ is closed, the set of all $\lambda\in\R$ satisfying
$X+\lambda S_T\in\cA$ is a closed interval, possibly the full real line. As a result
of the monotonicity of~$\pi$, the equality~\eqref{switch pi and rho} follows.
\end{proof}

\medskip

\begin{remark}
\label{remark on lsc of illiquid measures}
\begin{enumerate}[(i)]
\item Note that, since $\cA$ is closed,~$\rho_{\cA,S}$ is lower semicontinuous by
    Remark~\ref{KainaRuschendorf}. However, this need not be the case for
    $\rho_{\cA,S,\pi}$ unless $\pi$ is left continuous.

\item The above proposition should be compared with Example~2.2
    in~\cite{Cerreia2011} where the payoff of the reference asset is
    $S_T=1_\Omega$. There, formula~\eqref{switch pi and rho} is obtained by
    requiring the upper semicontinuity of~$\pi$ rather than the closedeness
    of~$\cA$.
\end{enumerate}
\end{remark}

\medskip

The property of quasiconvexity is known to be the minimal property a risk measure
needs to have to capture diversification effects. Quasiconvexity of risk measures has
been extensively studied for instance in Cerreia-Vioglio, Maccheroni, Marinacci and
Montrucchio~\cite{Cerreia2011} and Drapeau and Kupper~\cite{DrapeauKupper2013}.
Recall that a function $\rho:\cX\to\overline{\R}$ is called \textit{quasiconvex}
whenever the level set $\{X\in\cX \,; \ \rho(X)\leq\alpha\}$ is convex for every
$\alpha\in\R$. As for the cash-additive case, it is easy to see that for an
$S$-additive risk measure quasiconvexity is equivalent to convexity. Hence, genuine
quasiconvexity can only be observed if the pricing rule for $S$ is not linear. The
next proposition provides a characterization of quasiconvex risk measures of the form
$\rho_{\cA,S,\pi}$.

\medskip

\begin{proposition}
\label{prop on qc and illiquid}
Let $\cA\subset\cX$ be a closed acceptance set and $S=(S_0,S_T)$ a traded asset with
pricing functional~$\pi$. Then~$\rho_{\cA,S,\pi}$ is quasiconvex if and only if~$\cA$
is convex.
\end{proposition}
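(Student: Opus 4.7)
The plan is to leverage the identity from Lemma~\ref{lemma illiquid}, namely $\rho_{\cA,S,\pi}(X)=\pi\!\left(\tfrac{1}{S_0}\rho_{\cA,S}(X)\right)$, which shows that $\rho_{\cA,S,\pi}$ is an increasing composition built on top of $\rho_{\cA,S}$. The strict monotonicity of~$\pi$ together with $\pi(0)=0$ will make sub-level sets at height~$0$ of the two functions coincide, and this is essentially what drives both directions.

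For the ``if'' direction, assume $\cA$ is convex. By Lemma~\ref{translation and monotonicity}\textit{(v)}, $\rho_{\cA,S}$ is convex on~$\cX$, and in particular quasiconvex. The map $t\mapsto \pi(t/S_0)$ from $\overline{\R}$ to $\overline{\R}$ is nondecreasing since $\pi$ is strictly increasing and $S_0>0$. Because composing a quasiconvex function with a nondecreasing function preserves quasiconvexity -- each sub-level set $\{X:\rho_{\cA,S,\pi}(X)\le\alpha\}$ coincides with a sub-level set of $\rho_{\cA,S}$ of the form $\{X:\rho_{\cA,S}(X)\le c_\alpha\}$ with $c_\alpha:=S_0\sup\{t\in\R:\pi(t)\le\alpha\}\in\overline{\R}$ -- we conclude that $\rho_{\cA,S,\pi}$ is quasiconvex.

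For the ``only if'' direction, assume $\rho_{\cA,S,\pi}$ is quasiconvex. The idea is to identify $\cA$ as the zero-sub-level set of $\rho_{\cA,S,\pi}$, which is convex by hypothesis. Since $\cA$ is closed, Lemma~\ref{translation and monotonicity}\textit{(ii)} combined with Remark~\ref{KainaRuschendorf}\textit{(i)} gives the equality $\cA=\{X\in\cX:\rho_{\cA,S}(X)\le 0\}$. Using Lemma~\ref{lemma illiquid} and the strict monotonicity of~$\pi$ together with $\pi(0)=0$, one checks that for every $X\in\cX$,
\begin{equation*}
\rho_{\cA,S,\pi}(X)\le 0
\iff
\pi\!\left(\tfrac{1}{S_0}\rho_{\cA,S}(X)\right)\le \pi(0)
\iff
\rho_{\cA,S}(X)\le 0,
\end{equation*}
where the infinite-valued cases are handled through the natural conventions $\pi(\pm\infty):=\lim_{\lambda\to\pm\infty}\pi(\lambda)$ together with the strict monotonicity of~$\pi$. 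Hence $\cA=\{X\in\cX:\rho_{\cA,S,\pi}(X)\le 0\}$, which is convex.

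I do not expect a real obstacle here; the only delicate point is to verify the equivalence of sub-level sets at~$0$ when $\rho_{\cA,S}(X)\in\{\pm\infty\}$, but this is guaranteed by strict monotonicity of~$\pi$ and is not substantively different from the finite case.
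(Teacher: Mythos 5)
Your proposal is correct and follows essentially the same route as the paper: the ``if'' direction composes the convex $\rho_{\cA,S}$ with the nondecreasing map $t\mapsto\pi(t/S_0)$ via Lemma~\ref{lemma illiquid}, and the ``only if'' direction identifies $\cA$ with the zero-sublevel set of $\rho_{\cA,S,\pi}$ using $\pi(0)=0$, strict monotonicity, and the closedness of~$\cA$ through Lemma~\ref{translation and monotonicity}. The only cosmetic caveat is that, when $\pi$ has jumps, the set $\{X:\rho_{\cA,S,\pi}(X)\le\alpha\}$ may be a \emph{strict} sublevel set of $\rho_{\cA,S}$ rather than one of the form $\{\rho_{\cA,S}\le c_\alpha\}$; this does not affect the conclusion, since such sets are convex as well.
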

\begin{proof}
Assume $\cA$ is convex. Being the composition of a convex and an increasing function by~\eqref{switch pi and rho}, the map~$\rho_{\cA,S,\pi}$ is quasiconvex. On the other hand, if $\rho_{\cA,S,\pi}$ is quasiconvex, then $\cB:= \{X\in\cX \,; \ \rho_{\cA,S,\pi}(X)\le0\}$ is convex. Clearly, $\cA\subset\cB$. Take $X\in\cB$ and note that $\pi(\frac{1}{S_0}\rho_{\cA,S}(X))=\rho_{\cA,S,\pi}(X)\le 0$. Since $\pi(0)=0$ and
$\pi$ is strictly increasing we immediately obtain $\rho_{\cA,S}(X)\le 0$. But this
implies that $X\in\cA$ as a consequence of Lemma~\ref{translation and
monotonicity}.
\end{proof}

\medskip

We now provide a dual representation for quasiconvex risk measures of the
form~$\rho_{\cA,S,\pi}$. In contrast to Proposition~5 in Drapeau and
Kupper~\cite{DrapeauKupper2013}, we do not require lower semicontinuity but exploit
the special structure of~$\rho_{\cA,S,\pi}$. In particular, our representation
formula~\eqref{quasiconvex representation} below allows for a transparent
interpretation in terms of the fundamental financial primitives: the acceptance set,
the eligible asset, and the pricing functional. Recall that the \textit{Fenchel
conjugate} of a map $\rho:\cX\to\overline{\R}$ is the function
$\rho^\ast:\cX'\to\overline{\R}$ defined by
\begin{equation*}
\rho^\ast(\psi):=\sup_{X\in\cX}\{\psi(X)-\rho(X)\}\,.
\end{equation*}
Moreover, for a traded asset $S=(S_0,S_T)$ we introduce the set
\begin{equation*}
\cX'_{+,S}:=\{\psi\in\cX'_+ \,; \ \psi(S_T)=S_0\}\,.
\end{equation*}

\medskip

\begin{proposition}
\label{corollary on dual illiquid}
Assume~$\cX$ is locally convex. Let $\cA\subset\cX$ be a closed, convex acceptance
set and $S=(S_0,S_T)$ a traded asset with pricing functional~$\pi$. If~$\rho_{\cA,S}$
is continuous at an interior point~$X$ of its domain of finiteness, then
\begin{equation}
\label{quasiconvex representation}
\rho_{\cA,S,\pi}(X)=\max_{\psi\in\cX'_{+,S}}\left\{\pi\left(\frac{\psi(-X)-\rho^\ast_{\cA,S}(-\psi)}{S_0}\right)\right\}\,.
\end{equation}
\end{proposition}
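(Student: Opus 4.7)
My plan is to reduce the quasiconvex representation to a classical convex duality for $\rho_{\cA,S}$ and then push everything through the pricing functional $\pi$.

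The starting point is Lemma \ref{lemma illiquid}, which gives $\rho_{\cA,S,\pi}(X)=\pi\bigl(\frac{1}{S_0}\rho_{\cA,S}(X)\bigr)$. Hence it suffices to establish the representation
\begin{equation*}
\rho_{\cA,S}(X)=\max_{\psi\in\cX'_{+,S}}\bigl\{\psi(-X)-\rho^\ast_{\cA,S}(-\psi)\bigr\}
\end{equation*}
at the given point $X$, and then apply $\pi(\cdot/S_0)$. Since $\pi$ is strictly increasing and the supremum is attained at some $\psi^\star$, we have $\pi\bigl(\max_\psi a_\psi/S_0\bigr)=\max_\psi\pi(a_\psi/S_0)$, which yields the announced formula.

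To prove the Fenchel representation for $\rho_{\cA,S}$ at $X$, I first note that $\rho_{\cA,S}$ is convex (Lemma \ref{translation and monotonicity}(v)), proper (it is finite at $X$ and does not attain $-\infty$ by Lemma \ref{halfspaces containing acceptance sets} and $\cA$ being proper), and continuous at the interior point $X$ of its domain. Standard convex analysis on locally convex spaces then gives $\rho_{\cA,S}(X)=\rho_{\cA,S}^{\ast\ast}(X)$, and the subdifferential $\partial\rho_{\cA,S}(X)$ is nonempty; any element of this subdifferential attains the supremum in the biconjugate formula, so the sup is a max. Writing the biconjugate in the form $\rho_{\cA,S}(X)=\max_{\varphi\in\cX'}\{\varphi(X)-\rho^\ast_{\cA,S}(\varphi)\}$ and substituting $\varphi=-\psi$, it remains to show that the effective domain of $\rho^\ast_{\cA,S}$ is contained in $-\cX'_{+,S}$.

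For the domain identification, I argue in two steps. First, since $\rho_{\cA,S}$ is decreasing, for any $Y\in\cX_+$ and $\lambda\ge 0$ we have $\rho_{\cA,S}(Z+\lambda Y)\le\rho_{\cA,S}(Z)$ for every $Z\in\cX$; if $\varphi(Y)>0$ for some $Y\in\cX_+$, letting $\lambda\to\infty$ in $\varphi(Z+\lambda Y)-\rho_{\cA,S}(Z+\lambda Y)$ forces $\rho^\ast_{\cA,S}(\varphi)=\infty$. Thus $\varphi\in-\cX'_+$ whenever $\rho^\ast_{\cA,S}(\varphi)<\infty$. Second, by the $S$-additivity in Lemma \ref{translation and monotonicity}(i), for every $\lambda\in\R$
\begin{equation*}
\varphi(Z+\lambda S_T)-\rho_{\cA,S}(Z+\lambda S_T)=\varphi(Z)-\rho_{\cA,S}(Z)+\lambda(\varphi(S_T)+S_0),
\end{equation*}
and finiteness of the supremum over $\lambda$ forces $\varphi(S_T)=-S_0$. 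Together these show that $\rho^\ast_{\cA,S}(\varphi)<\infty$ implies $-\varphi\in\cX'_{+,S}$, so the max over $\cX'$ in the biconjugate can be restricted to $\{-\psi:\psi\in\cX'_{+,S}\}$, yielding the desired representation.

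The main obstacle is the existence-of-maximizer step: one needs to invoke a continuity-based subdifferentiability result on a general locally convex space to conclude that the supremum in $\rho^{\ast\ast}_{\cA,S}(X)$ is actually attained, which is precisely where the hypothesis that $\rho_{\cA,S}$ is continuous at an interior point of its domain of finiteness enters. Once this is in place, the $S$-additivity and monotonicity arguments that pin down $\cX'_{+,S}$ as the relevant dual set are routine, and the final step merely exploits the strict monotonicity of $\pi$ to commute it past the max.
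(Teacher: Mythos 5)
Your proof is essentially correct, but it reaches the attainment of the supremum by a genuinely different route than the paper. The paper first quotes the dual representation $\rho_{\cA,S}(X)=\sup_{\psi\in\cX'_{+,S}}\{\psi(-X)-\rho^\ast_{\cA,S}(-\psi)\}$ from Frittelli and Rosazza Gianin, and then constructs a maximizer \emph{by hand} on the primal side: it observes that $\widetilde{X}:=X+\frac{\rho_{\cA,S}(X)}{S_0}S_T$ lies on the boundary of $\cA$, which has nonempty interior (by continuity at~$X$), invokes the fact that boundary points of convex sets with nonempty interior are support points, and normalizes the supporting functional so that $\varphi(S_T)=S_0$. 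You instead work on the epigraph side: continuity of the convex function $\rho_{\cA,S}$ at the interior point $X$ gives $\partial\rho_{\cA,S}(X)\neq\emptyset$, hence $\rho_{\cA,S}=\rho_{\cA,S}^{\ast\ast}$ at $X$ with the supremum attained at any subgradient, and you then cut the dual variable down to $-\cX'_{+,S}$ directly from monotonicity and $S$-additivity — which is in substance the computation hidden inside the cited dual representation. The two arguments are dual pictures of the same phenomenon (supporting hyperplane to $\cA$ versus supporting hyperplane to $\Epi(\rho_{\cA,S})$); yours is shorter and avoids the support-point lemma and the rescaling step, while the paper's makes the financial object (a normalized supporting functional of the acceptance set) explicit. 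The final commutation of $\pi$ with the attained max via strict monotonicity matches the paper.

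One small repair: your justification that $\rho_{\cA,S}$ never takes the value $-\infty$ (``by Lemma~\ref{halfspaces containing acceptance sets} and $\cA$ being proper'') does not actually work as stated — separating a point outside $\cA$ from a line $Z+\R S_T\subset\cA$ only yields a positive $\psi$ with $\psi(S_T)=0$, which is not a contradiction since $S_T$ is not assumed strictly positive. Properness does follow, but either from closedness of $\cA$ (which makes $\rho_{\cA,S}$ lower semicontinuous, so a convex lower semicontinuous function finite at $X$ is nowhere $-\infty$; this is the paper's route via Proposition~2.4 of Ekeland--T\'{e}mam) or directly from convexity plus lower semicontinuity at $X$: if $\rho_{\cA,S}(Z)=-\infty$ then $\rho_{\cA,S}=-\infty$ along the open segment from $Z$ to $X$, contradicting continuity at $X$.
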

\begin{proof}
Being finite at~$X$, the convex and lower semicontinuous map~$\rho_{\cA,S}$ cannot
attain the value~$-\infty$ by Proposition~2.4 in~\cite{EkelandTemam1999}. Then,
following the lines of the proof of Corollary~7 in~\cite{FrittelliGianin2002}, we
obtain the standard dual representation
\begin{equation}
\label{dual representation formula}
\rho_{\cA,S}(X)=\sup_{\psi\in\cX'_{+,S}}\{\psi(-X)-\rho_{\cA,S}^\ast(-\psi)\}\,.
\end{equation}
Now take $m>\rho_{\cA,S}(X)$. Since~$\rho_{\cA,S}$ is continuous at~$X$, the interior
of~$\cA$ is nonempty and $X_m:=X+\frac{m}{S_0}S_T\in\Interior(\cA)$ by
Lemma~\ref{translation and monotonicity}. Note that
$\widetilde{X}:=X+\frac{\rho_{\cA,S}(X)}{S_0}S_T\in\partial\cA$. As a result,
Lemma~7.7 in~\cite{AliprantisBorder2006} implies that~$\widetilde{X}$ is a support
point of~$\cA$. Hence,
\begin{equation*}
\varphi(\widetilde{X})=\inf_{Z\in\cA}\varphi(Z)
\end{equation*}
for some nonzero functional $\varphi\in\cX'$, which is positive by
Lemma~\ref{halfspaces containing acceptance sets}. Moreover, $\varphi(S_T)>0$ since otherwise $\varphi(X_m)=\inf_{Y\in\cA}\varphi(Y)$ which is not
possible because $X_m\in\Interior(\cA)$. Rescaling~$\varphi$ we may assume that
$\varphi(S_T)=S_0$. Finally, taking $Y\in\Dom(\rho_{\cA,S})$ we can conclude that
\begin{equation*}
\varphi(X)+\rho_{\cA,S}(X) = \varphi(\widetilde{X}) = \inf_{Z\in\cA}\varphi(Z) \leq
\varphi\left(Y+\frac{\rho_{\cA,S}(Y)}{S_0}S_T\right) = \varphi(Y)+\rho_{\cA,S}(Y)\,.
\end{equation*}
Hence, the supremum in~\eqref{dual representation formula} is attained at~$\varphi$
so that~\eqref{quasiconvex representation} now easily follows from Lemma~\ref{lemma
illiquid} and from the monotonicity of~$\pi$.
\end{proof}

\medskip

\begin{remark}
Note that our attainability result is not implied by Theorem~1
in~\cite{BiaginiFrittelli2009} since we do not assume~$\cX$ is a Frech\'{e}t lattice.
In particular, it remains valid on $L^p$ spaces when equipped with any weak topology,
for instance on $L^\infty$ endowed with the weak topology $\sigma(L^\infty,L^1)$.
\end{remark}

\medskip

We now assume $\cX=L^p$ over a fixed probability space $(\Omega,\cF,\probp)$.
In~\cite{Cerreia2011} it is suggested that cash subadditivity may arise when the
reference (risk-free) asset is illiquidly traded. Since the most natural way to
incorporate illiquidity is by considering nonlinear pricing functionals as above, it
is interesting to see whether our risk measures $\rho_{\cA,S,\pi}$ are cash
subadditive. The following result shows that, in the common situations, this can only
be true if the pricing functional~$\pi$ is continuous, thus ruling out examples of
illiquid markets where the pricing rule may have jumps.

\medskip

\begin{proposition}
\label{continuity of pricing functional}
Fix $1\le p\le\infty$ and let~$\cA\subset L^p$ be a closed, convex acceptance set
with nonempty interior. Consider a traded asset $S=(S_0,S_T)$ with pricing functional
$\pi$. If~$\rho_{\cA,S,\pi}$ is cash subadditive and $\rho_{\cA,S}(0)\in\R$,
then~$\pi$ is continuous.
\end{proposition}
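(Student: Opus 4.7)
The plan is to combine Lemma~\ref{lemma illiquid} with cash subadditivity to obtain a two-sided $1$-Lipschitz-type bound for $\pi$ along every curve $\mu\mapsto\pi(F(X+\mu 1_\Omega))$, where $F:=\rho_{\cA,S}/S_0$, and then---via a subdifferential argument using $S$-additivity---to show that for every $a_0\in\R$ there is a point $X$ with $F(X)=a_0$ at which $F$ strictly decreases in the direction $1_\Omega$. Together these force continuity of $\pi$ at every real number.

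By Lemma~\ref{lemma illiquid}, $\rho_{\cA,S,\pi}=\pi\circ F$, and cash subadditivity is equivalent to $\pi(F(X))-\pi(F(X+\mu 1_\Omega))\leq \mu$ for all $\mu>0$ and $X\in L^p$; the substitution $X\mapsto X-\mu 1_\Omega$ also yields $\pi(F(X-\mu 1_\Omega))-\pi(F(X))\leq \mu$. Since $\cA$ is closed convex with $\rho_{\cA,S}(0)\in\R$, $\rho_{\cA,S}$ is convex, lower semicontinuous, nowhere $-\infty$ (a proper convex lsc function finite at $0$ cannot attain $-\infty$), and continuous on $\Interior(\Dom(\rho_{\cA,S}))$ by Theorem~\ref{cont after finit}; this interior is nonempty because it contains $\Interior(\cA)$.

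The main step is the strict-descent property: at every $X_0\in\Interior(\Dom(\rho_{\cA,S}))$ and every $\lambda>0$ with $X_0-\lambda 1_\Omega$ still in that interior,
\[
\rho_{\cA,S}(X_0+\lambda 1_\Omega)<\rho_{\cA,S}(X_0)<\rho_{\cA,S}(X_0-\lambda 1_\Omega).
\]
To prove this, observe that $S$-additivity gives $\rho_{\cA,S}(\mu S_T)=\rho_{\cA,S}(0)-\mu S_0$, unbounded below in $\mu$, so $\rho_{\cA,S}$ has no global minimum. The subdifferential $\partial\rho_{\cA,S}(X_0)$ is nonempty and weak$^\ast$-compact; every $\psi$ in it is a nonpositive functional on $L^p$ by monotonicity of $\rho_{\cA,S}$, and since $1_\Omega$ is strictly positive in $L^p$ (the identity $\psi(1_\Omega)=0$ forces the nonnegative representing density/measure of $-\psi$ to vanish, hence $\psi=0$). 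But $0\in\partial\rho_{\cA,S}(X_0)$ would make $X_0$ a global minimizer, which is impossible. Hence $\rho_{\cA,S}'(X_0;1_\Omega)=\max_{\psi\in\partial\rho_{\cA,S}(X_0)}\psi(1_\Omega)<0$, and the strict inequalities then follow from the convexity of $\rho_{\cA,S}$.

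To conclude, fix $a_0\in\R$, pick $X_0\in\Interior(\cA)$, and set $X_{a_0}:=X_0+(F(X_0)-a_0)\,S_T$. By $S$-additivity and the fact that translation by $S_T$ preserves $\Interior(\Dom(\rho_{\cA,S}))$, one has $F(X_{a_0})=a_0$ and $X_{a_0}\in\Interior(\Dom(\rho_{\cA,S}))$. The strict-descent lemma at $X_{a_0}$ yields $F(X_{a_0}+\lambda 1_\Omega)<a_0<F(X_{a_0}-\lambda 1_\Omega)$ for small $\lambda>0$, with both sides tending to $a_0$ as $\lambda\to 0^+$ by continuity of $\rho_{\cA,S}$. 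The two-sided Lipschitz bound then gives $|\pi(F(X_{a_0}\pm\lambda 1_\Omega))-\pi(a_0)|\leq\lambda\to 0$, so $\pi(a_0^-)=\pi(a_0)=\pi(a_0^+)$, i.e., $\pi$ is continuous at $a_0$. The hard part is the strict-descent lemma, which crucially uses both the $L^p$-dual structure (so that nonzero positive functionals pair strictly positively with $1_\Omega$) and $S$-additivity together with $\rho_{\cA,S}(0)\in\R$ (to exclude a global minimum of $\rho_{\cA,S}$).
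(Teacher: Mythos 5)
Your proposal is correct. It shares the overall architecture of the paper's proof: both arguments reduce everything to the strict inequality $\rho_{\cA,S}(X+\lambda 1_\Omega)<\rho_{\cA,S}(X)$ at suitable points, use $S$-additivity to move the level of $\rho_{\cA,S}$ to any prescribed real number, and then squeeze $\pi$ with the cash-subadditivity inequality $\pi(F(X))-\pi(F(X+\lambda 1_\Omega))\le\lambda$. Where you genuinely diverge is in how the strict descent is established and in how the conclusion is organized. The paper works directly with the acceptance set: it shows $\frac{\rho_{\cA,S}(0)}{S_0}S_T+\lambda 1_\Omega\in\Interior(\cA)$ by a supporting-hyperplane argument at a boundary point of the closed convex set $\cA$ (Lemma~7.7 in~\cite{AliprantisBorder2006} plus Lemma~\ref{halfspaces containing acceptance sets} and strict positivity of $1_\Omega$), and then argues by contradiction at a hypothetical jump of $\pi$, treating left and right continuity separately. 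You instead work with the function $\rho_{\cA,S}$ itself: at a point of continuity the subdifferential is nonempty and weak$^\ast$-compact, consists of nonpositive functionals, cannot contain $0$ because $\rho_{\cA,S}(\mu S_T)=\rho_{\cA,S}(0)-\mu S_0$ is unbounded below, and hence the max formula gives a strictly negative directional derivative along $1_\Omega$; you then prove continuity of $\pi$ directly at every $a_0\in\R$. Your route buys a cleaner, non-contradictory conclusion and a strict-descent statement valid at every interior point of the domain, at the cost of invoking the standard Banach-space subdifferential machinery (nonemptiness and weak$^\ast$-compactness at points of continuity, the max formula), which the paper deliberately avoids; the two separating functionals produced are of course essentially the same object. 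All the auxiliary facts you use (continuity of $\rho_{\cA,S}$ on $\Interior(\Dom(\rho_{\cA,S}))$ via Theorem~\ref{cont after finit}, strict positivity of $1_\Omega$ in $L^p$ for $1\le p\le\infty$, invariance of the domain under translations along $S_T$) are available in the paper's framework, so the argument is complete.
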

\begin{proof}
We first prove that for every $\lambda>0$
\begin{equation}
\label{continuity hilfe}
-\infty< \rho_{\cA,S}(\lambda 1_\Omega)<\rho_{\cA,S}(0)\,.
\end{equation}
Fix $\lambda>0$ and note that, being convex and lower semicontinuous, the risk
measure~$\rho_{\cA,S}$ cannot assume the value~$-\infty$ by Proposition~2.4
in~\cite{EkelandTemam1999}. Moreover, $X:=\frac{\rho_{\cA,S}(0)}{S_0} S_T$ belongs
to~$\partial\cA$ so that $X+\lambda 1_\Omega\in\Interior(\cA)$. Indeed, assume to the
contrary that $X+\lambda 1_\Omega\in\partial\cA$. Since~$X$ is a support point
of~$\cA$ by Lemma~7.7 in~\cite{AliprantisBorder2006}, we find a nonzero $\psi\in\cX'$
with $\psi(X+\lambda 1_\Omega)\leq\psi(Y)$ for all $Y\in\cA$. In particular, choosing
$Y:=X$ we get $\psi(1_\Omega)\le 0$ which is impossible because $1_\Omega$ is a
strictly positive element in $L^p$ and $\psi$ must be positive by
Lemma~\ref{halfspaces containing acceptance sets}. In conclusion, we obtain
\begin{equation*}
\rho_{\cA,S}(\lambda 1_\Omega)-\rho_{\cA,S}(0) = \rho_{\cA,S}(X+\lambda 1_\Omega) <
0\,,
\end{equation*}
proving~\eqref{continuity hilfe}.

\smallskip

Now, assume that~$\pi$ is not left continuous at $x_0\in\R$ and let
$\gamma:=\pi(x_0)-\lim_{x\uparrow x_0}\pi(x)>0$ be the size of the jump. Take
$\xi\in\R$ such that, setting $X:=\xi S_T$, we have
$\rho_{\cA,S}(X)=\rho_{\cA,S}(0)-\xi S_0=x_0 S_0$. Hence, $\rho_{\cA,S}(X+\lambda
1_\Omega)=\rho_{\cA,S}(\lambda 1_\Omega)-\xi S_0<x_0 S_0$ for any $\lambda>0$
by~\eqref{continuity hilfe}. But then for $\lambda\in(0,\gamma)$
\begin{equation*}
\rho_{\cA,S,\pi}(X)-\rho_{\cA,S,\pi}(X+\lambda 1_\Omega) =
\pi(x_0)-\pi\left(\frac{\rho_{\cA,S}(X+\lambda 1_\Omega)}{S_0}\right) \geq \gamma > \lambda
\end{equation*}
showing that $\rho_{\cA,S,\pi}$ is not cash subadditive.

\smallskip

Similarly, assume~$\pi$ is not right continuous at $x_0\in\R$ and set
$\gamma:=\lim_{x\downarrow x_0}\pi(x)-\pi(x_0)>0$. For $\lambda\in(0,\gamma)$ we find
$\xi\in\R$ such that, setting $X:=\xi S_T$, we have $\rho_{\cA,S}(X+\lambda
1_\Omega)=\rho_{\cA,S}(\lambda 1_\Omega)-\xi S_0=x_0 S_0$. Since
$\rho_{\cA,S}(X)=\rho_{\cA,S}(0)-\xi S_0>x_0 S_0$ by~\eqref{continuity hilfe}, we
conclude that
\begin{equation*}
\rho_{\cA,S,\pi}(X)-\rho_{\cA,S,\pi}(X+\lambda 1_\Omega) =
\pi\left(\frac{\rho_{\cA,S}(X)}{S_0}\right)-\pi(x_0) \geq \gamma > \lambda\,.
\end{equation*}
Again, this implies that $\rho_{\cA,S,\pi}$ is not cash subadditive. In conclusion,
if~$\rho_{\cA,S,\pi}$ has to be cash subadditive, the pricing functional must be
continuous.
\end{proof}

%%%%%%%%%%%%%%%%%%%%%%%%%%%%%%%%%%%%%%%%%%%%%%%%%%%%%%%%%%%%

\end{document}